\theoremstyle{plain}
\theoremstyle{definition}
\newtheorem{theorem}{Theorem}[section]
\newtheorem{lemma}[theorem]{Lemma}
\newtheorem{definition-theorem}[theorem]{Definition-Theorem}
\newtheorem{definition-proposition}[theorem]{Definition-Proposition}
\newtheorem{proposition}[theorem]{Proposition}
\newtheorem{corollary}[theorem]{Corollary}
\newtheorem{example}{Example}[section]
\newtheorem{examples}{Example}[subsection]
\newtheorem{remark}{Remark}[section]
\newtheorem{definition}{Definition}[section]
\numberwithin{equation}{section} 
\DeclareMathOperator{\End}{End}
\DeclareMathOperator{\sgn}{sgn}
\newcommand{\medwedge}[1]{\scalebox{.95}{\bigwedge}}
\def\ra{{\rightarrow}}
\def\hra{{\, \hookrightarrow\, }}
\def\mt{{\mapsto}}
\def\det{\mathrm {det}}
\def\Det{\mathrm {Det}}
\def\Spec{\mathrm {Spec}}
\def\deg{\mathrm {deg}}
\def\End{\mathrm {End}}
\def\span{\mathrm {span}}
\def\Gr{\mathrm {Gr}}
\def\res{\mathop{\mathrm {res}}\limits}
\def\swedge{\mathsmaller{\mathsmaller{\wedge\,}}}
\def\GL{\mathrm{GL}}
\def\Sp{\mathrm{Sp}}
\def\res{\mathop{\mathrm{res}}\limits}
\def\&{&{\hskip -20pt}}
\DeclarePairedDelimiter{\no}{:}{:}
\def\be{\begin{equation}}
\def\ee{\end{equation}}
\def\bea{\begin{eqnarray}}
\def\eea{\end{eqnarray}}
\def\bt{\begin{theorem}}
\def\et{\end{theorem}}
\def\bex{\begin{example}\small \rm}
\def\eex{\end{example}}
\def\bexs{\begin{examples}\small \rm}
\def\eexs{\end{examples}}
\def\ra{\rightarrow}
\def \ss {\subset}
\def\br{\begin{remark}\small \rm \em}
\def\er{\end{remark}}
\def\alphab{{\boldsymbol\alpha}}
\def\betab{{\boldsymbol\beta}}
\def\FF {{\mathcal F}}
\def\HH{{\mathcal H}}
\def\II {{\mathcal I}}
\def\JJ{{\mathcal J}}
\def\LL{{\mathcal L}}
\def\OO{{\mathcal O}}
\def\RR {{\mathcal R}}
\def\SS{{\mathcal S}}
\def\TT {{\mathcal T}}
\def\Ab{\mathbf{A}}
\def\Bb{\mathbf{B}}
\def\Cb{\mathbf{C}}
\def\Hb{\mathbf{H}}
\def\Nb{\mathbf{N}}
\def\Pb{\mathbf{P}}
\def\Zb{\mathbf{Z}}
\def\ab{\mathbf{a}}
\def\bb{\mathbf{b}}
\def\hb{\mathbf{h}}
\def\mb{\mathbf{m}}
\def\nb{\mathbf{n}}
\def\sb{\mathbf{s}}
\def\tb{\mathbf{t}}
\def\vb{\mathbf{v}}
\def\xb{\mathbf{x}}
\def\yb{\mathbf{y}}
\def\0b{\boldsymbol{0}}
\def\Cbb{\mathbb{C}}
\def\Fbb{\mathbb{F}}
\def\Hbb{\mathbb{H}}
\def\Hbb{\mathbb{H}}
\def\Nbb{\mathbb{N}}
\def\Zbb{\mathbb{Z}}
\def\grG{\mathfrak{G}} \def\grg{\mathfrak{g}}
\def\grI{\mathfrak{I}}
 \def\grl{\mathfrak{l}}
\def\grP{\mathfrak{P}} \def\grp{\mathfrak{p}}
 \def\grs{\mathfrak{s}}
\def\Hc{\check{H}}
\def\Hbc{\check{\Hb}}
\def\hc{\check{h}}
\def\hbc{\check{\hb}}
\begin{document}
\baselineskip 16pt

\medskip
\begin{center}
\begin{Large}\fontfamily{cmss}
\fontsize{17pt}{27pt}
\selectfont
	\textbf{Tau functions, infinite Grassmannians and lattice recurrences}
	\end{Large}
	
\bigskip \bigskip
\begin{large}
S. Arthamonov$^{1, 2}$\footnote[1]{e-mail:artamono@crm.umontreal.ca},
J. Harnad$^{1, 2}$\footnote[2]{e-mail:harnad@crm.umontreal.ca}
and J. Hurtubise$^{1, 3}$\footnote[3]{e-mail:jacques.hurtubise@mcgill.ca}
 \end{large}
 \\
\bigskip
\begin{small}
$^{1}${\em Centre de recherches math\'ematiques, Universit\'e de Montr\'eal, \\C.~P.~6128, succ. centre ville, Montr\'eal, QC H3C 3J7  Canada}\\
$^{2}${\em Department of Mathematics and Statistics, Concordia University\\ 1455 de Maisonneuve Blvd.~W.~Montreal, QC H3G 1M8  Canada}\\
$^{3}${\em Department of Mathematics and Statistics, McGill University, \\ 805 Sherbrooke St.~W.~Montreal, QC  H3A 0B9 Canada }
\end{small}
 \end{center}
\medskip
\begin{abstract}
\smaller{
The addition formulae for KP $\tau$-functions, when evaluated at  lattice points
in the KP flow group orbits in the infinite dimensional Sato-Segal-Wilson Grassmannian, 
give  infinite parametric families of solutions to discretizations of the KP hierarchy.
The CKP hierarchy may similarly be viewed as commuting flows on the Lagrangian sub-Grassmannian of 
maximal isotropic subspaces with respect to a suitably defined symplectic form. Evaluating  the $\tau$-functions at
a sublattice of points within the KP orbit, the resulting discretization gives solutions both to
 the hyperdeterminantal relations (or Kashaev recurrence) and the hexahedron (or Kenyon-Pemantle) recurrence.}
 \end{abstract}
\break

 \section{Introduction}
 \label{sec:introduction}

 The purpose of this work is to formalize and extend the well-known result  \cite{Mi, SaSa} that evaluation of any KP $\tau$-function 
 $\tau^{KP}_w(\tb)$,  corresponding to an element $w \in \Gr_{\HH_+}(\HH)$ of the Sato-Segal-Wilson \cite{Sa, SaSa, SW}
 infinite dimensional Grassmannian, at suitably defined parametric families of lattice points within the KP flow group
 orbit,  leads to lattice discretizations of the KP hierarchy.
 The CKP hierarchy, viewed as  a restriction to the subgroup of odd parameter KP flows
acting upon the Lagrangian sub-Grassmannian \cite{DJKM1, AHH}  $\Gr^\LL_{\HH_+}(\HH, \omega)\ss \Gr_{\HH_+}(\HH)$
 consisting of maximal isotropic subspaces of the Hilbert space $\HH$,
 with respect to a  symplectic form $\omega$, may be similarly discretized by evaluation of the corresponding 
 KP $\tau$-function on a suitably defined parametric family of infinite sublattices \cite{AHH} within the orbit of any given Lagrangian 
Grassmannian element $w^0\in \Gr^\LL_{\HH_+}(\HH, \omega)$. 

The equations of the discretized KP hierarchy are equivalent to the infinite set of 
bilinear relations satisfied by the Pl\"ucker coordinates of any Grassmannian element $w$.
The fact that these are also satisfied by evaluations of KP $\tau$-functions at suitably defined lattice points 
on the KP flow group orbit follows from the addition formulae (\cite{SaSa, Shig}, Sec. 3.10  of  \cite{HB}).
Viewed as solutions of lattice recurrence relations, they are generated by
 the octahedron relations, or discrete Hirota equations \cite{Hir, Mi}, which correspond to
  the {\em short} (three-term) Pl\"ucker relations.
  
 The discretized CKP hierarchy, on the other hand, consists of an infinite system of
 quartic equations \cite{Sch}, equivalent to the {\em hyperdeterminantal relations} \cite{HoSt, Oed}), which are known 
 to be satisfied by the principal minors of any symmetric matrix. 
 These were also introduced by Kashaev  \cite{Ka} in his study of  the star-triangle relations
satisfied by Boltzmann weights in the Ising model. 
A further set of recurrence relations, the hexahedron relations, was  introduced by Kenyon and Pemantle \cite{KePe1, KePe2}
in their study of double dimer covers and rhombus tilings. These are known to be satisfied by certain 
of the Pl\"ucker coordinates of any Lagrangian Grassmannian element \cite{AHH} or, equivalently, on the big cell,
by the principal and next-to-principal minors of any symmetric matrix. 

  The addition formulae for KP $\tau$-functions (\cite{SaSa, Shig},  \cite{HB}, Sec.~3.10) may be expressed in a
form that is equivalent to the full (infinite) set of  Pl\"ucker relations, which generically follow from the short  ones (\cite{HB}, App. C.9).
Since these are also satisfied by suitably normalized evaluations of  the corresponding KP $\tau$-function at lattice values, 
this provides infinite families of solutions to the octahedron recurrence relations.  A similar result holds 
for the quartic hyperdeterminantal relations \cite{AHH}, which are satisfied both by a suitably chosen subset 
of the Pl\"ucker coordinates of an element of the infinite Lagrangian Grassmannian,
and by correctly normalized lattice evaluations of the corresponding $\tau$-functions of CKP type.

 Subsection \ref{sec:KP_tau_inf_grassmannians} gives a brief review of the Sato-Segal-Wilson formulation
 of the KP hierarchy as abelian flows on an infinite dimensional Grassmannian.
 The fermionic vacuum expectation value  (VEV) representation of the $\tau$-function
 and its Schur function expansion is recalled in Subsection \ref{sec:fermionic_rep_tau}. The KP addition theorem
 \cite{SaSa, Shig} is recalled in Subsection \ref{sec:add_thm_lattice_eval}, together with Sato's interpretation \cite{SaSa}
 of the latter in terms of Pl\"ucker relations. The CKP reduction of the KP hierarchy, which involves  restriction
 to the subgroup of odd flows on the Lagrangian Grassmannian \cite{DJKM1, AHH, KZ} is summarized 
 in Subsection \ref{sec:CKP_lagrange_grassmann}.
 
  The main new results,  stated in Subsections \ref{sec:lattice_lagrange_hyperdet} and \ref{sec:hexahedron},
and proved in Sections  \ref{sec:lattice_embeddings_lagrangian} and \ref{sec:applications_lattice},  consist of
showing that suitable lattice evaluations of KP $\tau$-functions of the CKP type determine solutions of the Kashaev recursions
 (or ``core'' hyperdeterminantal relations) and the Kenyon-Pemantle hexahedron relations. 

 \begin{remark}
 An analogous reduction to the sub-Grassmannian $\Gr^0_{\HH_+ }(\HH, Q)$ of maximal isotropic subspaces with respect to
 a quadratic form $Q$ leads similarly to the BKP (or DKP) hierarchy, but the corresponding $\tau$-functions are  defined
as Fredholm Pfaffians rather than determinants (see \cite{BHH}, and Sec. 7.1 and App. E of \cite{HB}). These 
satisfy both the BKP analogue of the Hirota bilinear residue equations \cite{DJKM1, DJKM3, JM1} and an analogous set of addition 
formulae (\cite{Shig} and \cite{HB}, Sec. 7.2). They may again be related to bilinear equations (the {\em Cartan equations}
 \cite{Ca}, \cite{HB}, App. E.2) satisfied by the expansion coefficients in the natural basis for the corresponding 
 projectivized exterior space  into which $\Gr^0_{\HH_+ }(\HH, Q)$ is mapped under  the {\em Cartan embedding},
 which is the orthogonal group analogue of the Pl\"ucker embedding (see  \cite{Ca} and \cite{HB}, Sec. 7.1 and App.~E).
  The addition formulae  may similarly be interpreted as  discretizations of the BKP (or DKP) hierarchy, 
  and solutions  obtained by replacing the Cartan  coefficients by suitably  normalized evaluations of the BKP $\tau$-functions at lattice points within the BKP flow group orbit.  The minimal generating set for these relations are the {\em short Cartan relations},  which may also be 
 interpreted as a system of recurrence relations, the {\em Miwa relations}  \cite{Mi}, or {\em cube recursions} 
 (see, e.g.,  \cite{HB}, Sec.~7.7.2). These will not be further considered here,  but will form the subject of a subsequent work.
 \end{remark}

 \subsection{KP $\tau$-functions and infinite Grassmannians }
\label{sec:KP_tau_inf_grassmannians}

 \subsubsection{Hirota residue equation}
\label{sec:Hirota_res_equation}

    Solutions of the KP hierarchy of integrable PDE's  are given in terms of
an associated $\tau$-function $\tau^{KP}_w(\tb)$, which depends on an infinite set of KP flow variables
\be
\tb:=(t_1, t_2, \cdots)
\ee
and satisfies the formal Hirota bilinear residue equation \cite{Sa, SW, JM1,  JM2, HB}
\be
\res_\infty\left(e^{\xi(\delta \tb, z)} \tau_w^{KP}(\tb -[z^{-1}]) \tau_w^{KP}(\tb + \delta\tb +[z^{-1}])\right) =0,
\label{eq:hirota_res_eq}
\ee
where
\be
\delta \tb := (\delta t_1, \delta t_2, \dots), \quad [z]:= (z, \frac{z^2}{2}, \dots , \frac{z^i}{i}, \dots), \quad
\xi(\tb, z) := \sum_{i=1}^\infty {t_i z^i}.
\ee
For any formal power series
\be
f(z) = \sum_{i\in \Zbb} f_i z^i,
\ee
the formal residue appearing  in (\ref{eq:hirota_res_eq}) is defined as
\be
\res_\infty(f):= f_{-1}
\ee
 and eq.~(\ref{eq:hirota_res_eq}) is understood as satisfied identically in the parameters
$\delta \tb = (\delta t_1, \delta t_2, \dots)$.

 \subsubsection{Infinite  Grassmannians}
\label{sec:infinite_grassmannian}

There are two different approaches to the definition of the infinite Grassmannian appearing in relation  to
$\tau$-functions: one due to Sato \cite{Sa,SaSa},  the other to Segal and Wilson \cite{SW}.
The formulation of Segal and Wilson \cite{SW}, is functional-analytic and geometric in flavour, 
with the Grassmannian  $\Gr_{\HH_+}(\HH)$ defined as a Banach manifold, and the KP flow variables  
viewed as additive coordinates on an infinite dimensional abelian transformation group $\Gamma_+$
acting on it.
  In the approach of Sato  \cite{Sa, SaSa, JM1, DJKM1} the infinite Grassmannian is identified 
  as a direct limit of finite dimensional Grassmannians. The group is not viewed as acting continuously,  but determines the dynamics
 via a formal expansion of the $\tau$-function as a linear combination of Schur functions, with coefficients
 equal to the Pl\"ucker coordinates of the initial point $w$.   (The formal series interpretation is of particular 
 relevance when $\tau$-functions are used as generating functions for enumerative combinatorial invariants 
 (see, e.g. \cite{HB}, Chapts. 13 and 14)).) 
 
  In the present work, we mainly use the Sato approach, viewing $\tau$-functions as formal series in the flow parameters, 
  which are interpreted as indeterminates, and emphasize the coordinate rings, rather than the geometry of flows 
  on the Grassmannian. But for the purposes of this introduction, we begin with the geometrical viewpoint, which 
  corresponds more readily to the intuitive notion of  group actions on Grassmannians.

KP $\tau$-functions $\tau^{KP}_w(\tb)$ are parametrized \cite{Sa, SaSa,  SW} by elements $w$ of an infinite dimensional
Grassmann manifold $\Gr_{\HH_+}(\HH)$,  which are subspaces $w\ss\HH$ of an infinite dimensional vector space $\HH$,
with denumerable basis $\{e_i\}_{i\in \Zbb}$ and polarization
\be
\HH= \HH_+ + \HH_-, \quad \HH_+:=\span\{e_{-i}\}_{i\in \Nbb^+}, \quad \HH_-:=\span\{e_i\}_{i\in \Nbb},
\label{H_pm}
\ee
that are {\em commensurable} with $\HH_+\ss\HH$ (in the sense defined in \cite{SW}). $\Gr_{\HH_+}(\HH)$ is a homogeneous
space of the infinite dimensional group $\GL(\HH)$ of  invertible linear transformations of $\HH$ preserving
{\em admissible} frames \cite{Sa, SaSa, SW}.  (Here,  $\Zbb$, $\Nbb$ and $\Nbb^+$ 
denote the set of all integers,  nonnegative integers and positive integers, respectively.)
 The KP dynamics may be understood geometrically as the action $\Gamma_+ \times  \Gr_{\HH_+}(\HH) \ra \Gr_{\HH_+}(\HH)$
of an infinite abelian subgroup $\Gamma_+\ss \GL(\HH)$ on $\Gr_{\HH_+}(\HH)$.

\begin{remark}
For precise definitions  of: {\em subspaces} $w\ss \HH$ {\em commensurable}  with $\HH_+$, {\em admissible frames}
for $w\in \Gr_{\HH_+}(\HH)$, the {\em infinite group} $\GL(\HH)$ of invertible linear transformation of $\HH$,
 its action on the bundle of frames over $\Gr_{\HH_+}(\HH)$ and how the determinant (\ref{tau_fn_det_def}) is defined in terms of
lifts of the $\Gamma_+$-action to the {\em dual determinantal line bundle} $\Det^*\ra  \Gr_{\HH_+}(\HH)$,
see \cite{SW} and \cite{HB}, Chapt.~3.

In order to relate the above to the definition of  infinite dimensional Grassmannians in terms of inverse limits
of  finite coordinate sheaves used in Section \ref{sec:Infinite_Grassmannians_KP_tau},
 the spaces $\HH, \HH_+, \HH_-$  must be viewed as suitably defined completions.\
 \ \end{remark}

 \subsubsection{Fermionic Fock space and Pl\"ucker map}
\label{sec:fermionic_fock_space}

We may  embed $ \Gr_{\HH_+}(\HH)$,  into the projectivization of an associated fermionic Fock space,
which is viewed as a graded space consisting of the semi-infinite wedge product  of $\HH$ with itself,
\be
\FF := \scalebox{.95}{$\bigwedge$}^{\infty/2}\,\HH = \bigoplus_{n\in \Zbb} \FF_n,
\ee
via the infinite dimensional analogue of the  Pl\"ucker map \cite{GH},
\bea
\grP: \Gr_{\HH_+}(\HH) &\&\ra \Pb(\FF) \cr
\grP: \span\{w_1, w_2, \dots \} &\&\mapsto [ w_1 \wedge  w_2, \wedge \cdots] \in  \Pb(\FF).
\label{eq:inf_plucker_map}
\eea
Here $\FF_n \ss \FF$ is the graded sector with fermionic charge $n$, which is spanned by
orthonormal basis elements
\be
|\lambda;n\rangle := e_{l_1}\wedge e_{l_2} \wedge \cdots
\label{fermionic_basis_els}
\ee
labelled by pairs $(\lambda, n)$  of an integer partition $\lambda = (\lambda_1, \lambda_2, \dots)$ and an integer
$n\in \Zbb$, where the indices $(l_1, l_2, \dots)$ are the {\em particle locations} (see \cite{JM1},  \cite{SW} and \cite{HB}, Chapt. 5):
\be
l_i := \lambda_i -i +n,
\ee
which form a strictly decreasing sequence of integers that eventually saturate at all consecutive decreasing
integers. The vacuum vector in each sector $\FF_n$ corresponds to the null partition and  is denoted
\be
|n\rangle := |\emptyset; n\rangle.
\ee

 For $w\in \Gr_{\HH_+}(\HH)$ of {\em virtual dimension}  $n$ (i.e. for which the projection map
 \be
 \Pi_+: w \ra \HH_+
 \ee along $\HH_-$ is Fredholm, with index $n$ \cite{SW}),
the image  $\grP(w) \in \Pb(\FF_n),$ of the Pl\"ucker map may be expressed as a linear combination of the basis elements
\be
\grP(w) =\big[ \sum_{\lambda} \pi_\lambda(w) |\lambda;n\rangle \big] \in \Pb(\FF_n),
\label{eq:plucker_exp_sector_n}
\ee
where the coefficients $\pi_\lambda(w)$ are (within projectivization) the Pl\"ucker coordinates of $w$.
Bose-Fermi equivalence (\cite{HB}, Chapt.~5.6) gives the corresponding expansion of $\tau^{KP}_w(\tb)$
 in a basis of Schur functions \cite{Mac, SaSa}
\be
\tau_w^{KP}(\tb) = \sum_\lambda \pi_\lambda(w) s_\lambda(\tb),
\label{eq:schur_fn_exp_tau}
\ee
where the $t_i$'s are interpreted as normalized power sums in a set of auxiliary variables $\{x_a\}_{a\in \Nb}$
\be
t_i= \frac{p_i}{i} := \frac{1}{i}\sum_{a\in \Nb} x_a^i.
\label{eq:normal_power_sums}
\ee
(The Hirota residue equation (\ref{eq:hirota_res_eq}) is then equivalent to the fact that the $\pi_\lambda(w)$'s
satisfy the full set of Pl\"ucker relations  (\ref{eq:PluckerRelationIndices}).
(See eqs.~(\ref{eq:piLDefDet}), (\ref{eq:FiniteLThroughLambda})
 and  (\ref{eq:piTildepiLambda}) for  detailed definitions of the notations used.)

Equivalently, let
\be
\Lambda : e_i  \mapsto e_{i-1}, \quad  i\in \Zb,
\ee
be the  shift map $\Lambda : \HH \ra \HH$ and denote
the infinite abelian group of KP {\em shift flows} $\Gamma_+ \ss \GL(\HH)$ by
\bea
\Gamma_+ &\&:= \{\gamma_+(\tb):=e^{\sum_{i=1}^\infty t_i \Lambda^i }\}\cr
\tb&\&= (t_1, t_2, \dots ), \quad t_i\in \Cb, \ i \in \Nbb^+.
\eea
Applying the elements $\gamma_+(\tb)\in \Gamma_+$ multiplicatively to the basis elements $\{w_i\}_{i\in \Nbb^+}$
spanning $w \in \Gr_{\HH_+}(\HH)$ induces an action of $\Gamma_+$ 
as commuting flows on the Grassmannian,
\bea
\Gamma_+:\Gr_{\HH_+}(\HH) &\& \ra \Gr_{\HH_+}(\HH) \cr
\gamma_+(\tb) :w &\&\ \mt  =w(\tb):=\span\{w_i(\tb) := \gamma_+(\tb)w_i\}
\eea
Denoting the projection  map  from $w(\tb)$ to $\HH_+$ along $\HH_-$  as $\pi_+: w(\tb) \ra \HH_+$
and choosing an {\em admissible basis} \cite{SW} for $w(\tb) \in  \Gr_{\HH_+}(\HH)$,  allowing it to
be identified isomorphically with $\HH_+$, the KP $\tau$-function may be defined as the
determinant of the projection map
\be
\tau_w^{KP}(\tb) := \det(\Pi_+: w(\tb) \ra \HH_+).
\label{tau_fn_det_def}
\ee


\subsection{Fermionic representation of KP $\tau$-functions}
\label{sec:fermionic_rep_tau}

The fermionic representation of the Clifford algebra on $\HH\oplus \HH^*$ with respect to the natural scalar product
   \be
   Q(e_i, e^*_j) = \delta_{ij}, \quad Q(e_i, e_j) = Q(e^*_i, e^*_j)=0 ,
   \ee
   where $\{e^*_j\}_{j\in \Zbb}$ is the basis for $\HH^*$ dual to the basis $\{e_j\}_{j\in \Hbb}$ for $\HH$
as endomorphisms of $\FF$,   is generated by the creation and annihilation operators
   \be
   \psi_i := e_i\wedge \in \End(\FF), \quad \psi^\dag_i := i_{e^*_i} \in \End(\FF), \quad i \in \Zbb
   \ee
   defined, respectively, as outer and inner products with respect to the basis elements.
   These  satisfy the usual fermionic anticommutation relations
   \be
   [\psi_i, \psi_j]_+ = 0, \quad  [\psi^\dag_i, \psi^\dag_j]_+ =0, \quad  [\psi_i, \psi^\dag_j]_+ = \delta_{ij}, \quad i,j \in \Zbb
   \ee
   and vacuum annihilation conditions
   \be
   \psi_{-i} | 0\rangle =0 , \quad \psi^\dag_{i-1} |0\rangle =0, \quad \forall \  i \in \Nbb^+.
   \ee
In terms of the Frobenius  index notation for partitions $\lambda$ (i.e., the ``arm'' and ``leg'' lengths
in the corresponding Young diagrams \cite{Mac})
\be
\lambda = (\ab | \bb) = (a_1, \dots, a_r | b_1, \dots, b_r),
\ee
 the basis vectors (\ref{fermionic_basis_els}) are (see \cite{JM1, JM2}, or \cite{HB}, Chapt. 5.1)
\be
|\lambda; n\rangle = (-1)^{\sum_{i=1}^r b_r} \psi_{a_i+n} \psi^\dag _{-b_i +n-1}|n \rangle.
\ee

For an exponential element $g \in \GL(\HH)$
\be
g = e^A, \quad  A:= \sum_{ij \in \Zbb} A_{ij} E_{ij} \in \grg\grl(\HH), \quad E_{ij}(e_k) = e_i \delta_{jk} ,
\ee
we have the Clifford representation
\be
\hat{g} = e^{\sum_{i,j \in \Zbb} A_{ij}\no{\psi_i \psi^\dag_j} },
\ee
where
\be
\no{\psi_i \psi^\dag_j} := \psi_i \psi^\dag_j - \langle  0| \psi_i \psi^\dag_j| 0 \rangle
\ee
is the normal ordered product. As indicated in (\ref{eq:plucker_exp_sector_n}), the  Pl\"ucker
image of an element $w \in \Gr_{\HH_+}(\HH)$  of  {\em virtual dimension} $n$ is in the fermionic
charge sector $\FF_n$. In particular, the subspace
\be
\HH^n_+ :=  \span \{ e_{n-i}, \dots \}_{i\in \Nbb^+} 
\ee
has virtual dimension $n$, and its Pl\"ucker image is (the projectivization of) the vacuum vector
\be
[|n \rangle ]= \grP(\HH^n_+ ) \in\Pb(\FF_n)
\ee
in the sector $\FF_n$. Any element $w\in \Gr_{\HH_+}(\HH)$ with virtual dimension $n$ may be expressed as
\be
w = g(\HH^n_+),
\label{w_g_HH_^n}
\ee
where $g\in \GL^0(\HH)$ is in the identity component of $\GL(\HH)$, and is determined up
to right multiplication by an element of the stabilizer of $\HH^n_+$.

The fermionic representation of the KP flow group $\Gamma_+$ on $\FF$  is given by
\be
\gamma_+(\tb)\ \mt \ \hat{\gamma}_+(\tb) = e^{\sum_{i=1}^\infty {t_i J_i}} \in \End(\FF),
\ee
where
\be
J_i := \sum_{j\in \Zbb} \psi_j \psi^\dag_{j+i}, \quad i\in \Nbb^+
\ee
are the negative Fourier components of the   current operator.
\bea
J(z) &\&:= \no{\psi(z) \psi^\dag(z)}= \sum_{i\in \Zbb} J_i z^{-i} , \\
\psi(z) &\&:= \sum_{i\in \Zbb} \psi_i z^i,\quad    \psi^\dag(z) := \sum_{i\in \Zbb} \psi^\dag_i z^{-i}.
\eea
From the equivariance of the Pl\"ucker map (\ref{eq:inf_plucker_map}),
the  image  $\grP(w(\tb))$ of the element
\be
w(\tb) = \gamma_+(\tb)w
\ee
 in the $\Gamma_+$ orbit of $w\in \Gr_{\HH_+}(\HH)$ is
\be
\grP(w(\tb)) = [\hat{\gamma}_+(\tb) |w\rangle ]
\ee
where
\be
[|w\rangle] := \grP(w).
\ee

It follows from the determinantal interpretation of the Pl\"ucker coordinates of 
  $\grP(w(\tb))$ and definition (\ref{tau_fn_det_def}),
 that $\tau^{KP}_w(\tb)$ is expressible, within a
projective normalization factor, as the fermionic vacuum expectation value
\be
\tau^{KP}_w(\tb) = \langle n | \gamma_+(\tb) |w\rangle.
\ee
If $w$ is expressed as in (\ref{w_g_HH_^n}), we have $\grP(w)\in \Pb(\FF_n)$ and
this becomes
\be
\tau^{KP}_w(\tb) = \langle n | \gamma_+(\tb) \hat{g} |n\rangle.
\label{tau_g_fermionic}
\ee

The Pl\"ucker coordinates are  the fermionic matrix elements
\be
\pi_\lambda(w) = \langle \lambda; n | \hat{g}| n \rangle,
\ee
so (\ref{tau_fn_det_def}) is equivalent to
\be
\tau_w^{KP}(\tb)= \pi_\emptyset(\grP(w(\tb)).
\label{tau_null_plucker}
\ee


\subsection{KP addition formulae, lattice evaluations and discrete KP}
\label{sec:add_thm_lattice_eval}

A standard result in the theory of KP $\tau$-functions is that the Hirota residue
equation (\ref{eq:hirota_res_eq}) is also equivalent to an infinite set of multi-parametrically
defined {\em addition formulae}  (\cite{SaSa, Shig}, \cite{HB}, Chapt. 3)  satisfied by evaluations of $\tau^{KP}_w(\tb)$ at lattice of points
in the $\Gamma_+$ orbit  $\{w(\tb)\}$ of $w$.
In the notation of \cite{HB}, for a finite set of $n$ parameters $(x_1, \dots, x_n)$, let
\be
\zeta_n(\tb,  x_1, \dots, x_n):= \prod_{i<j}(x_i-x_j)\tau_w^{KP} (\tb + \sum_{i=1}^n[ x_i]).
\label{eq:zeta_xb_tb_def}
\ee
We then have  (\cite{HB}, Sec.~3.10).
\begin{theorem}[KP addition formulae]
\label{th:addition_formula}
 If $\tau({\bf t})$ satisfies the Hirota bilinear equations (\ref{eq:hirota_res_eq}), then for  $k \in \Nbb^+$ and any
$2k$ distinct parameters $(x_1, \dots, x_{k-1}; y_1, \dots , y_{k+1})$, the following addition formulae
hold:
\be
\sum_{j=1}^{k+1} (-1)^j \zeta_k({\bf t}; x_1, \dots, x_{k-1}, y_j)  \zeta_k({\bf t}; y_1, \dots, \hat{y}_j, \dots,
y_{k+1}) = 0,
\label{tau_k_addition_formula}
\ee
where $(y_1, \dots, \hat{y}_j, \dots, y_{k+1})$ denotes the sequence of $k$ parameters obtained by omitting $y_j$
from the sequence $(y_1,  \dots, y_{k+1})$.
\end{theorem}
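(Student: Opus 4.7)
My plan is to derive (\ref{tau_k_addition_formula}) directly from the Hirota bilinear residue equation (\ref{eq:hirota_res_eq}) via a Miwa-type substitution followed by a Lagrange-interpolation and Vandermonde-rearrangement argument, along the lines of the standard derivation of the Fay identity and its higher analogues \cite{SaSa, Shig, HB}.

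\emph{Step 1 (Miwa substitution).} In (\ref{eq:hirota_res_eq}) I would substitute $\tb \mapsto \tb + \sum_{l=1}^{k+1}[y_l]$ and take $\delta\tb = \sum_{i=1}^{k-1}[x_i] - \sum_{l=1}^{k+1}[y_l]$. Using the formal identity $e^{\xi([w], z)} = (1-wz)^{-1}$, the exponential prefactor reduces to the rational function
$$R(z) = \frac{\prod_{l=1}^{k+1}(1-y_l z)}{\prod_{i=1}^{k-1}(1-x_i z)},$$
and the two $\tau$-arguments become $\tb + \sum_l [y_l] - [z^{-1}]$ and $\tb + \sum_i [x_i] + [z^{-1}]$. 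The crucial feature of this substitution is that, under the specialization $z = 1/y_j$ (for each $j = 1, \dots, k+1$), these two $\tau$-factors collapse respectively to $\tau(\tb + \sum_{l \neq j}[y_l])$ and $\tau(\tb + \sum_i [x_i] + [y_j])$, which are precisely the $\tau$-factors appearing in the $j$-th summand of (\ref{tau_k_addition_formula}).

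\emph{Step 2 (Lagrange interpolation / residue theorem).} The Hirota equation asserts $\res_\infty[R(z)\tau_1(z)\tau_2(z)] = 0$. To convert this formal identity into a $(k+1)$-term sum over the specializations $z = 1/y_j$, I would interpret the residue as a contour integral of a meromorphic function (valid in a convergence domain for the Miwa parameters $y_l, x_i$ near zero) and apply the classical residue theorem. Although the actual poles of $R(z)$ are at $z = 1/x_i$, an auxiliary Lagrange interpolation argument, based either on writing $\prod_l(1-y_lz)$ in Lagrange form at the nodes $\{1/y_j\}_{j=1}^{k+1}$ or equivalently on combining (\ref{eq:hirota_res_eq}) at several values of $\delta\tb$ involving auxiliary parameters, transfers the residue computation to the points $z = 1/y_j$ and yields
$$\sum_{j=1}^{k+1} \frac{\prod_i(y_j - x_i)}{\prod_{l \neq j}(y_j - y_l)}\,\tau\!\big(\tb + \textstyle\sum_{l \neq j}[y_l]\big)\,\tau\!\big(\tb + \sum_i[x_i] + [y_j]\big) = 0.$$

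\emph{Step 3 (Vandermonde regrouping).} Multiplying through by the Vandermonde prefactor $\prod_{i<i'}(x_i - x_{i'}) \prod_{l<l'}(y_l - y_{l'})$ and invoking the identities
$$\prod_{l<l'}(y_l - y_{l'}) = (-1)^{j-1}\prod_{l \neq j}(y_j - y_l)\prod_{\substack{l<l'\\ l,l' \neq j}}(y_l - y_{l'}),\qquad \prod_i(y_j - x_i) = (-1)^{k-1}\prod_i(x_i - y_j),$$
regroups the sum precisely into (\ref{tau_k_addition_formula}) with the $\zeta_k$-factors defined in (\ref{eq:zeta_xb_tb_def}), the signs $(-1)^j$ being supplied by the alternating sign of the Vandermonde upon removal of $y_j$. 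The main obstacle lies in Step 2: rigorously justifying the Lagrange interpolation / residue-theorem passage in the formal power series setting, since multiplication by rational functions does not in general preserve the formal residue at infinity. The cleanest route is to pass to the analytic category on a convergence domain of the $\tau$-series, apply the residue theorem to genuine meromorphic functions, and conclude by analytic continuation back to the formal framework; alternatively, one can derive the interpolation identity directly at the formal level by combining several specializations of (\ref{eq:hirota_res_eq}) with auxiliary shift parameters.
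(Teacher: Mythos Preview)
The paper does not prove Theorem~\ref{th:addition_formula}; it is quoted as a standard result with references to \cite{SaSa, Shig} and \cite{HB}, Sec.~3.10, so there is no in-paper argument to compare against. Your outline is in the spirit of the derivations found in those references (Miwa substitution in the Hirota residue identity, followed by a reorganisation into the $\zeta_k$-form), and Steps~1 and~3 are set up correctly.

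The genuine gap is Step~2, and it is more serious than a technicality. With your substitution, $R(z)=\prod_{l}(1-y_lz)\big/\prod_{i}(1-x_iz)$ has \emph{zeros} at $z=1/y_j$ and \emph{poles} at $z=1/x_i$; moreover $R(z)\sim z^{2}$ at infinity, so the integrand does not decay there. Consequently there is no residue of the integrand at $z=1/y_j$ to pick up, and the phrase ``an auxiliary Lagrange interpolation argument \dots transfers the residue computation to the points $z=1/y_j$'' does not correspond to any identifiable manipulation: one cannot relocate residues to zeros of the prefactor. The displayed $(k+1)$-term sum you write down at the end of Step~2 is the correct target, but nothing in Step~2 as written produces it. You yourself flag this as ``the main obstacle'', which is accurate, but the two suggested remedies (pass to a convergence domain and apply the analytic residue theorem; or combine several specializations of (\ref{eq:hirota_res_eq})) are not worked out, and the first one would still face the problem that the relevant points are zeros, not poles. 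In the cited references the passage is carried out either by a careful formal-series computation of the $z^{-1}$ coefficient after partial fractions (with the roles of the two $\tau$-arguments arranged so that the rational prefactor has simple poles, not zeros, at the desired evaluation points), or more conceptually via the fermionic vertex-operator realisation, where the addition formula is read off from Wick's theorem for $\psi(z)$, $\psi^\dag(z)$ insertions. Either route requires substantially more than what is sketched here.
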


For an infinite sequence of parameters $\xb=(x_1, x_2, \dots)$,
we define (cf.~eq.~(\ref{eq:NormalizedEvaluationOfTauFunction})), Subsection \ref{sec:lattice_eval_KP_tau})
the following  normalized  evaluations of $\tau^{KP}_w(\tb)$ at lattice points,
\be
 H_w^{\nb}(\tb):=\prod_{i<j}(x_i-x_j)^{n_in_j}\;\tau_w^{KP}\Big(\tb+\sum_{i=-\infty}^{+\infty}n_i [x_i]\Big),
 \label{H_w_def_tb}
\ee
where
\be
\nb = \sum_{i\in \Zbb} n_i \alphab_i \in (\Zbb)^\infty
\ee
is any infinite integer lattice element with a finite number of nonzero components $\{n_i\}_{i\in \Zbb}$ and $\{\alphab_i\}_{i\in \Zbb}$
are the unit basis elements having $1$  in position $i$ and $0$ elsewhere.
For any choice of lattice element $\nb$,  initial flow group element $\tb_0$,
and partition $\lambda$ with Frobenius indices $(a_1, \dots, a_r | b_1, \dots, b_r)$,
define $ H_{w}^{\nb,\lambda}(\tb_0)$ as
\be
   H_{w}^{\nb,\lambda}(\tb_0):=H_w^{\nb+\sum_{i=1}^r \alphab_{a_i}-\sum_{i=1}^r\alphab_{-b_i-1} }(\tb_0).
\ee
As pointed out by Sato \cite{SaSa},  the addition formulae (Theorem \ref{th:addition_formula})
 then imply, for any fixed choice of lattice element $\nb$ and initial value $\tb_0$,
that the $H_{w}^{\nb,\lambda}(\tb_0)$'s satisfy the same infinite set of Pl\"ucker relations (\ref{eq:PluckerRelationIndices})
as the Pl\"ucker coordinates $\{\pi_\lambda(w)\}$. This is  stated and proved inductively in Theorem
\ref{thm:plucker_lattice_homomorph}.

In particular, these  include the discrete Hirota equations \cite{Hir, Zab}  (or octahedron recurrences
\cite{Mi, Sp}, or $T$-relations \cite{KNS}), which define a discretization of the KP hierarchy,
and may be expressed \cite{Mi, Nim, Zab}  as
\be
a(b-c) \tau_{l+1, m,n} \tau_{l, m+1, n+1} + b(c-a) \tau_{l, m+1, n} \tau_{l,m+1, n+1}
+ c(a-b) \tau_{l, m, n+1} \tau_{l+1, m+1, n} =0,
\label{eq:octahedron_rels}
\ee
where
\be
\tau_{l, m, n} := \tau^{KP}_w(\tb_0 + l[a] + m [b] + n[c]), \quad l, m,n \in \Zbb.
\label{eq:tau_lmn_def}
\ee

\begin{remark} The recurrences associated to these discretizations take the same form  as the short Pl\"ucker relations
 satisfied by the Pl\"ucker coordinates of a single element of the Grassmannian $\Gr_{\HH_+}(\HH)$.
  But the relations are associated here to a  group of discrete commutative flows on a line bundle over the Grassmannian.

\end{remark}

 \subsection{Lagrangian Grassmannians and the CKP hierarchy}
\label{sec:CKP_lagrange_grassmann}

A symplectic form $\omega$ may be defined on $\HH$  by  evaluation on the basis elements $\{e_i\}_{i\in \Zbb}$
\be
\omega(e_i, e_j) =-\omega(e_j, e_i) := (-1)^i \delta_{i, -j -1},  \quad i, j \in \Zbb.
\label{symplectic _form}
\ee
The symplectic subalgebra  $\grs\grp(\HH,\omega)  \ss\grg\grl(\HH)$ consists of the elements $X \in \grg\grl(\HH)$ satisfying
\be
\omega(X u, v) + \omega(u, X v) =0, \quad \forall \ u, v \in \HH.
\label{sp_sympl_invar}
\ee
The symplectic subgroup $\Sp(\HH, \omega) \ss \GL(\HH)$, with Lie algebra $\grs\grp(\HH, \omega)$,
consists of elements $g^0 \in \GL(\HH)$ that preserve $\omega$:
\be
\Sp(\HH, \omega) = \{ g^0 \in \GL(\HH)\  |\  \omega(g^0 u, g^0 v) = \omega(u, v), \quad \forall \ u, v \in \HH\}.
\label{Sp_group_invar}
\ee

The Lagrangian Grassmannian $\Gr^\LL_{\HH_+}(\HH, \omega) \ss \Gr_{\HH_+} (\HH)$
is the sub-Grassmannian consisting of maximal isotropic subspaces $w^0\in Gr_{\HH_+} (\HH)$;
i.e. those on which the restriction of $\omega$ vanishes
\be
\omega \vert_{w^0} =0,  \quad \forall \ w^0\in \Gr^\LL_{\HH_+}(\HH, \omega).
\label{inf_lagrangian_cond}
\ee
This is a homogeneous space of the symplectic group $\Sp(\HH, \omega)$,
 and may be viewed as the orbit of  the element $\HH_+\ss \HH$
\be
\Gr^\LL_{\HH_+}(\HH, \omega) = \Sp(\HH, \omega) (\HH_+).
\ee
Restricting to elements $w^0 = g^0(\HH_+)$ of the Lagrangian Grassmannian,
where $g^0\in \Sp(\HH, \omega)$,  determines a KP $\tau$-function with fermionic representation
\be
\tau^{KP}_{w^0}(\tb) = \langle 0 | \hat{\gamma}_+(\tb) \hat{g}^0 | 0 \rangle
\ee
which, as shown in \cite{DJKM1, DJKM3, KZ},  must satisfy the following symmetry condition
\be
\tau^{KP}_{w^0}(\tb) = \tau^{KP}_{w^0}(\tilde{\tb}),
\label{eq:LagrangianIsotropyConditionTau}
\ee
where
\be
\tilde{\tb}:=(t_1, -t_2, t_3, -t_4, \dots).
\ee

The full KP flow group $\Gamma_+$ does not leave $\Gr^\LL_{\HH_+}(\HH, \omega)$ invariant,
but the subgroup $\Gamma'_+\ss\Gamma_+$ consisting of purely odd flows
\be
\Gamma'_+:= \Gamma_+\cap \Sp(\HH, \omega) = \{\gamma_+(\tb'), \quad \tb':=(t_1, 0,  t_3, 0, \cdots)\}
\ee
does. The corresponding Baker function, restricted to vanishing even values
\be
\tb = \tb' := (t_1, 0, t_3, 0, \dots)
\ee
of the KP flow variables is given by the Sato formula \cite{Sa, SaSa}
\be
\Psi_{w^0}(z, \tb') := e^{\xi(z, \tb')} \frac{\tau^{KP}_{w_0}(\tb' - [z^{-1})]')}{\tau^{KP}_{w^0}(\tb')},
\ee
where
\be
\xi(z, \tb') := \sum_{j=1}^\infty t_{2j -1} z^{2j -1}, \quad
[z^{-1}]' := \left(z^{-1}, 0, {1\over 3} z^{-3}, {1\over 5}  z^{-5},0,  \dots \right),
\ee
This satisfies the Hirota bilinear residue equation in the form \cite{DJKM1, DJKM3}
\be
\res_{z=\infty}\left(\Psi_{w^0}(z, \tb')  \Psi_{w^0}(-z,\tb'+\delta \tb' )
\right)dz= 0
\label{hirota_bilinear_tau_res_CKP}
\ee
identically in
\be
\delta\tb' = (\delta t_1, 0, \delta t_3, 0 \dots)
\ee
which gives  the reduction of the KP to the CKP hierarchy \cite{DJKM1, DJKM2, DJKM3, KZ}.

 \subsection{Lattice evaluations of $\tau$-functions and hyperdeterminantal relations}
\label{sec:lattice_lagrange_hyperdet}
Let $\Bb$  be the infinite sublattice  of $(\Zbb)^{\Nbb}$ consisting of infinite sequences $\nb' =(\dots,  n'_i, n'_{i+1}, \dots)$, $n'_i \in \Zbb$,
where only a finite number of the entries  are nonzero,  and these satisfy
\be
n'_{-i} = - n'_{i-1} \quad \forall \ i \in \Nbb^+.
\ee
Choosing an element $w^0\in \Gr_{\HH_+}^\LL(\HH, \omega)$ and an infinite set of parameters $\{y_i\}_{i\in \Nbb^+}$
we define, similarly to (\ref{H_w_def_tb}),  a function $\hb$ on $\Bb$  whose values are
\begin{equation}
\begin{aligned}
h^{\nb'}
:=&\prod_{1\leq i< j}(y_i-y_j)^{n'_{i-1}n'_{j-1}+n'_{-i}n'_{-j}} \prod_{i,j=1}^{\infty}(-1)^{n'_{-i}n'_{j-1}}(y_i+y_j)^{n'_{-i}n'_{j-1}}
\\ &\qquad\times
\tau_{w^0}^{KP}\left(\tb'+\sum_{i=1}^{\infty}\big(n'_{i-1}[y_i]+n'_{-i}[-y_i]\big)\right),
\end{aligned}
\label{eq:Lagrangian_valuation_tau}
\end{equation}
(cf.~eqs. (\ref{eq:LagrangianEvaluationTauFunction}), (\ref{eq:hb_bold_n_prime_def}),
(\ref{eq:hb_bold_n_prime_v_def})).

Let $\{\betab_i\}_{i\in \Nbb^+}$ be the generating basis for $\Bb$ consisting of elements $\betab_i$ with just two nonvanishing components 
\be
n'_j=1, \quad n'_{-j-1} = -1\  \text{ for }  j= i-1, \quad \text{and }  n_j =0 \ \text{ for all other } j \in \Zbb.
\ee
For all distinct positive integer $l$-tuples $(i_1, \dots i_l)$, denote  as $\hb_{\betab_{i_1}+\dots+\betab_{i_l}}$ 
the corresponding lattice functions on $\Bb$ with values
\bea
h^{\nb'}_{\betab_{i_1}+\dots+\betab_{i_l}}:=h^{\nb' +\betab_{i_1}+\dots+\betab_{i_l}},  \quad \nb'\in \Bb,
\label{eq:Lagrangian_valuation_tau_transl}
\eea
 given by shifting the argument  by such a sum of generators.
Proposition \ref{prop:HyperdeterminantalRecurrence} says that, choosing $l=0,1,2$ or $3$ in
 (\ref{eq:Lagrangian_valuation_tau_transl}), for all  triples $i,j,k \in \Nbb^+$, 
  $i<j<k$,
 these provide solutions to the system 
\begin{equation}
\begin{aligned}
&\hb^2\hb_{\betab_i+\betab_j+\betab_k}^2 +\hb_{\betab_i}^2\hb_{\betab_j+\betab_k}^2 +\hb_{\betab_j}^2\hb_{\betab_i+\betab_k}^2 +\hb_{\betab_k}^2\hb_{\betab_i+\betab_j}^2\\
&-2\hb\hb_{\betab_i+\betab_j+\betab_k}\left(\hb_{\betab_i}\hb_{\betab_j+\betab_k} +\hb_{\betab_j}\hb_{\betab_i+\betab_k} +\hb_{\betab_k}\hb_{\betab_i+\betab_j}\right)\\
&-2\left(\hb_{\betab_i}\hb_{\betab_j}\hb_{\betab_i+\betab_k}\hb_{\betab_j+\betab_k} +\hb_{\betab_j}\hb_{\betab_k}\hb_{\betab_i+\betab_j}\hb_{\betab_i+\betab_k} +\hb_{\betab_i}\hb_{\betab_k}\hb_{\betab_i+\betab_j}\hb_{\betab_j+\betab_k}\right)\\
&+4\hb\hb_{\betab_i+\betab_j}\hb_{\betab_i+\betab_k}\hb_{\betab_j+\betab_k} +4\hb_{\betab_i}\hb_{\betab_j}\hb_{\betab_k}\hb_{\betab_i+\betab_j+\betab_k} =0, 
\end{aligned}
\label{eq:HyperdeterminantalRecurrenceIntroduction}
\end{equation}
 known as the $2 \times 2 \times 2$ {\em hyperdeterminantal relations}, which are  satisfied by the principal minors of any symmetric $N \times N$ matrix \cite{HoSt, Oed}. In the theory of integrable systems, they arise as the Kashaev recurrence \cite{Ka}, and are also  interpretable as a discretization of the CKP hierarchy \cite{Sch, BobSch}. Proposition \ref{hexahedron_Lagrangian_36} derives these as a consequence
of Theorem \ref{th:LatticeOfPointsInInfiniteLagrangianGrassmannian}, which gives a homomorphism, as commutative
algebras, from the coordinate algebra of  the infinite dimensional Lagrangian Grassmannian $\Gr^\LL_{\HH_+}(\HH, \omega)$
to a localization of the ring  of formal power series in the infinite sequence of odd KP flow variables
 $\tb':=(t_1, 0, t_3, 0, \cdots)$ to which the $\tau$-function is restricted.

This provides  an infinite lattice extension of Corollary 3.20 of \cite{AHH}, where the hyperdeterminantal relations (\ref{eq:HyperdeterminantalRecurrenceIntroduction}) are shown satisfied by 
 finite lattice evaluations of the $\tau$-function. 
 (Cf.  eq.~(3.6.43) of \cite{AHH}, where $\sigma,\sigma_i,\sigma_{ij},\sigma_{ijk}$ (for $i, j, k \in \{1, \dots, N\}$) 
 are essentially the same as $\hb,\hb_{\betab_i}, \hb_{\betab_i+\betab_j}, \hb_{\betab_i+\betab_j+\betab_k}$, up to an overall
 rational normalization factor in the parameters).

\subsection{Lagrangian Pl\"ucker coordinates hexahedron relations}
\label{sec:hexahedron}

Kenyon and Pemantle \cite{KePe1, KePe2} introduced a system of  relations
that they called the {\em hexahedron recurrence}, and applied it to the study of double dimer covers and rhombus tilings. 
These are defined as follows. Let $\hbc,\hbc^{(x)},\hbc^{(y)},\hbc^{(z)}$ be four functions on the  3-dimensional lattice
\begin{equation}
\Bb_3:=\mathrm{span}_{\Zbb}\{\betab_1,\betab_2,\betab_3\}
\end{equation}
with basis vectors $\betab_1,\betab_2,\betab_3$,  and values at any $\nb'\in \Bb_3$ denoted
$\hc^{\nb'},\hc^{\nb', (x)},\hc^{\nb',(y)},\hc^{\nb',(z)}$, respectively. 
For any $\vb'\in \Bb_3$, define four additional functions
 $\hbc_{\vb'},\hbc^{(x)}_{\vb'},\hbc^{(y)}_{\vb'},\hbc^{(z)}_{\vb'}$ with values
\be
\hc^{\nb'}_{\vb'}:= \hc^{\nb' +\vb'}, \quad \hc^{\nb', (x)}_{\vb'}:= \hc^{\nb'+\vb', (x)},\quad \hc^{\nb',(y)}_{\vb'}:= \hc^{\nb' +\vb',(y)}
\quad \hc^{\nb',(z)}_{\vb'}:= \hc^{\nb' +\vb',(z)},
\ee
 obtained by translating the evaluation point by $\vb'$.
 These  satisfy the hexahedron recurrence if\;\footnote{To compare with the notation of \cite{KePe2}, we have the
following correspondence:
\bea
&\& \hbc \leftrightarrow h, \quad \hbc^{(x)} \leftrightarrow h^{(x)} , \quad \hbc^{(y)} \leftrightarrow h^{(y)} ,  \quad \hbc^{(z)} \leftrightarrow h^{(z)} , 
\quad \hbc^{(x)}_{\betab_1} \leftrightarrow h^{(x)}_1 ,    \quad \hbc^{(y)} _{\betab_2}\leftrightarrow h^{(y)}_2 ,  \quad \hbc^{(z)}_{\betab_3} \leftrightarrow h^{(z)}_3\cr
&\&  \hbc_{\betab_1} \leftrightarrow h_1, 
   \ \hbc _{\betab_2}\leftrightarrow h_{2},  \ \hbc_{\betab_3} \leftrightarrow h_{3} ,
   \ \hbc _{\betab_1+\betab_2}\leftrightarrow h_{12},  \  \hbc_{\betab_1 + \betab_3} \leftrightarrow h_{13} ,
   \ \hbc_{\betab_2 + \betab_3} \leftrightarrow h_{23},
   \  \hbc_{\betab_1 + \betab_2 +\betab_3} \leftrightarrow h_{123}.
   \nonumber
\eea
} 
\begin{subequations}
\bea
\hbc\hbc^{(x)}\hbc_{\betab_1}^{(x)}&\&=\hbc\hbc_{\betab_1}\hbc_{\betab_2+\betab_3} +\hbc_{\betab_1}\hbc_{\betab_2}
\hbc_{\betab_3}+\hbc^{(x)}\hbc^{(y)}\hbc^{(z)},
\\
\hbc\hbc^{(y)}\hbc_{\betab_2}^{(y)}&\&=\hbc\hbc_{\betab_2}\hbc_{\betab_1+\betab_3} +\hbc_{\betab_1}\hbc_{\betab_2}\hbc_{\betab_3}+\hbc^{(x)}\hbc^{(y)}\hbc^{(z)},
\\
\hbc\hbc^{(z)}\hbc^{(z)}_{\betab_3}&\&=\hbc
_{\betab_3}\hbc_{\betab_1+\betab_2} +\hbc_{\betab_1}\hbc_{\betab_2}\hbc_{\betab_3}+\hbc^{(x)}\hbc^{(y)}\hbc^{(z)},\\
&\&{\hskip -46 pt}  \hbc^2\hbc^{(z)}\hbc^{(x)}\hbc^{(y)}\hbc_{\betab_1+\betab_2+\betab_3} = (\hbc^{(z)}\hbc^{(x)}\hbc^{(y)})^2\cr
&\&{\hskip -46 pt} + (\hbc^{(z)}\hbc^{(x)}\hbc^{(y)})
\left(2\hbc_{\betab_1}\hbc_{\betab_2}\hbc_{\betab_3}+\hbc(\hbc_{\betab_1} \hbc_{\betab_2+\betab_3}+\hbc_{\betab_2}\hbc_{\betab_1+\betab_3} +\hbc_{\betab_3}\hbc_{\betab_1+\betab_2})\right)\cr
&\&{\hskip -46 pt} +(\hbc_{\betab_1}\hbc_{\betab_2} +\hbc\hbc_{\betab_1+\betab_2})(\hbc_{\betab_1} \hbc_{\betab_3} +\hbc\hbc_{\betab_1+\betab_3})(\hbc_{\betab_2}\hbc_{\betab_3} +\hbc\hbc_{\betab_2+\betab_3}).\cr
&\&
\eea
\label{eq:hexahedronRecurrenceLagrangianGr36_intro}
\end{subequations}

By combining the Pl\"ucker relations  (\ref{eq:PluckerRelationIndices})  with the  Lagrangian condition (\ref{inf_lagrangian_cond}), 
the hexahedron relations  were shown in \cite{AHH}  to hold for a certain subset of the Pl\"ucker coordinates of any element 
of the Lagrangian Grassmannian  $\Gr^\LL_{\HH_+}(\HH, \omega)$.  The homomorphism of Theorem \ref{th:LatticeOfPointsInInfiniteLagrangianGrassmannian} implies  Corollary  \ref{cor:hexahedron_tau_eval}, which says that, 
for every element $w^0\in\Gr_{\HH_+}^\LL(\HH,\omega)$ of the infinite Lagrangian Grassmannian we get a solution of (\ref{eq:hexahedronRecurrenceLagrangianGr36_intro}) by normalized evaluation
of the $\tau$-function of CKP type on a suitably  defined sublattice, as localized formal power series in the parameters. 
Whenever this power series can be evaluated at a given set of values of the parameters, we obtain a 
 complex  valued solution of the hexahedron recurrence system (\ref{eq:hexahedronRecurrenceLagrangianGr36_intro}).

 \section{Nested finite dimensional Grassmannians}
 \label{sec:Finite-dimensional_Grassmannians}

\subsection{Finite dimensional Grassmannians and Pl\"ucker relations}
\label{sec:PluckerRelationsIndices}

Let
\be
 V _1 \ss V_2 \ss \cdots V _k \ss V_{k+1} \ss \cdots , \quad k\in \Nbb^+
\ee
be a nested sequence of $k$--dimensional  complex vector spaces, with $V_k$  spanned
by basis vectors $(e_{-1},\dots, e_{-k})$,  and $V_0$ being the zero dimension space containing as sole element the zero vector $\bf {0}$.
Denoting the corresponding dual spaces as $V_k^*$, the dual basis vectors as $(e_0,\dots,e_{k-1})$, for $k\ge 1$,
 with dual pairing
\be
e_{i} (e_{-j}) = (-1)^i\delta_{i, j-1} \quad i = 0, \dots, k-1, \ j = 1, \dots, k,
\label{eq:dualizing_basis}
\ee
we have
\be
 V ^*_1 \ss V^*_2 \ss \cdots V^* _k \ss V^*_{k+1} \ss \cdots. 
\ee
 For $n \in \Nbb$, $n\ge k$, denote  the direct sum of $V_k$ and $V^*_{n-k}$ by
\be
\HH_{k,n} := V_k \oplus V_{n-k}^*
\ee
and the Grassmannian of $k$-dimensional  subspaces of $\HH_{k,n}$; 
i.e., the homogeneous space of the general linear group $\GL(\HH_{k, n})$ consisting of the orbit of $V_k\ss \HH_{k,n}$,
by  $\Gr_{V_k}(\HH_{k,n})$.

Let $W$ be the $n \times k$ homogeneous coordinate matrix of an element $w \in \Gr_{V_k}(\HH_{k,n})$, with entries
$\{W_{ij}\}_{1\le i \le n \atop 1\le j \le k}$, whose column vectors determine
a basis $\{w_j \}_{1\le j \le k}$ for $w$ defined by
\be
w_j :=\sum_{i=1}^n W_{ij} e_{i-k-1}, \quad j=1, \dots, k.
\ee
For $k\in \Nbb^+$ and any $k$-tuple of indices
\be
L= (L_1, \dots, L_k), \quad-k\le L_i  \le n-k-1, \quad i=1, \dots, k. \quad L_i \in \Zbb
\ee
(not necessarily ordered or even distinct).
\begin{definition}
If all the $L_i$'s are distinct, let $\sgn(L)$ be the sign of the permutation that puts the $k$-tuple $L=(L_1,\dots,L_k)$
into increasing order and $\sgn(L):=0$ if $L$ contains repeated indices.
\end{definition}
Denote by $W_L$ the $k \times k$ submatrix whose $j$th row is the $(L_j+k+1)$th row of $W$, and define
\be
\tilde{\pi}_L(w) := \det(W_L)
\label{eq:piLDefDet}
\ee 
as its determinant. These are the {\em Pl\"ucker coordinates} of  $w \in \Gr_{V_k}(\HH_{k,n})$. They are defined
only within projective equivalence, since any change of basis multiplies them by the determinant of the basis change matrix.

The quantities $\{\tilde{\pi}_L(w) \}$ defined in (\ref{eq:piLDefDet}) satisfy the skew-symmetry conditions
\be
\widetilde{\pi}_{L_1,\dots,L_k} =\sgn(\sigma)\, \widetilde{\pi}_{L_{\sigma_1},\dots,L_{\sigma_k}},\quad\forall \ \sigma  \in S_k,
\label{eq:PluckerSkewSymmetryIndices}
\ee
and  hence, when $L$ has repeated indices,  $\widetilde{\pi}_{L}$ vanishes.
It follows from (\ref{eq:PluckerSkewSymmetryIndices}) that,  for  all $k-1$-tuples $(I_1,\dots,I_{k-1})$ and  $k+1$-tuples $(J_1,\dots,J_{k+1})$
of integer indices between $-k$ and $n-k-1$, $\{\widetilde\pi_L\}$ satisfy the Pl\"ucker relations \cite{GH}
\be
 p_{I,J}:= \sum_{j=1}^{k+1}(-1)^j \widetilde{\pi}_{I_1,\dots,I_{k-1},J_j} \widetilde{\pi}_{J_1,\dots,J_{j-1},\widehat{J}_j,J_{j+1},\dots J_{k+1}}=0,
\label{eq:PluckerRelationIndices}
\ee
where $\hat{J}_j$ denotes omission of the element $J_j$ from the sequence. The $p_{I,J}$'s will be  called the {\em Pl\"ucker 
quadratic forms}.

Alternatively, viewing the symbols $\{\tilde{\pi}_L \}$  as indeterminates, define for all pairs $k, n  \in \Nbb^+$,
$k < n$, the free polynomial ring
\begin{equation}
\widetilde{\SS}_{k,n}:=\Cbb[\widetilde{\pi}_{L_1,\dots,L_k}\;|\;- k \le L_i\le n-k-1, \ i\in \{1, \dots , k\}],
\end{equation}
in $n^k$ indeterminates $\{\widetilde{\pi}_{L}\}$ labeled by $k$-element multi-indices $L=(L_1, \dots, L_k)$ 
(not necessarily ordered or distinct) with elements in the range indicated.
Let $\widetilde{\II}_{k,n}\subset\widetilde{\SS}_{k,n}$ denote the ideal generated
by the skew-symmetry conditions (\ref{eq:PluckerSkewSymmetryIndices}) and
the Pl\"ucker relations (\ref{eq:PluckerRelationIndices}).
Define a $\Nbb\times\Nbb$ bi-grading on $\widetilde{\SS}_{k,n}:$ by
\be
\deg(\widetilde \pi_{L_1,\dots,L_k}):=\Big(1,\,\sum_{i=1}^kL_i+\frac{k(k+1)}2\Big).
\label{eq:BihomogeneousDegreePiIndices}
\ee
For a monomial in the generators $\{\tilde{\pi}_L\}$, the first component in (\ref{eq:BihomogeneousDegreePiIndices}) 
is its degree, while the second is the weight $|\lambda(L)|$ of the associated partition $\lambda(L)$ defined 
in eq.~(\ref{eq:FiniteLThroughLambda}) below. It follows from the skew-symmetry relations (\ref{eq:PluckerSkewSymmetryIndices}) and
the Pl\"ucker relations (\ref{eq:PluckerRelationIndices}) that $\widetilde{\II}_{k,n}$ is bihomogeneous with respect 
to this  bi-grading. The coordinate ring of the Grassmannian $\Gr_{V_k}(\HH_{k,n})$ thus has the following presentation
in terms of generators and relations
\be
\widetilde{\SS}_{k,n}(\Gr_{V_k}(\HH_{k,n})):=\widetilde{\SS}_{k,n}\Big/\,\widetilde{\II}_{k,n},
\ee
and the quotient ring is also bi-graded.

\subsection{Pl\"ucker map and coordinates $\pi_\lambda$ labelled by partitions}
\label{plucker_coords_partitions}

Geometrically,  the Pl\"ucker map is defined by
\bea
\grP^n_k: \Gr_{V_k}(\HH_{k,n}) &\&\ra \Pb(\Lambda^k\HH_{k,n}) \cr
\grP^n_k: w &\&\mapsto [w_1 \wedge \cdots \wedge w_k]
\label{eq:plucker_map_kn}
\eea
(where $[\upsilon]$ denotes the projective equivalence class of $\upsilon\in \Lambda^k(\HH_{k,n})$). This embeds
$\Gr_{V_k}(\HH_{k,n})$ equivariantly, as a projective variety, into the projectivization $\Pb(\Lambda^k(\HH_{k,n})) $ of
the $k$th exterior power of $\HH_{k,n}$.

 Because of the skew-symmetry condition (\ref{eq:PluckerSkewSymmetryIndices}),
it is sufficient to only  consider $\tilde{\pi}_L$ with $L$ ordered increasingly and no repeated indices.
To any isuch ncreasingly ordered $k$-tuple $L$ there is a uniquely associated partition $\lambda$
whose parts are given by
\be
\lambda_i :=L_{k-i+1}+i, \quad i=1, \dots k,
\label{eq:FiniteLThroughLambda}
\ee
and whose Young diagram fits into the $k \times (n-k)$ rectangular diagram.

\begin{definition} Denote by
\be
\mathbf\Lambda_{k,n}=\left\{\lambda\;\big|\; \lambda\subset (n-k)^k\right\}
\ee
 the  set of partitions whose Young diagrams fit into the $k \times (n-k)$ rectangle.
\end{definition}
 Relative to the basis $\{|\lambda\rangle\}_{ \lambda \in \mathbf\Lambda_{k,n}}$ for $\Lambda^k\HH_{k,n}$ defined by
\be
|\lambda\rangle := e_{L_k} \wedge \cdots \wedge e_{L_1},\quad\lambda\in\mathbf\Lambda_{k,n},
\ee
these may alternatively be labelled by the corresponding partitions $\{\lambda\}$ defined in (\ref{eq:FiniteLThroughLambda}):
\be
\pi_\lambda(w) := \tilde{\pi}_{L_1,\dots,L_k}(w).
\label{eq:piTildepiLambda}
\ee
The image of $w$ under the Pl\"ucker map is then given by
\be
\grP^n_k(w) = \big[ {\hskip -0.3em}\sum_{\lambda\in\mathbf\Lambda_{k,n}} \pi_{\lambda}(w) |\lambda\rangle\,\big].
\ee
and the Pl\"ucker relations are equivalent to the decomposability of the element $\grP^n_k(w)$ defined
by (\ref{eq:plucker_map_kn}).
In terms of $\{\pi_\lambda\}$, the bi-grading (\ref{eq:BihomogeneousDegreePiIndices}) is simply
\begin{equation}
\deg\, \pi_\lambda=(1,|\lambda|) \in\Nbb\times\Nbb,
\label{eq:BidegreeOfPiLambda}
\end{equation}
where
\be
|\lambda| = \sum_{i=1}^k \lambda_i
\label{weight_lambda}
\ee
is the weight of $\lambda $.

We thus have an equivalent presentation of the coordinate ring
 $\SS_{k,n}(\Gr_{V_k}(\HH_{k,n}))$ in terms of Pl\"ucker coordinates
 labelled by partitions.
Denote by
\begin{equation}
\SS_{k,n}:=\Cbb[\pi_\lambda\;|\;\lambda\in\mathbf\Lambda_{k,n}]
\end{equation}
the free polynomial ring generated by indeterminates $\{\pi_\lambda\}$ labelled by partitions $\lambda \in  \mathbf\Lambda_{k,n}$,
and let $\II_{k,n}\subset\SS_{k,n}$ be the ideal generated by the Pl\"ucker relations (\ref{eq:PluckerRelationIndices}),
written in terms of  such $\{\pi_\lambda\}$'s. The isomorphism
\begin{equation}
 \SS_{k,n}(\Gr_{V_k}(\HH_{k,n})):=\SS_{k,n}\big/\II_{k,n} \;\cong\;\widetilde{\SS}_{k,n}\big/\widetilde{\II}_{k,n} =:\widetilde{\SS}_{k,n}(\Gr_{V_k}(\HH_{k,n})),
\end{equation}
of these $\Nbb\times\Nbb$-graded rings then follows from (\ref{eq:PluckerSkewSymmetryIndices})
and  (\ref{eq:piTildepiLambda}).

\subsection{Embeddings and partial projections of Grassmannians}
\label{embeddings_partial_proj}

Define two sequences of embeddings 
\bea
i_k: \Gr_{V_k}(\HH_{k,n}) &\&  \hra \Gr_{V_{k+1}}(\HH_{k+1,n+1})\cr
i_k: w&\&\mapsto e_{-k-1}\oplus w\;\subset\HH_{k+1,n+1},\\
i^n: \Gr_{V_k}(\HH_{k,n}) &\&  \hra \Gr_{V_k}(\HH_{k,n+1})\cr
i^n(w)&\&\mapsto w\;\subset \HH_{k,n+1}.
\eea
The diagrams
\be
\begin{tikzcd}
\Gr_{V_k}(\HH_{k,n}) \ar[rr,hookrightarrow,"i^n"] \ar[dd,hookrightarrow,"i_k"] && \Gr_{V_k}(\HH_{k,n+1})  \ar[dd,hookrightarrow,"i_k"] \\\\
\Gr_{V_{k+1}}(\HH_{k+1,n+1}) \ar[rr,hookrightarrow,"i^{n+1}"]   &&  \Gr_{V_{k+1}}(\HH_{k+1,n+2})
\end{tikzcd}
\label{eq:Embeddings-ik-in}
\ee
 are  tautologically commutative.

At the level of coordinate rings, the pull-backs of these embeddings
define commuting diagrams of surjective homomorphisms of coordinate rings

\begin{equation}
\begin{tikzcd}
\SS_{k,n}(\Gr_{V_k}(\HH_{k,n}))&& \SS_{k,n+1}(Gr_{V_k}(\HH_{k,n+1}))\ar[ll,twoheadrightarrow,"(i^n)^*"]\\\\
\SS_{k+1,n+1}(\Gr_{V_{k+1}}(\HH_{k+1,n+1}))\ar[uu,twoheadrightarrow,"(i_k)^*"]&& \SS_{k+1,n+2}(\Gr_{V_{k+1}}(\HH_{k+1,n+2})) \ar[ll,twoheadrightarrow,"(i^{n+1})^*"] \ar[uu,twoheadrightarrow,"(i_k)^*"]
\end{tikzcd}
\label{S_kn_commut_diag}
\end{equation}
where  $(i_k)^*$ and $(i^n)^*$ act on the Pl\"ucker coordinates as follows.
\begin{subequations}
\begin{eqnarray}
(i_k)^*(\pi_\lambda)&=&\left\{\begin{array}{cl}
\pi_\lambda,&\lambda\in\mathbf\Lambda_{k,n},\cr
0,&\textrm{otherwise},
\end{array}\right.
\qquad\textrm{for all}\;\lambda\in\mathbf\Lambda_{k+1,n+1},
\label{eq:ikstarAction}\\[0.5em]
(i^n)^*(\pi_\lambda)&=&\left\{\begin{array}{cl}
\pi_\lambda,&\lambda\in\mathbf\Lambda_{k,n},\cr
0,&\textrm{otherwise},
\end{array}\right.
\qquad\textrm{for all}\;\lambda\in\mathbf\Lambda_{k,n+1}.
\label{eq:instarAction}
\end{eqnarray}
\label{eq:istarAction}
\end{subequations}

The generators $\{\pi_\lambda\}$ of the ring $\SS_{k,n}$ are
a subset of those of the two larger rings $\SS_{k,n+1}$ and $\SS_{k+1,n+1}$.
Denoting  the tautological injection maps as
\bea
\iota^n:\SS_{k,n}&\& \ra \SS_{k, n+1}, \qquad \iota_k:\SS_{k,n} \ra \SS_{k+1, n+1}\cr
\iota^n(\pi_\lambda) &\& \mapsto \pi_\lambda  {\hskip 56 pt}
\iota_k(\pi_\lambda)  \mapsto \pi_\lambda,
\eea
this determines a commutative diagram of injective ring homomorphisms
\begin{equation}
\begin{tikzcd}
\SS_{k,n}\ar[r,hookrightarrow,"\iota^n"]\ar[d,hookrightarrow,"\iota_k"]&\SS_{k,n+1} \ar[d,hookrightarrow,"\iota_k"]\\
\SS_{k+1,n+1}\ar[r,hookrightarrow,"\iota^{n+1}"] &  \SS_{k+1,n+2}
\end{tikzcd}
\end{equation}

Since the Pl\"ucker relations that generate $\II_{k,n}$ are also satisfied in  the larger Grassmannians, we have two inclusions:
\begin{lemma}
\begin{equation}
\iota_k(\II_{k,n})\subset\II_{k+1,n+1},\qquad \iota^n(\II_{k,n})\subset\II_{k,n+1}.
\label{eq:InclusionOfIkn}
\end{equation}
\label{lemm:InclusionOfIkn}
\end{lemma}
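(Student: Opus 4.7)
The plan is to verify both inclusions directly on a generating set. Since $\II_{k,n}$ is generated by the Pl\"ucker quadratic forms $p_{I,J}$ of (\ref{eq:PluckerRelationIndices}), it suffices to exhibit each $\iota^n(p_{I,J})$ and $\iota_k(p_{I,J})$ as an explicit element of $\II_{k,n+1}$ and $\II_{k+1,n+1}$, respectively. I will work throughout in the $\widetilde\pi_L$-presentation, translating back to the partition labels via (\ref{eq:piTildepiLambda}) whenever needed.

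The first inclusion $\iota^n(\II_{k,n})\subset\II_{k,n+1}$ I expect to be essentially tautological. A Pl\"ucker quadratic form in $\II_{k,n}$ is indexed by integers in the range $[-k, n-k-1]$, which sits inside the range $[-k, n-k]$ governing $\II_{k,n+1}$. Because $\iota^n$ acts as the identity on every $\pi_\lambda$ with $\lambda\in\mathbf\Lambda_{k,n}\subset\mathbf\Lambda_{k,n+1}$ (equivalently, on every $\widetilde\pi_L$ whose indices lie in the smaller range), the same formula $p_{I,J}$ is itself a Pl\"ucker relation in $\II_{k,n+1}$, so $\iota^n(p_{I,J}) = p_{I,J}$.

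For the inclusion $\iota_k(\II_{k,n})\subset\II_{k+1,n+1}$ I first need to identify how the two sets of generators correspond. A partition $\lambda\in\mathbf\Lambda_{k,n}$ has at most $k$ parts, so the $(k+1)$-tuple associated to it in the $\mathbf\Lambda_{k+1,n+1}$ labelling---read off from (\ref{eq:FiniteLThroughLambda})---is obtained from its $k$-tuple $L=(L_1,\ldots,L_k)$ in $\SS_{k,n}$ by prepending $-k-1$. Equivalently, $\iota_k$ sends $\widetilde\pi_{L_1,\ldots,L_k}$ to $\widetilde\pi_{-k-1, L_1,\ldots,L_k}$. With this dictionary in hand, I would show that each $p_{I,J}\in\II_{k,n}$, with $I=(I_1,\ldots,I_{k-1})$ and $J=(J_1,\ldots,J_{k+1})$, agrees up to sign with the Pl\"ucker relation $p_{I',J'}$ in $\II_{k+1,n+1}$ whose parameters are the augmented tuples $I' = (-k-1, I_1,\ldots,I_{k-1})$ (of length $k$) and $J' = (-k-1, J_1,\ldots,J_{k+1})$ (of length $k+2$). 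Expanding $p_{I',J'}$, the $j'=1$ term vanishes by skew-symmetry (\ref{eq:PluckerSkewSymmetryIndices}) because its first factor carries the index $-k-1$ twice, and reindexing the remaining $k+1$ terms by $j=j'-1$ reproduces exactly $-\iota_k(p_{I,J})$. Hence $\iota_k(p_{I,J}) = -p_{I',J'}\in\II_{k+1,n+1}$.

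The argument is essentially an index-bookkeeping exercise and I do not foresee any substantive obstacle. The only step requiring real care is recognising how prepending $-k-1$ to a multi-index corresponds to viewing a partition in $\mathbf\Lambda_{k,n}$ inside $\mathbf\Lambda_{k+1,n+1}$ without altering the associated coordinate $\pi_\lambda$; once that is in place, the ``missing'' first term in the expansion of $p_{I',J'}$ is killed by skew-symmetry, which is the whole content of the second inclusion.
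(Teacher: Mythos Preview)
Your proposal is correct and follows the same approach as the paper's (commented-out) proof sketch: both observe that $\iota^n(p_{I,J})$ is literally the same Pl\"ucker form in the larger ring, and that $\iota_k(p_{I,J})$ is obtained by prepending the new index $-k-1$ to both factors. Your treatment is in fact more careful than the paper's sketch, since you explicitly identify $\iota_k(p_{I,J})$ with $-p_{I',J'}$ for $I'=(-k-1,I)$, $J'=(-k-1,J)$ and explain why the extra $j'=1$ term vanishes by skew-symmetry---a point the paper leaves implicit.
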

The homomorphisms $\iota_k$ and $\iota^n$ project to the corresponding quotient rings,
providing homomorphisms between the coordinate rings of the Grassmannians. It follows from the definitions that
\begin{equation}
(i_k)^*\circ\iota_k= \mathrm{Id}_{\SS_{k,n}(\Gr_{V_k}(\HH_{kn}))} =(i^n)^*\circ\iota^n,
\end{equation}
and therefore:
\begin{corollary}
For every pair of integers $(k,n)$ with $0\leq k\leq n$, the following is a commutative diagram of injective ring homomorphisms
\begin{equation}
\begin{tikzcd}
\SS_{k,n}(\Gr_{V_k}(\HH_{k,n})) \ar[dd,hookrightarrow,"\iota_k"] \ar[rr,hookrightarrow,"\iota^n"]&& \SS_{k,n+1}(Gr_{V_k}(\HH_{k,n+1})) \ar[dd,hookrightarrow,"\iota_k"]\\\\
\SS_{k+1,n+1}(\Gr_{V_{k+1}}(\HH_{k+1,n+1})) \ar[rr,hookrightarrow,"\iota^{n+1}"]&& \SS_{k+1,n+2}(\Gr_{V_{k+1}}(\HH_{k+1,n+2}))
\end{tikzcd}
\label{eq:Homomorphisms_iota_CommutativeDiagram}
\end{equation}
\end{corollary}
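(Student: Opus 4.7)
The plan is to build the claimed diagram by descending the injective ring homomorphisms $\iota_k$ and $\iota^n$ between the free polynomial rings, shown to commute in the preceding diagram, to the quotients by the Plücker ideals. The three things to check are well-definedness on quotients, injectivity, and commutativity, and each is immediate once the preparatory results are assembled correctly.

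First, by Lemma \ref{lemm:InclusionOfIkn} we have $\iota_k(\II_{k,n}) \subset \II_{k+1,n+1}$ and $\iota^n(\II_{k,n}) \subset \II_{k,n+1}$, so the universal property of the quotient gives induced ring homomorphisms
\begin{equation*}
\iota_k : \SS_{k,n}/\II_{k,n} \longrightarrow \SS_{k+1,n+1}/\II_{k+1,n+1},\qquad \iota^n : \SS_{k,n}/\II_{k,n} \longrightarrow \SS_{k,n+1}/\II_{k,n+1},
\end{equation*}
which are exactly the maps decorating the sides of the target square after identifying $\SS_{k,n}/\II_{k,n}$ with $\SS_{k,n}(\Gr_{V_k}(\HH_{k,n}))$.

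Second, to establish injectivity I would use the identities $(i_k)^*\circ\iota_k = \mathrm{Id}_{\SS_{k,n}(\Gr_{V_k}(\HH_{k,n}))}$ and $(i^n)^*\circ\iota^n = \mathrm{Id}_{\SS_{k,n}(\Gr_{V_k}(\HH_{k,n}))}$ recorded just before the statement. These are identities of ring homomorphisms between quotient rings (valid after checking they hold on generators $\pi_\lambda$ with $\lambda \in \mathbf\Lambda_{k,n}$, where both $\iota$-maps send $\pi_\lambda \mapsto \pi_\lambda$ and both pull-back maps send it back to $\pi_\lambda$ by \eqref{eq:istarAction}). A ring homomorphism with a one-sided inverse is injective, so both induced $\iota$-maps are injections of the coordinate rings.

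Third, commutativity of the square reduces to commutativity on the generating set $\{\pi_\lambda \mid \lambda \in \mathbf\Lambda_{k,n}\}$ of the upper-left coordinate ring. On any such generator, both paths act as the inclusion $\pi_\lambda \mapsto \pi_\lambda$ in $\SS_{k+1,n+2}/\II_{k+1,n+2}$, because each of $\iota_k$ and $\iota^n$ is defined that way at the level of free polynomial rings. Since ring homomorphisms are determined by their action on generators, the two compositions $\iota^{n+1}\circ\iota_k$ and $\iota_k\circ\iota^n$ coincide, yielding the asserted commutative diagram. There is no real obstacle here beyond bookkeeping; the essential content was already carried by Lemma \ref{lemm:InclusionOfIkn} and by the one-sided inverses furnished by $(i_k)^*$ and $(i^n)^*$.
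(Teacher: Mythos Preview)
Your proof is correct and follows essentially the same approach as the paper, which treats the corollary as an immediate consequence of Lemma~\ref{lemm:InclusionOfIkn} together with the one-sided inverse identities $(i_k)^*\circ\iota_k = \mathrm{Id}$ and $(i^n)^*\circ\iota^n = \mathrm{Id}$ stated just before the corollary. You have simply made explicit the three routine checks (well-definedness, injectivity, commutativity) that the paper leaves to the reader.
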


\subsection{Big cell}
\label{affine_big_cell}

Relative to the standard complete flag
\be
\span\{e_{-k}\}   \ss  \span\{e_{-k}, e_{-k+1}\} \ss  \cdots  \ss \span\{e_{-k}, \dots, e_{n-1}\},
\ee
the {\em big cell} $\Gr^\emptyset_{V_k}(\HH_{k,n})\ss \Gr_{V_k}(\HH_{k,n})$
is  defined by the condition that the Pl\"ucker coordinate $\pi_\emptyset$
corresponding to the trivial partition is nonzero:
\begin{equation}
\Gr^\emptyset_{V_k}(\HH_{k,n}):=\left\{w^\emptyset\in \Gr_{V_k}(\HH_{k,n})\;\big|\;\pi_{\emptyset}(w^\emptyset)\neq 0\right\}.
\end{equation}
Equivalently, it consists of all elements $w^\emptyset\in \Gr_{V_k}(\HH_{k,n})$ that can be represented as
 the graph of a map $M(w^\emptyset): V_k \ra V^*_{n-k}$.
\begin{lemma}
Every element $w^\emptyset\in \Gr^\emptyset_V(\HH_{k,n})$ can be uniquely represented
as the image of the linear  injection  $\epsilon_{w^\emptyset}: V_k \hra \HH_{k,n}$ defined by
\be
\epsilon_{w^\emptyset}: e_{-j} \mapsto e_{-j}+\sum_{i=1}^{n-k}M_{ij}(w^\emptyset)(-1)^{i-1}e_{i-1} \ \textrm{for all}\, 1\leq j\leq k,
\label{eq:affine_basis_graph}
\ee
where  $\{M_{ij}(w^\emptyset)\}_{i=1, \cdots n-k, j=1, \dots, k}$ are the matrix elements
of the map $M(w^\emptyset)$ relative to the basis $(e_{-k}, \cdots e_{-1})$ for $V_k$  and the standard
dual basis $(e^*_{-1}, \dots , e^*_{-n+k})$ for $V^*_{n-k}$.
\label{lemm:CoordinatesOnTheBigCellFinite}
\end{lemma}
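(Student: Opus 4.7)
\emph{Proof plan.} The plan is to exploit the non-vanishing of $\pi_\emptyset(w^\emptyset)$ to realize $w^\emptyset$ as the graph of a unique linear map $M(w^\emptyset)\colon V_k\to V^*_{n-k}$, and then convert the resulting expression from the dual basis $\{e^*_{-i}\}_{i=1}^{n-k}$ to the basis $\{e_{i-1}\}_{i=1}^{n-k}$ for $V^*_{n-k}$ via the pairing (\ref{eq:dualizing_basis}).

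First, I note that the partition $\lambda=\emptyset$ corresponds, via (\ref{eq:FiniteLThroughLambda}), to the strictly increasing multi-index $(L_1,\ldots,L_k)=(-k,\ldots,-1)$. Hence, for any homogeneous coordinate matrix $W$ of $w^\emptyset$, the $k\times k$ submatrix $W_\emptyset$ is exactly the top block made of rows $1,\ldots,k$, i.e.\ the rows corresponding to the $V_k$ summand of $\HH_{k,n}$. Therefore $\pi_\emptyset(w^\emptyset)=\det W_\emptyset\neq 0$ is equivalent to the linear projection $p_{V_k}\colon w^\emptyset\to V_k$ along $V^*_{n-k}$ being an isomorphism.

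Second, inverting this isomorphism and setting $\tilde w_j:=p_{V_k}^{-1}(e_{-j})$ for $j=1,\ldots,k$ produces the unique basis of $w^\emptyset$ whose $V_k$-projection reproduces $(e_{-1},\ldots,e_{-k})$. Writing $\tilde w_j=e_{-j}+u_j$ with $u_j\in V^*_{n-k}$, the assignment $e_{-j}\mapsto u_j$ extends linearly to a well-defined map $M(w^\emptyset)\colon V_k\to V^*_{n-k}$, and $w^\emptyset$ is then the image of the linear injection $v\mapsto v+M(w^\emptyset)(v)$. Uniqueness of the representation is automatic: any linear map $\tilde\epsilon\colon V_k\to\HH_{k,n}$ sending $e_{-j}$ to a vector of the form $e_{-j}+(\text{something in }V^*_{n-k})$ whose image equals $w^\emptyset$ must satisfy $p_{V_k}\circ\tilde\epsilon=\mathrm{id}_{V_k}$, hence $\tilde\epsilon(e_{-j})=\tilde w_j$.

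Finally, I convert bases on $V^*_{n-k}$. Expanding $u_j=\sum_{i=1}^{n-k}M_{ij}(w^\emptyset)\,e^*_{-i}$ in the basis dual to $(e_{-1},\ldots,e_{-n+k})$, the pairing (\ref{eq:dualizing_basis}) gives $e_{i-1}(e_{-j})=(-1)^{i-1}\delta_{ij}$, so that $e^*_{-i}=(-1)^{i-1}e_{i-1}$ inside $V^*_{n-k}\subset\HH_{k,n}$. Substituting yields $u_j=\sum_{i=1}^{n-k}M_{ij}(w^\emptyset)(-1)^{i-1}e_{i-1}$, which matches the stated formula for $\epsilon_{w^\emptyset}$. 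The only delicate step is this sign conversion between the two natural bases of $V^*_{n-k}$; apart from it, the argument is a direct unwinding of the definitions, and I do not anticipate any substantive obstacle.
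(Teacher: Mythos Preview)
Your proposal is correct and follows essentially the same argument as the paper's proof: both identify $\pi_\emptyset$ with the determinant of the top $k\times k$ block of the homogeneous coordinate matrix, invert that block (the paper by right-multiplying by its inverse, you by inverting the projection $p_{V_k}$) to obtain the unique basis of $w^\emptyset$ with identity $V_k$-component, and then invoke the dual-basis sign convention (\ref{eq:dualizing_basis}) to account for the $(-1)^{i-1}$ factors. Your version is phrased slightly more geometrically, but the content is identical.
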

\begin{proof}
It follows from formulae (\ref{eq:piLDefDet}), (\ref{eq:FiniteLThroughLambda}) and (\ref{eq:piTildepiLambda})
that $\pi_\emptyset$ corresponds to the determinant of the first $k\times k$ block of the homogeneous coordinate matrix
corresponding to any basis.
Multiplying the homogeneous coordinate matrix on the right by the inverse of this block
places the identity matrix in the top  $k\times k$ block, and the matrix appearing as
the coefficients in the sum in (\ref{eq:affine_basis_graph}), which is the affine coordinate matrix,
as the remaining $n-k \times k$ block. (The sign factors $(-1)^{j-1}$  follow from the
dual basis labelling convention (\ref{eq:dualizing_basis}).)
\end{proof}

The matrix elements $\{M_{ij}(w^\emptyset)\}_{i=1, \cdots n-k,\,  j=1, \dots, k}$
are uniquely expressed in terms of the Pl\"ucker coordinates corresponding to {\em hook partitions} $[i,1^{j-1}]=(i-1|j-1)$
as follows
\begin{equation}
M_{ij}(w^\emptyset)=\frac{\pi_{(i-1|j-1)}}{\pi_\emptyset ^\emptyset} (w^\emptyset)\qquad\textrm{for}\qquad 1\leq i\leq n-k,\quad 1\leq j\leq k.
\label{eq:AffineCoordinatesGrkn}
\end{equation}
By the generalized Giambelli identity (\cite{HB},  Sec.~4.10 and App.~C.8), the remaining Pl\"ucker coordinates
$\pi_{\lambda}(w^\emptyset)$ on the big cell,
corresponding to partitions $\lambda=(\ab |\bb)$ of Frobenius rank $r>1$,  can be expressed as determinants
of  $r \times r$ submatrices of the affine coordinate matrix via the formula  (\cite{HB},  App.C.8)
\begin{equation}
\frac{\pi_{\lambda}}{\pi_\emptyset}(w^\emptyset) =\det\left(\frac{\pi_{(a_i|b_j)}}{\pi_\emptyset}(w^\emptyset)\right)_{1\leq i,j\leq r}.
\label{eq:DeterminantFormulaPluckerCoordinatesFinite}
\end{equation}

The  big cell $\Gr^\emptyset_{V_k}(\HH_{k,n})$ may be viewed
as an affine variety with  polynomial coordinate ring generated by the affine coordinate matrix elements
\begin{equation}
\OO(\Gr^\emptyset_{V_k}(\HH_{k,n}))=\Cbb\big[M_{ij}\;\big|\; 1\leq i\leq n-k,\;1\leq j\leq k\big].
\end{equation}
\begin{definition}
For  $k, n \in \Nbb^+$, with $0\leq k\leq n$,  define the homomorphism
\be
\xi_{k,n}:\SS_{k,n}\rightarrow \Cbb[\pi_\emptyset,\pi_\emptyset^{-1}]\big[\pi_{(i-1|j-1)}\,\big|\,1\leq i\leq n-k,\;1\leq j\leq k\big],
\ee
by its values on the generators of the polynomial ring  $\SS_{k,n}$
\be
\xi_{k,n}:\pi_\lambda\mapsto\pi_\emptyset\,\det\left(\frac{\pi_{(a_i|b_j)}}{\pi_\emptyset}\right)_{1\leq i,j\leq r},
\label{eq:XiknMapDefinition}
\ee
where $r$ is the Frobenius rank of the partition $\lambda=(\ab|\bb)$.
\label{def:XiknMap}
\end{definition}

\begin{lemma}
$\xi_{k,n}$ annihilates all Pl\"ucker quadratic forms $\{p_{I,J}\}$ in the coordinate ring $\SS_{k,n}$, so
\begin{equation}
\II_{k,n}\subset\ker \xi_{k,n}.
\end{equation}
\label{lemm:XiknAnnihilatesPlucker}
\end{lemma}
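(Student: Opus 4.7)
The plan is to interpret $\xi_{k,n}$ geometrically as evaluation at the ``universal'' point of the big cell, and to reduce the claim to the fact that the Plücker relations hold identically for minors of any $n\times k$ matrix. Concretely, I would consider the localized polynomial ring
\[
R_{k,n}:=\Cbb[\pi_\emptyset,\pi_\emptyset^{-1}]\big[\pi_{(i-1|j-1)}\,\big|\,1\leq i\leq n-k,\;1\leq j\leq k\big]
\]
and build a single $n\times k$ matrix $\Wb_{\mathrm{univ}}$ with entries in $R_{k,n}$ that mimics the homogeneous coordinate matrix of an element of the big cell. Namely, following Lemma \ref{lemm:CoordinatesOnTheBigCellFinite}, take the first $k\times k$ block to be $\pi_\emptyset\cdot I_k$ and the bottom $(n-k)\times k$ block to have entries $(-1)^{i-1}\pi_{(i-1|j-1)}$. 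The $k\times k$ minors of $\Wb_{\mathrm{univ}}$ then define elements of $R_{k,n}$, and the Plücker relations (\ref{eq:PluckerRelationIndices}) will hold for these minors as polynomial identities in $R_{k,n}$, because they hold for the $k\times k$ minors of any $n\times k$ matrix over any commutative ring (this is the classical Plücker identity, proved by a standard Laplace-expansion argument).

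First, I would verify that for every partition $\lambda=(\ab|\bb)\in\mathbf\Lambda_{k,n}$ of Frobenius rank $r$, the $\lambda$-indexed minor $\widetilde\pi_{L(\lambda)}(\Wb_{\mathrm{univ}})$ equals $\xi_{k,n}(\pi_\lambda)$. For hook partitions $\lambda=(i-1|j-1)$ this is a direct $k\times k$ determinant computation, using that $\pi_\emptyset$ factors out of all but one row and that the remaining row contributes $(-1)^{i-1}\pi_{(i-1|j-1)}$; the sign arising from the dual pairing (\ref{eq:dualizing_basis}) precisely cancels the $(-1)^{i-1}$. For general $\lambda$ of Frobenius rank $r$, the generalized Giambelli identity cited in \cite{HB}, App.~C.8 (formula (\ref{eq:DeterminantFormulaPluckerCoordinatesFinite})), expresses the $\lambda$-minor as
\[
\widetilde\pi_{L(\lambda)}(\Wb_{\mathrm{univ}}) \;=\; \pi_\emptyset\cdot\det\!\left(\frac{\pi_{(a_i|b_j)}}{\pi_\emptyset}\right)_{1\leq i,j\leq r},
\]
which matches (\ref{eq:XiknMapDefinition}).

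Once this identification is in place, the conclusion is immediate: for any multi-indices $(I_1,\dots,I_{k-1})$ and $(J_1,\dots,J_{k+1})$ the Plücker quadratic form $p_{I,J}$ evaluated on the minors of $\Wb_{\mathrm{univ}}$ vanishes identically in $R_{k,n}$, and by the previous step this evaluation coincides with $\xi_{k,n}(p_{I,J})$. Hence $\xi_{k,n}(p_{I,J})=0$ for every generator of $\II_{k,n}$, so $\II_{k,n}\subset\ker\xi_{k,n}$.

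The main technical point, and the only nontrivial step, is the Giambelli identification in the second paragraph: one has to confirm that the signs inserted in Lemma \ref{lemm:CoordinatesOnTheBigCellFinite} (arising from the dual basis convention (\ref{eq:dualizing_basis})) exactly match those in (\ref{eq:DeterminantFormulaPluckerCoordinatesFinite}), so that the universal minor agrees with $\xi_{k,n}(\pi_\lambda)$ with no residual sign. Once this bookkeeping is checked, the proof reduces to a standard universal argument that requires no further input beyond classical results on minors of matrices.
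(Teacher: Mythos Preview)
Your proposal is correct and is essentially the same argument as the paper's, just phrased more concretely. The paper factors $\xi_{k,n}$ (up to the overall $\pi_\emptyset$ factor) as the composite $\chi_{k,n}\circ\zeta_{k,n}$, where $\zeta_{k,n}:\SS_{k,n}(\Gr_{V_k}(\HH_{k,n}))\to\OO(\Gr_{V_k}^\emptyset(\HH_{k,n}))=\Cbb[M_{ij}]$ is the restriction to the big cell and $\chi_{k,n}$ sends $M_{ij}\mapsto\pi_{(i-1|j-1)}/\pi_\emptyset$; since $\zeta_{k,n}$ is defined on the quotient by $\II_{k,n}$, the Pl\"ucker forms are automatically killed. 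Your universal matrix $\Wb_{\mathrm{univ}}$ is exactly the homogeneous coordinate matrix whose affine part is the generic matrix $(M_{ij})$ after the substitution $M_{ij}=\pi_{(i-1|j-1)}/\pi_\emptyset$, and your appeal to the Giambelli identity and to the universal Pl\"ucker identity for minors is precisely what underlies the well-definedness of $\zeta_{k,n}$ on the quotient. The two presentations differ only in whether one invokes the geometric inclusion $\Gr_{V_k}^\emptyset(\HH_{k,n})\hookrightarrow\Gr_{V_k}(\HH_{k,n})$ or writes out the universal matrix explicitly; the content is the same.
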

\begin{proof}
Viewing the big cell as an open submanifold of the full Grassmannian
\begin{equation}
\Gr_{V_k}^\emptyset(\HH_{k,n}) \hookrightarrow \Gr_{V_k}(\HH_{k,n})
\end{equation}
gives rise to a surjective homomorphism of their coordinate rings
\begin{equation}
\begin{aligned}
\zeta_{k,n}:\SS_{k,n}(\Gr_{V_k}(\HH_{k,n}))& \ra \; \OO(\Gr_{V_k}^\emptyset(\HH_{k,n}))\\
\pi_\lambda&\mapsto\; \det M_{(\ab |\bb)},
\label{eq:MapToCoordinateRingOfTheAffineBigCellFinite}
\end{aligned}
\end{equation}
where $M_{(\ab|\bb)}$ denotes the  submatrix of the affine coordinate matrix (\ref{eq:AffineCoordinatesGrkn})
with row set $(a_1,\dots,a_r)$ and column set $(b_1,\dots,b_r)$ given by Frobenius indices of
the partition $\lambda=(\ab|\bb)$. By construction, the homomorphism $\zeta_{k,n}$
 annihilates all Pl\"ucker relations generating $\II_{k,n}$.

Formula (\ref{eq:AffineCoordinatesGrkn}) also determines an injective ring homomorphism
\begin{equation}
\chi_{k,n}:\quad\OO(\Gr_{V_k}^\emptyset(\HH_{k,n}))\hookrightarrow \Cbb[\pi_\emptyset,\pi_\emptyset^{-1}][\pi_{(i-1|j-1)}\,|\,1\leq i\leq n-k,\;1\leq j\leq k]
\label{eq:CoordinateRingOfBigCellToLocalizationFinite}
\end{equation}
generated by
\be
\chi_{k,n}: M_{ij} \mapsto \frac{\pi_{(i-1|j-1)}} {\pi_\emptyset}.
\ee
Up to the common $\pi_\emptyset$ factor,  the composite map $\chi_{k,n}\circ \zeta_{k, n}$ acts on the generators $\pi_\lambda$ 
in exactly the same way as $\xi_{k,n}$ in (\ref{eq:XiknMapDefinition}). 
Therefore all Pl\"ucker quadratic forms are annihilated by the map $\xi_{k, n}$.
\end{proof}

Similarly to the inclusions (\ref{eq:Homomorphisms_iota_CommutativeDiagram}) of coordinate rings,
we have a system of inclusions of coordinate rings of the corresponding affine big cells
\begin{subequations}
\bea
\tilde\iota_k:&\&\ \OO(\Gr_{V_k}^\emptyset(\HH_{k,n}))\rightarrow\OO(\Gr_{V_{k+1}}^{\emptyset}(\HH_{k+1,n+1})),\\
\tilde\iota^n:&\&\ \OO(\Gr_{V_k}^\emptyset(\HH_{k,n}))\rightarrow\OO(\Gr_{V_k}^\emptyset(\HH_{k,n+1})),
\eea
\label{eq:TildeIota_kn_definition}
\end{subequations}
where
\begin{equation}
\tilde\iota_k(M_{ij})=M_{ij}=\tilde\iota^n(M_{ij})\quad\forall \ (i,j) \mid \ 1\leq i\leq n-k, \ 1\leq j\leq k
\end{equation}
defines a pair of inclusions taking the affine coordinate matrix into  the larger ones.
\begin{lemma}
For every pair of integers $(k,n)$ with  $0\leq k\leq n$, we have the following commutative 
cubic diagram of ring homomorphisms.
\begin{equation}
\adjustbox{scale=0.95,center}{
\begin{tikzcd}[nodes=overlay,row sep=4em, column sep=10em]
&\OO(\Gr_{V_k}^\emptyset(\HH_{k,n}))\ar[rr,hookrightarrow,"\tilde\iota^n"] \ar[dd,hookrightarrow,"\tilde\iota_k" near start]
&&\OO(\Gr_{V_k}^\emptyset(\HH_{k,n+1})) \ar[dd,hookrightarrow,"\tilde\iota_k"]\\
\SS_{k,n}(\Gr_{V_k}(\HH_{k,n})) \ar[dd,hookrightarrow,"\iota_k"] \ar[rr,hookrightarrow,pos=0.35,"\iota^n",crossing over] \ar[ru,twoheadrightarrow,"\zeta_{k,n}"]&& \SS_{k,n+1}(\Gr_{V_k}(\HH_{k,n+1})) \ar[ru,twoheadrightarrow,"\zeta_{k,n+1}"']\\
&\OO(\Gr_{V_{k+1}}^\emptyset(\HH_{k+1,n+1}))\ar[rr,hookrightarrow,"\iota^{n+1}" pos=0.3]&&\OO(\Gr_{V_{k+1}}^\emptyset(\HH_{k+1,n+2}))\\
\SS_{k+1,n+1}(\Gr_{V_{k+1}}(\HH_{k+1,n+1})) \ar[rr,hookrightarrow,"\iota^{n+1}"] \ar[ru,twoheadrightarrow,"\zeta_{k+1,n+1}" pos=0.6]&& \SS_{k+1,n+2}(\Gr_{V_{k+1}}(\HH_{k+1,n+2})) \ar[ru,twoheadrightarrow,pos=0.6,"\zeta_{k+1,n+2}"'] \ar[from=uu,hookrightarrow,near start,"\iota_k",crossing over]
\end{tikzcd}
}
\label{eq:ThreeDimensionalDiagramInclusions}
\end{equation}
\label{lemm:BigCellInclusionComaptibility}
\end{lemma}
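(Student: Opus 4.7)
The plan is to verify that each face of the cubic diagram \eqref{eq:ThreeDimensionalDiagramInclusions} commutes. The back face — the four $\SS$-rings connected by the $\iota$-homomorphisms — is exactly the commutative square \eqref{eq:Homomorphisms_iota_CommutativeDiagram} established in the preceding corollary, so no further argument is needed there. The remaining five faces split into two groups: the front face relating the four $\OO$-rings by the $\tilde\iota$ maps, and the four side squares, each of which intertwines a pair of $\iota$ and $\tilde\iota$ maps through the corresponding $\zeta$ maps.

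For the front face I would check commutativity directly on generators. Since both $\tilde\iota_k$ and $\tilde\iota^n$ send each matrix element $M_{ij}$ to the element denoted by the same symbol in the enlarged affine coordinate ring (see \eqref{eq:TildeIota_kn_definition}), the two composites $\tilde\iota_k \circ \tilde\iota^n$ and $\tilde\iota^{n+1} \circ \tilde\iota_k$ agree on every generator $M_{ij}$, and hence, being ring homomorphisms, on all of $\OO(\Gr_{V_k}^\emptyset(\HH_{k,n}))$.

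For each of the four side faces, it suffices to check commutativity on the Pl\"ucker generators $\pi_\lambda$, $\lambda \in \Lambda_{k,n}$. Take the side face associated to $\iota^n$: one needs $\tilde\iota^n \circ \zeta_{k,n}(\pi_\lambda) = \zeta_{k,n+1} \circ \iota^n(\pi_\lambda)$. By Definition \ref{def:XiknMap} and the Giambelli formula \eqref{eq:DeterminantFormulaPluckerCoordinatesFinite}, both composites equal $\det\left(M_{a_i+1,\,b_j+1}\right)_{1\le i,j\le r}$ viewed in the larger ring $\OO(\Gr_{V_k}^\emptyset(\HH_{k,n+1}))$, where $(\ab|\bb)$ are the Frobenius indices of $\lambda$ and $r$ is its Frobenius rank. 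These agree because for $\lambda\in\Lambda_{k,n}$ the arm indices satisfy $a_i+1\le n-k$, so the determinant uses only matrix entries common to both affine coordinate matrices. The argument for the side face involving $\iota_k$ is entirely analogous: the embedding $w\mapsto e_{-k-1}\oplus w$ inserts a column of zeros at position $k+1$ in the larger affine matrix, but for $\lambda\in\Lambda_{k,n}$ the leg indices satisfy $b_j+1\le k$, so the Giambelli determinant is again unaffected.

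The only real bookkeeping, and the main place I expect to concentrate care, is tracking Frobenius index ranges so as to ensure that partitions $\lambda\in\Lambda_{k,n}$ index matrix entries that are admissible for the affine coordinate matrices of both the smaller and the larger Grassmannian, and to check the agreement of the four side-face identities on generators. Once all six faces have been verified, the cubic diagram of ring homomorphisms commutes.
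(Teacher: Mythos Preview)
Your proposal is correct and follows essentially the same approach as the paper: verify commutativity of each face by tracking generators, the $\pi_\lambda$'s on the $\SS$-side and the $M_{ij}$'s on the $\OO$-side. The paper's proof is a single sentence (``follow the path of the generator $\pi_\lambda$ along any face''), so your write-up is in fact considerably more explicit. One small correction: the map whose action on $\pi_\lambda$ you need is $\zeta_{k,n}$ from \eqref{eq:MapToCoordinateRingOfTheAffineBigCellFinite}, not $\xi_{k,n}$ from Definition~\ref{def:XiknMap}; the Giambelli determinant content you use is correct, but the citation should point to the definition of $\zeta_{k,n}$.
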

\begin{proof}
Take any partition $\lambda=(\ab|\bb)\in\mathbf\Lambda_{k,n}$ and follow the path of the corresponding generator $\pi_\lambda\in\SS_{k,n}$ along any face of the above cubic diagram.
\end{proof}

\subsection{Nested finite dimensional Lagrangian Grassmannians}
\label{nested_finite_lagrange}

Let $k\in\Nbb^+$ be a positive integer. Define the {\em Lagrangian Grassmannian}
\begin{equation}
\Gr_{V_k}^{\LL}(\HH_{k,2k},\omega_k)\subset \Gr_{V_k}(\HH_{k,2k})
\end{equation}
as the subvariety consisting of maximal isotropic subspaces of the
$2k$-dimensional space $\HH_{k,2k}$ with respect to the symplectic form
\be
\omega_k:=\omega|_{\HH_{k,2k}} = \sum_{i=0}^{k-1}(-1)^i  e^*_{i}\wedge e^*_{-i-1}
\ee

Equivalently, denote by
\be
\hat{\omega}_{k} = \sum_{i=0}^{k-1}(-1)^i\psi_{i}\psi_{-i-1}\ \in \End(\Lambda(\HH_{k,2k})),
\label{eq:OperatorOmega_{2k}}
\ee
where
\be
 \psi_i := e_i\wedge \in \End(\Lambda(\HH_{k,2k})), \quad \psi^\dag_i := i_{e^*_i} \in \End(\Lambda(\HH_{k, 2k})), \quad j \in \Zbb,
\ee
the quadratic Clifford algebra element obtained by taking the exterior product with $\omega_k$:
\bea
   \hat{\omega}_{k}:\Lambda(\HH_{k,2k})&\& \ra \Lambda(\HH_{k,2k}) \cr
   \hat{\omega}_{k} :  \mu &\& \mapsto \omega_{k}\swedge \mu.
   \label{eq:OperatorOmega_2k}
\eea
Then $w^0$ is in the Lagrangian Grassmannian $ \Gr_{V_k}^{\LL}(\HH_{k,2k}) $ if and only if its image $\grP^{2k}_k(w^0)$
under the Pl\"ucker map is in the kernel of the restriction of  $\hat{\omega}_{2k}$ to $\Lambda^k(V_k)$
\begin{equation}
\grP_{k,2k}(w^0)=\big[\sum_{\lambda\in\mathbf\Lambda_{k,2k}}\pi_\lambda|\lambda\rangle\big] \in \ker\hat\omega_k\big|_{\Lambda^k(\HH_{k,2k})}.
\label{eq:LagrangianConditionKernelOmega}
\end{equation}
Condition (\ref{eq:LagrangianConditionKernelOmega}) is equivalent
to imposing the system of linear relations
\be
  \hat\omega_k\big(\sum_{\lambda\in\mathbf\Lambda_{k,2k}}\pi_\lambda|\lambda\rangle\big)=0,
  \label{eq: lagrange_cond_finite}
\ee
 on the Pl\"ucker coordinates, one for each basis element of $\hat\omega(\Lambda^{k-2}(\HH_{k,2k}))$.
(See \cite{AHH} , Proposition 2.3 for more details.)

Inclusion of the Lagrangian Grassmannian $\Gr_{V_k}^\LL(\HH_{k,2k},\omega_k)$ in $ \Gr_{V_{k+1}}^\LL(\HH_{k+1,2k+2},\omega_{k+1})$
 is compatible with the corresponding inclusion of the full Grassmannians.
 For every $k\in\Nbb^+$ we thus have a commutative diagram of inclusions
\begin{equation}
\begin{tikzcd}
\Gr_{V_k}^\LL(\HH_{k,2k},\omega_k)\ar[rr,hookrightarrow,"i_k^\LL"]\ar[dd,hookrightarrow,"\ell_k"]&& \Gr_{V_{k+1}}^\LL(\HH_{k+1,2k+2},\omega_{k+1})\ar[dd,hookrightarrow,"\ell_{k+1}"]\\\\
\Gr_{V_k}(\HH_{k,2k})\ar[rr,hookrightarrow,"i_k^{2k}"]&&\Gr_{V_{k+1}}(\HH_{k+1,{2k+2}})
\end{tikzcd}
\end{equation}
where
 \be
 i_k^{2k}:=i_k\circ i^{2k}
 \ee
 is defined by composition,
$i_k^\LL$ is its restriction to $\Gr_{V_k}^\LL(\HH_{k,2k}, \omega_k) \ss \Gr_{V_k}(\HH_{k, 2k})$, 
\be
i_k^\LL := i_k^{2k}\big|_{\Gr_{V_k}^\LL(\HH_{k,2k}, \omega_k)},
\ee
and
\be
\ell_k:\Gr_{V_k}^\LL(\HH_{k,2k},\omega_k) \hookrightarrow \Gr_{V_k}(\HH_{k,2k},\omega_k)
\ee
 is inclusion. 
At the level of homogeneous coordinate rings, by (\ref{eq:istarAction}), we get
\bea
\left(i_k^\LL\right)^*:\mathcal S_{k+1,2k+2}(\Gr_{V_{k+1}}^\LL(\HH_{k+1,2k+2},\omega_{k+1}))&\&\twoheadrightarrow \mathcal S_{k,2k}(\Gr_{V_k}^\LL(\HH_{k,2k},\omega_k))\cr
\left(i_k^\LL\right)^*:(\pi_\lambda)&\& \mapsto
\begin{cases}
\pi_\lambda \quad \text{ if } \lambda\in \mathbf{\Lambda}_{k,2k},\\
0 \quad  \ \,  \text{ if } \lambda \notin \mathbf{\Lambda}_{k,2k},
\end{cases}\\\
\forall \;\lambda &\&\in\mathbf{\Lambda}_{k+1,2k+2}.
\nonumber
\eea

Denote the ideal of $\mathcal S_{k,2k}$ generated by the Pl\"ucker relations and
Lagrange relations (\ref{eq: lagrange_cond_finite})  as $\II_k^\LL$.
The Lagrangian Grassmannian is a projective variety with homogeneous coordinate ring
\begin{equation}
\SS_{k,2k}(\Gr_{V_k}^{\LL}(\HH_{k,2k},\omega))=\mathcal S_{k,2k}/\II_k^\LL.
\label{eq:IkLDefinition}
\end{equation}
For $k \in \Nbb^+$, define the injective maps
$\iota_k^{2k}:\SS_{k,2k}\rightarrow\SS_{k+1,2k+2}$ by composition:
\be
\iota_k^{2k}:= \iota_k \circ \iota^{2k}.
\ee
\begin{lemma}
The defining ideal of (\ref{eq:IkLDefinition}) is preserved under the inclusions of the polynomial rings
\begin{equation}
\iota_k^{2k}:\SS_{k,2k}\rightarrow\SS_{k+1,2k+2},\quad \iota_k^{2k}(\II_k^\LL)\subset\II_{k+1}^\LL,
\label{iota_k_def}
\end{equation}
which induces a homomorphism of the quotient rings
\begin{equation}
\iota_{k}^\LL:\mathcal S_{k,2k}(\Gr_{V_k}^\LL(\HH_{k,2k},\omega_k)) \hookrightarrow\mathcal S_{k+1,2k+2}(\Gr_{V_{k+1}}^\LL(\HH_{k+1,2k+2},\omega_{k+1}))
\label{eq:CoordinateRingInclusionLagrangianFinite}
\end{equation}
satisfying
\begin{equation}
\left(i_k^\LL\right)^*\circ\iota_k^\LL=\mathrm{Id}_{\mathcal S_{k,2k}(\Gr_{V_k}^\LL(\HH_{k,2k}),  \omega_k)}.
\end{equation}
\end{lemma}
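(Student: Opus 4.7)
The plan is to verify the ideal inclusion $\iota_k^{2k}(\II_k^\LL)\subset\II_{k+1}^\LL$ generator by generator, separating the quadratic Pl\"ucker relations (\ref{eq:PluckerRelationIndices}) from the linear Lagrange relations (\ref{eq: lagrange_cond_finite}). The Pl\"ucker half is immediate from Lemma \ref{lemm:InclusionOfIkn}: since $\iota_k^{2k}=\iota_k\circ\iota^{2k}$, two applications of that lemma give $\iota_k^{2k}(\II_{k,2k})\subset\II_{k+1,2k+2}\subset\II_{k+1}^\LL$, so all Pl\"ucker generators of $\II_k^\LL$ land in $\II_{k+1}^\LL$.

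For the Lagrange half, I would index the linear generators of $\II_k^\LL$ by basis elements $|\nu\rangle$ of $\Lambda^{k+2}(\HH_{k,2k})$, writing $R_{k,\nu}:=\sum_{\lambda\in\mathbf\Lambda_{k,2k}}c^{(k)}_{\lambda,\nu}\pi_\lambda$ for the coefficient of $|\nu\rangle$ in $\hat\omega_k\bigl(\sum_\lambda\pi_\lambda|\lambda\rangle\bigr)$, and analogously for relations $R_{k+1,\mu}$ in the larger ring. The target is to prove $\iota_k^{2k}(R_{k,\nu})=R_{k+1,\nu'}$ for the distinguished basis element $|\nu'\rangle:=|\nu\rangle\wedge e_{-k-1}$ of $\Lambda^{k+3}(\HH_{k+1,2k+2})$. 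Using the decomposition $\omega_{k+1}=\omega_k+(-1)^k e_k\wedge e_{-k-1}$, I would split the sum defining $R_{k+1,\nu'}$ according to whether $\lambda'\in\mathbf\Lambda_{k,2k}$ or $\lambda'\in\mathbf\Lambda_{k+1,2k+2}\setminus\mathbf\Lambda_{k,2k}$. For $\lambda'\in\mathbf\Lambda_{k,2k}$, the corresponding basis vector in $\Lambda^{k+1}(\HH_{k+1,2k+2})$ factors as $|\lambda\rangle\wedge e_{-k-1}$; the $e_k\wedge e_{-k-1}$ part of $\omega_{k+1}$ is then killed by the repeated $e_{-k-1}$, and the remaining $\omega_k\wedge|\lambda\rangle\wedge e_{-k-1}$ contributes coefficient $c^{(k)}_{\lambda,\nu}$ on $|\nu'\rangle$, reproducing $\iota_k^{2k}(R_{k,\nu})$.

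The main obstacle is to show that the remaining $\lambda'\notin\mathbf\Lambda_{k,2k}$ contribute nothing to $R_{k+1,\nu'}$. The decisive combinatorial point is that $|\nu\rangle$ lies in $\Lambda^{k+2}(\HH_{k,2k})$, so it contains neither $e_{-k-1}$ nor $e_k$, and hence $|\nu'\rangle$ contains $e_{-k-1}$ but not $e_k$. For any $\lambda'\in\mathbf\Lambda_{k+1,2k+2}\setminus\mathbf\Lambda_{k,2k}$, either $\lambda'_1=k+1$ (so $e_k$ appears in $|\lambda'\rangle$) or $\lambda'_{k+1}\ge 1$ (so $e_{-k-1}$ does not). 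In the first case the $e_k\wedge e_{-k-1}$ term vanishes through a repeated $e_k$, while the remaining $\omega_k\wedge|\lambda'\rangle$ still carries $e_k$, which prevents any match with $|\nu'\rangle$. In the second case, $\omega_k\wedge|\lambda'\rangle$ introduces no $e_{-k-1}$, and $(-1)^k e_k\wedge e_{-k-1}\wedge|\lambda'\rangle$ introduces $e_k$, so again no contribution to the $|\nu'\rangle$-coefficient survives. This establishes $R_{k+1,\nu'}=\iota_k^{2k}(R_{k,\nu})$ in $\SS_{k+1,2k+2}$, completing the verification of (\ref{iota_k_def}).

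With the ideal inclusion in hand, the induced map $\iota_k^\LL$ on the quotient rings (\ref{eq:CoordinateRingInclusionLagrangianFinite}) is automatic. The identity $(i_k^\LL)^*\circ\iota_k^\LL=\mathrm{Id}$ then reduces to a check on generators: both $\iota_k^\LL$ and $(i_k^\LL)^*$ act as the identity on each $\pi_\lambda$ with $\lambda\in\mathbf\Lambda_{k,2k}$, mirroring the corresponding identity for the full Grassmannians already used in Lemma \ref{lemm:BigCellInclusionComaptibility}.
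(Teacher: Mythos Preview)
The paper states this lemma without proof, so there is no reference argument to compare against. Your proof is correct and supplies precisely the verification the paper omits: the Pl\"ucker half follows from two applications of Lemma~\ref{lemm:InclusionOfIkn}, and your treatment of the Lagrange generators --- matching the relation $R_{k,\nu}$ to the relation $R_{k+1,\nu'}$ with $|\nu'\rangle=|\nu\rangle\wedge e_{-k-1}$ via the decomposition $\hat\omega_{k+1}=\hat\omega_k+(-1)^k\psi_k\psi_{-k-1}$ --- is the natural combinatorial argument. The case analysis for $\lambda'\in\mathbf\Lambda_{k+1,2k+2}\setminus\mathbf\Lambda_{k,2k}$ is sound; note only that the two cases $\lambda'_1=k+1$ and $\lambda'_{k+1}\ge1$ are not mutually exclusive, but your reasoning (tracking the presence or absence of $e_k$ and $e_{-k-1}$ in $|\lambda'\rangle$ versus $|\nu'\rangle$) goes through regardless of which or both hold. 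The final identity $(i_k^\LL)^*\circ\iota_k^\LL=\mathrm{Id}$ is indeed immediate on generators, and also yields injectivity of $\iota_k^\LL$, justifying the $\hookrightarrow$ in (\ref{eq:CoordinateRingInclusionLagrangianFinite}).
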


The big cell of the Lagrangian Grassmannian is defined as the intersection
\begin{equation}
\Gr_{V_k}^{\LL,\emptyset}(\HH_{k,2k},\omega_k):= \Gr_{V_k}^{\LL}(\HH_{k,2k},\omega_k)\cap\Gr_{V_k}^{\emptyset}(\HH_{k,2k}).
\end{equation}
On $\Gr_{V_k}^{\LL,\emptyset}(\HH_{k,2k},\omega_k)$, the Lagrangian condition (\ref{eq: lagrange_cond_finite}) 
 is equivalent to the the affine coordinate matrix being symmetric:
 \begin{equation}
M_{ij}=M_{ji}\quad 1\leq i, j\leq k.
\label{eq:LagrangianConditionBigCellFinite}
\end{equation}
The big cell  $\Gr_{V_k}^{\LL,\emptyset}(\HH_{k,2k},\omega_k)$ is 
thus the same as the variety of symmetric $k\times k$ matrices, with polynomial coordinate ring
\begin{equation}
\mathcal O(\Gr_{V_k}^{\LL,\emptyset}(\HH_{k,2k},\omega_k)=\Cbb[M_{ij},\,|\,1\leq i\leq j\leq k].
\end{equation}
Similarly to (\ref{eq:MapToCoordinateRingOfTheAffineBigCellFinite}), for all $k\in\Nbb^+$ we have a surjective ring homomorphism
\be
\zeta_k^\LL:\SS_{k,2k}(\Gr_{V_k}^\LL(\HH_{k,2k},\omega_k))\twoheadrightarrow\mathcal O(\Gr_{V_k}^{\LL,\emptyset}(\HH_{k,2k},\omega_k)),
\ee
generated by
\be
\zeta_k^\LL(\pi_\lambda)=\det\Big(M_{\min(a_i,b_j), \max(a_i,b_j)}\Big)_{1\leq i,j\leq r},
\ee
where $r=r(\lambda)$ is the Frobenius rank of partition $\lambda\in\mathbf\Lambda_{k,2k}$.

For $n=2k$, the diagonal slice of diagram (\ref{eq:ThreeDimensionalDiagramInclusions}) bounded by
the upper left edge and  lower right edge is mapped surjectively to its Lagrangian counterpart. 
We therefore have
\begin{lemma}
For all $k\in\Nbb^+$ the following diagram is commutative
\begin{equation}
\adjustbox{scale=0.95,center}{
\begin{tikzcd}[nodes=overlay,row sep=4.2em, column sep=10em]
&\mathcal O(\Gr_{V_k}^\emptyset(\HH_{k,2k})) \arrow[rr,hookrightarrow,
"\tilde\iota_k^{2k}"] \arrow[dd,twoheadrightarrow,pos=0.22,"\tilde\ell_k^*"] &&\mathcal O(Gr_{V_{k+1}}^\emptyset(\HH_{k+1,2k+1})) \arrow[dd,twoheadrightarrow,pos=0.22,"\tilde\ell_{k+1}^*"]\\
\mathcal S_{k,2k}(\Gr_{V_k}(\HH_{k,2k})) \arrow[rr,crossing over,hookrightarrow,pos=0.3,"\iota_k^{2k}"] \arrow[ru,twoheadrightarrow,"\zeta_{k,2k}"] \arrow[dd,twoheadrightarrow,pos=0.22,"\ell_k^*"] &&\mathcal S_{k+1,2k+1}(\Gr_{V_{k+1}}(\HH_{k+1,2k+2})) \arrow[ru,twoheadrightarrow,"\zeta_{k+1,2k+2}"]\\
&\mathcal O(\Gr_{V_k}^{\mathcal L,\emptyset}(\mathcal H_{k,2k},\omega_k)) \arrow[rr,hookrightarrow,pos=0.3,"\tilde\iota_k^\LL"]&& \mathcal O(\Gr_{V_{k+1}}^{\mathcal L,\emptyset}(\mathcal H_{k+1,2k+2},\omega_{k+1}))\\
\mathcal S_{k,2k}(\Gr_{V_k}^\LL(\HH_{k,2k},\omega_k)) \arrow[ru,twoheadrightarrow,"\zeta_k^\LL"] \arrow[rr,hookrightarrow,
"\iota_k^\LL"]&& \mathcal S_{k+1,2k+2}(\Gr_{V_{k+1}}^{\LL}(\HH_{k+1,2k+2},\omega_{k+1})) \arrow[ru,twoheadrightarrow,pos=0.45,"\zeta_{k+1}^\LL"] \arrow[uu,twoheadleftarrow,crossing over,pos=0.78,"\ell_{k+1}^*"']
\end{tikzcd}
}
\label{eq:ThreeDimensionalDiagramLagrangian}
\end{equation}
\label{lemm:ThreeDimensionalDiagramLagrangian}
\end{lemma}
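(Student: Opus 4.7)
The plan is to verify commutativity of the cube face-by-face by tracking the image of a single generator, i.e.\ a Pl\"ucker coordinate $\pi_\lambda$ in a ring of type $\SS$ or a matrix entry $M_{ij}$ in a ring of type $\mathcal O$, around each of the six faces, and then invoking the explicit formulas (\ref{eq:XiknMapDefinition}), (\ref{eq:DeterminantFormulaPluckerCoordinatesFinite}), (\ref{eq:LagrangianConditionBigCellFinite}), and the definition of $\iota_k^\LL$, $\tilde\iota_k^\LL$, $\zeta_k^\LL$ given above the statement. The six faces split into: (i) the \emph{top} face (four vertices carrying full-Grassmannian data: two $\SS_{k,2k}(\Gr_{V_k}(\HH_{k,2k}))$-type rings and two $\mathcal O(\Gr^\emptyset_{V_k}(\HH_{k,2k}))$-type rings), (ii) the \emph{bottom} face (the four Lagrangian analogues), (iii)--(iv) the two \emph{back} faces consisting of the big-cell-to-big-cell and $\SS$-to-$\SS$ rectangles connecting the two horizontal layers via $\tilde\ell_k^*$ and $\ell_k^*$, and (v)--(vi) the two \emph{side} rectangles involving the $\zeta$'s and the $\ell$'s.

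The top face is exactly the restriction of diagram (\ref{eq:ThreeDimensionalDiagramInclusions}) of Lemma \ref{lemm:BigCellInclusionComaptibility} to $n=2k$, so no further argument is needed there. The two back rectangles and the two side rectangles commute essentially \emph{by construction}: the maps $\iota_k^\LL$ and $\tilde\iota_k^\LL$ are defined as the factorizations of $\iota_k^{2k}$ and $\tilde\iota_k^{2k}$ through the defining ideals of the Lagrangian subvarieties (the Lagrangian Pl\"ucker ideal $\II_k^\LL$ in one case and the symmetry relations $M_{ij}=M_{ji}$ in the other), so on generators $\ell_{k+1}^*\circ\iota_k^{2k}$ equals $\iota_k^\LL\circ\ell_k^*$ and $\tilde\ell_{k+1}^*\circ\tilde\iota_k^{2k}$ equals $\tilde\iota_k^\LL\circ\tilde\ell_k^*$. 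Similarly, the side rectangles commute because $\zeta_k^\LL$ is defined so that the composite $\zeta_{k,2k}$ followed by the projection $\tilde\ell_k^*$ (which imposes $M_{ij}=M_{ji}$) factors through the Lagrangian quotient; explicitly, the determinant $\det(M_{a_i b_j})_{ij}$ becomes $\det(M_{\min(a_i,b_j),\max(a_i,b_j)})_{ij}$ after symmetrization.

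The only face that genuinely demands attention is the \emph{bottom} one, namely the square connecting the two Lagrangian big cells and the two Lagrangian coordinate rings via $\iota_k^\LL$, $\tilde\iota_k^\LL$, $\zeta_k^\LL$, $\zeta_{k+1}^\LL$. Take a generator $\pi_\lambda\in\SS_{k,2k}(\Gr_{V_k}^\LL(\HH_{k,2k},\omega_k))$ with Frobenius indices $\lambda=(\ab|\bb)$. Both paths around the bottom square send it to a determinant in the affine coordinate entries of the larger Lagrangian big cell; commutativity reduces to the observation that the inclusion $\tilde\iota_k^\LL$ simply embeds a $k\times k$ symmetric matrix as the principal block of a $(k+1)\times(k+1)$ symmetric matrix, so $\det(M_{\min(a_i,b_j),\max(a_i,b_j)})$ is the same expression whether it is computed before or after the inclusion.

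I expect the main (modest) obstacle to be the bookkeeping required to confirm that each vertical arrow $\tilde\ell_k^*$, $\ell_k^*$ really is well-defined at the level of quotient rings, which is where one uses that the preimage of the Lagrangian symmetry condition on the big cell contains the ideal $\II_k^\LL$ generated by the Pl\"ucker relations together with (\ref{eq: lagrange_cond_finite}). Once this is in place, no further computation is needed: every face-chase reduces to matching two expressions in the same polynomial generators, and the cube closes.
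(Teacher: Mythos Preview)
Your proposal is correct and follows essentially the same approach the paper takes: verifying commutativity by tracking generators $\pi_\lambda$ (or $M_{ij}$) around each face and invoking the explicit formulas for $\zeta_{k,2k}$, $\zeta_k^\LL$, $\iota_k^{2k}$, $\iota_k^\LL$, $\tilde\iota_k^{2k}$, $\tilde\iota_k^\LL$, $\ell_k^*$, $\tilde\ell_k^*$. The paper in fact gives no proof block at all for this lemma, treating it as an immediate consequence of the constructions together with the observation (stated just before the lemma) that the top face is the $n=2k$ diagonal slice of the cube~(\ref{eq:ThreeDimensionalDiagramInclusions}); your face-by-face bookkeeping simply makes explicit what the paper leaves to the reader.
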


\section{Infinite Grassmannians and direct limits}
\label{sec:Infinite_Grassmannians_KP_tau}

 We start by defining the homogeneous coordinate ring $\SS(\Gr_{\HH_+}(\HH))$ of the infinite
 dimensional Grassmannian $\Gr_{\HH_+}(\HH)$ as a direct limit of finite dimensional ones, and  view
 $\Gr_{\HH_+}(\HH)$ as the homogeneous spectrum of $\SS(\Gr_{\HH_+}(\HH))$.
 
\subsection{Direct limits of coordinate rings}

Recall first some standard definitions regarding {\em directed sets}, {\em direct systems} and {\em direct limits} \cite{Maclane}.

\begin{definition}
Let $\grI$ be a set and $\preceq\,\subset\,\grI\times\grI$  a binary relation on $\grI$. A pair $(\grI,\preceq)$
is called a \textit{directed set} if it has the following properties
\begin{enumerate}
\item  $a\preceq a$ for all $a\in\grI$ (reflexivity)
\item For all triples $a,b,c\in\grI$,  if $a\preceq b$ and $b\preceq c$ then $a\preceq c$ (transitivity)
\item For every pair $a,b\in\grI$ there exists an element $c\in\grI$ such that $a\preceq c$ and $b\preceq c$ (common successor).
\end{enumerate}
\label{def:DirectedSet}
\end{definition}

The set of finite-dimensional Grassmannians
\begin{equation}
\grG=\left\{\Gr_{V_k}(\HH_{k,n})\;|\;0\leq k\leq n\right\}_{n\in \Nbb^+}
\end{equation}
forms a directed set $(\grG,\preceq)$ with respect to the following partial ordering
\begin{equation}
\Gr_{V_k}(\HH_{k,n})\preceq \Gr_{V_k'}(\HH_{k', n'})\quad \textrm{iff}\ k\leq k' \ \textrm{and}\  n-k\leq n'-k'.
\end{equation}

\begin{definition}
Let $(\grI,\preceq)$ be a directed set and $\RR_{\grI}=\{\RR_a,\;a\in\grI\}$  a collection of rings labelled by elements of $\grI$
 such  that for every pair $a\preceq b$ we have a ring homomorphism
\begin{equation}
\varphi_{a,b}:\;\RR_a\rightarrow\RR_b.
\end{equation}
Then $(\grI,\preceq,\RR_\grI,\varphi)$ is called a \textit{direct system} of ring homomorphisms over $\JJ$ if
\begin{enumerate}
\item $\varphi_{a,a}=Id_{\RR_a}$ for all $a\in\grI$,
\item $\varphi_{b,c}\circ\varphi_{a,b}=\varphi_{a,c}$ whenever $a\preceq b\preceq c$.
\end{enumerate}
\end{definition}

For the case $\grG$ of  Grassmannians, where the indexing set consists of pairs of integers $(k,n) $, $0\le k\le n$,
it follows from the commutative diagram (\ref{eq:Homomorphisms_iota_CommutativeDiagram}) that
the homomorphisms $\iota_k$ and $\iota^n$ in (\ref{eq:Homomorphisms_iota_CommutativeDiagram}) 
generate a direct system of injective ring homomorphisms
\begin{equation}
\varphi_{(k,n),(k',n')}:\quad\SS_{k,n}(\Gr_{V_k}(\HH_{k,n}))\hookrightarrow\SS_{k',n'}(\Gr_{V_{k'}}(\HH_{k',n'}))
\label{eq:DirectedSystemOfRingHomomorphismsChi}
\end{equation}
for all pairs $\Gr(k,n)\preceq\Gr(k',n')$.
\label{lemm:DirectedSystemOfCoordinateRings}
\begin{definition}
Let $(\grI,\preceq,\RR_{\grI},\varphi)$ be a direct system of ring homomorphisms.
We say that a pair $(\RR,\rho:=\{\rho_a\}_{a\in \JJ})$ consisting of a ring $\RR$ and a family of homomorphisms
\begin{equation}
\rho_a:\quad\RR_a\rightarrow\RR\qquad\textrm{for all}\quad a\in\grI
\end{equation}
is a \textit{homomorphism from the direct system $(\grI,\preceq,\RR_{\grI},\varphi)$ to the ring $R$} if
\begin{equation}
\rho_b=\varphi_{a,b}\circ\rho_a\qquad\textrm{for all}\quad a\preceq b.
\end{equation}
\end{definition}

Let $\mathbf\Lambda$  denote the  set of all partitions $\lambda$ (of any weight $|\lambda|$
and length $\ell(\lambda)$) and
\begin{equation}
\SS:=\Cbb[\pi_{\lambda}\;|\;\lambda\in\mathbf\Lambda]
\end{equation}
 the polynomial ring in infinitely many indeterminates $\{\pi_\lambda\}$ labelled by partitions.
Consider the ideal $\II\ss \SS$ generated by all Pl\"ucker relations
\begin{equation}
\II:=\Big\langle\bigcup_{0\leq k\leq n}\II_{k,n}\Big\rangle\subset\SS.
\label{eq:Definition_II}
\end{equation}

\begin{lemma}
We have a homomorphism $(\SS/\II,\sigma)$ from the direct system (\ref{eq:DirectedSystemOfRingHomomorphismsChi})
to the ring $\SS/\II$, generated by
\bea
\sigma_{k,n}:\SS_{k,n}(\Gr_{V_{k}}(\HH_{k,n})) &\&\ra\SS/\II,\cr
\sigma_{k,n}:\pi_\lambda&\& \mapsto\pi_\lambda.
\eea
\end{lemma}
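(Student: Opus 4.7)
The plan is to verify two things: first, that each candidate map $\sigma_{k,n}:\SS_{k,n}(\Gr_{V_k}(\HH_{k,n}))\to \SS/\II$ is a well-defined ring homomorphism, and second, that the family $\{\sigma_{k,n}\}$ satisfies the compatibility condition $\sigma_{k,n}=\sigma_{k',n'}\circ\varphi_{(k,n),(k',n')}$ for every pair $(k,n)\preceq(k',n')$.

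For well-definedness, observe that $\mathbf\Lambda_{k,n}\subset\mathbf\Lambda$, so there is a tautological inclusion of polynomial rings $\SS_{k,n}\hookrightarrow\SS$ sending each generator $\pi_\lambda$ to the namesake indeterminate in $\SS$. Composing with the canonical projection $\SS\twoheadrightarrow\SS/\II$ produces a ring homomorphism $\tilde\sigma_{k,n}:\SS_{k,n}\to\SS/\II$. To descend this to the quotient $\SS_{k,n}(\Gr_{V_k}(\HH_{k,n}))=\SS_{k,n}/\II_{k,n}$, I need $\II_{k,n}\subset\ker\tilde\sigma_{k,n}$. But $\II$ was defined in (\ref{eq:Definition_II}) precisely as the ideal generated by $\bigcup_{0\leq k\leq n}\II_{k,n}$ viewed inside $\SS$, so every Pl\"ucker quadratic form generating $\II_{k,n}$ lies in $\II$, and hence is sent to zero. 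This produces the required induced homomorphism $\sigma_{k,n}$.

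For the compatibility condition, recall that the direct system maps $\varphi_{(k,n),(k',n')}$ in (\ref{eq:DirectedSystemOfRingHomomorphismsChi}) arise by iterating the basic inclusions $\iota_k$ and $\iota^n$, which by their definition act as the identity on the labels of the Pl\"ucker generators: $\iota_k(\pi_\lambda)=\pi_\lambda$ and $\iota^n(\pi_\lambda)=\pi_\lambda$. Consequently $\varphi_{(k,n),(k',n')}(\pi_\lambda)=\pi_\lambda\in\SS_{k',n'}(\Gr_{V_{k'}}(\HH_{k',n'}))$ for every $\lambda\in\mathbf\Lambda_{k,n}$. Both $\sigma_{k,n}$ and $\sigma_{k',n'}\circ\varphi_{(k,n),(k',n')}$ therefore send $\pi_\lambda$ to the class of $\pi_\lambda$ in $\SS/\II$, and since the $\pi_\lambda$'s generate $\SS_{k,n}(\Gr_{V_k}(\HH_{k,n}))$ as a $\Cbb$-algebra, the two ring homomorphisms coincide.

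There is no real obstacle here; the content of the lemma is bookkeeping, verifying that the definitions of $\II$ and of the transition maps $\iota_k,\iota^n$ have been arranged precisely so that one obtains a cocone over the direct system. The only point requiring a small amount of care is the distinction between the generator $\pi_\lambda$ viewed inside $\SS_{k,n}$, inside $\SS_{k',n'}$ and inside $\SS$: all three are identified by the tautological inclusions, so the three occurrences of ``$\pi_\lambda$'' in the statement and proof refer consistently to the same indeterminate, and the diagrams in (\ref{eq:Homomorphisms_iota_CommutativeDiagram}) together with Lemma \ref{lemm:InclusionOfIkn} ensure that nothing is lost upon passing to the quotients.
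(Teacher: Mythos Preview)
Your proposal is correct. The paper states this lemma without proof, treating it as an immediate consequence of the definitions of $\II$ and of the transition maps; your argument supplies precisely the routine verification (well-definedness via $\II_{k,n}\subset\II$, compatibility via the tautological action of $\iota_k,\iota^n$ on generators) that the paper leaves implicit.
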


\begin{definition}
Let $(\grI,\preceq,\RR_\grI,\varphi)$ be a direct system of homomorphisms.
We say a homomorphism $(\RR,\rho)$ from this system to a ring $\RR$
is the \textit{direct limit} of $\RR_{\grI}$
if it satisfies the following {\em universal property}:

For every homomorphism $(\RR',\rho')$ from this system to a ring $\RR$,
there exists a unique ring homomorphism $\upsilon: \RR\rightarrow\RR'$
which makes the following diagram commutative
\begin{equation}
\begin{tikzcd}
\RR_a \ar[dd,"\rho_a"] \ar[rrdd,"\rho'_a"]\\\\
\RR \ar[rr,dashed,"\upsilon"]&& \RR'
\end{tikzcd}
\end{equation}
for all $a\in\grI$.
\end{definition}
It is immediate from the definition that, when it exists, the direct limit is unique up to an isomorphism. When there is no ambiguity, we omit everything but the rings themselves and simply write
\begin{equation}
\RR=\lim_{\longrightarrow}\RR_a .
\end{equation}

\begin{proposition}
\begin{equation}
\SS/\II=\lim_{\longrightarrow}\,\SS_{k,n}(\Gr_{V_k}(\HH_{k,n})) .
\label{eq:DirectLimitRing}
\end{equation}
\label{prop:DirectLimitCoordinateRing}
\end{proposition}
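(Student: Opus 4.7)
The plan is to verify that the pair $(\SS/\II,\sigma)$ satisfies the universal property defining the direct limit. Given an arbitrary target pair $(\RR',\rho'=\{\rho'_{k,n}\})$ compatible with the direct system, I will construct a unique ring homomorphism $\upsilon:\SS/\II\rightarrow\RR'$ such that $\upsilon\circ\sigma_{k,n}=\rho'_{k,n}$ for every admissible $(k,n)$.

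The construction proceeds in two steps. First I define a homomorphism $\tilde\upsilon:\SS\rightarrow\RR'$ out of the free polynomial ring by prescribing its values on generators. Given a partition $\lambda\in\mathbf\Lambda$, I pick any pair $(k,n)$ with $\lambda\in\mathbf\Lambda_{k,n}$ (for instance $k=\ell(\lambda)$ and $n=\lambda_1+\ell(\lambda)$) and set $\tilde\upsilon(\pi_\lambda):=\rho'_{k,n}(\pi_\lambda)$, where on the right $\pi_\lambda$ is viewed as an element of $\SS_{k,n}(\Gr_{V_k}(\HH_{k,n}))$. The key point is that this value does not depend on the choice of $(k,n)$: any two admissible pairs $(k,n)$ and $(k',n')$ have a common successor $(k'',n'')\in\grG$ by directedness, and the compatibility relations $\rho'_{k'',n''}\circ\varphi_{(k,n),(k'',n'')}=\rho'_{k,n}$ and $\rho'_{k'',n''}\circ\varphi_{(k',n'),(k'',n'')}=\rho'_{k',n'}$, together with the fact that the generating injections $\iota_k,\iota^n$ send $\pi_\lambda\mapsto\pi_\lambda$, force $\rho'_{k,n}(\pi_\lambda)=\rho'_{k'',n''}(\pi_\lambda)=\rho'_{k',n'}(\pi_\lambda)$.

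Second, I will show that $\tilde\upsilon$ descends to $\SS/\II$, i.e.\ that $\tilde\upsilon(\II)=0$. Since $\II$ is generated as an ideal by $\bigcup_{0\leq k\leq n}\II_{k,n}$, it suffices to annihilate each Pl\"ucker quadratic form $p_{I,J}\in\II_{k,n}$. Such a relation is a polynomial in finitely many generators $\pi_\lambda$ with $\lambda\in\mathbf\Lambda_{k,n}$; on these generators $\tilde\upsilon$ coincides with $\rho'_{k,n}$, which by hypothesis factors through the quotient $\SS_{k,n}/\II_{k,n}=\SS_{k,n}(\Gr_{V_k}(\HH_{k,n}))$, so $\tilde\upsilon(p_{I,J})=\rho'_{k,n}(p_{I,J})=0$. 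The induced ring homomorphism $\upsilon:\SS/\II\rightarrow\RR'$ then satisfies $\upsilon\circ\sigma_{k,n}=\rho'_{k,n}$ by construction. Uniqueness is immediate, because any such $\upsilon$ is pinned down on each generator $\pi_\lambda\in\SS/\II$ by this commutativity requirement applied to any $(k,n)$ with $\lambda\in\mathbf\Lambda_{k,n}$.

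The only step requiring genuine care is the well-definedness of $\tilde\upsilon(\pi_\lambda)$ across different admissible choices of $(k,n)$, where the directedness axiom and the compatibility of $\rho'$ with the transition maps both come into play essentially; once this is settled, everything else is formal, resting on the freeness of $\SS$ as a polynomial ring on indeterminates indexed by all of $\mathbf\Lambda$ and on the explicit generation of $\II$ by Pl\"ucker relations drawn from the finite-dimensional slices.
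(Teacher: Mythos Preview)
Your proof is correct and follows essentially the same approach as the paper: both verify the universal property by first defining a map $\tilde\upsilon:\SS\rightarrow\RR'$ on generators via $\pi_\lambda\mapsto\rho'_{k,n}(\pi_\lambda)$, then checking it annihilates $\II$ because each $\II_{k,n}$ is killed by $\rho'_{k,n}$. You are in fact slightly more careful than the paper in explicitly verifying, via directedness and the compatibility of $\rho'$ with the transition maps, that the value $\tilde\upsilon(\pi_\lambda)$ is independent of the choice of admissible $(k,n)$.
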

\begin{proof}
We need to show that $(\SS/\II,\sigma)$ satisfies the universal property. Suppose there is another
homomorphism $(\RR',\sigma')$. If there is a ring homomorphism $\upsilon:\SS/\II\rightarrow\RR'$ such that
\begin{equation}
\sigma'_{k,n}=\upsilon\circ\sigma_{k,n}\qquad\textrm{for all}\quad 0\leq k\leq n,
\label{eq:UpsilonSigma_kn_Commutativity}
\end{equation}
then for all $0\leq k\leq n$ we have
\begin{equation}
\upsilon(\pi_\lambda)=\rho'_{k,n}(\pi_\lambda)\qquad\textrm{for all}\quad \lambda\in\mathbf\Lambda_{k,n}.
\label{eq:UpsilonActionOnGeneratorsPiLambda}
\end{equation}

The action on the generators (\ref{eq:UpsilonActionOnGeneratorsPiLambda}) defines a homomorphism $\upsilon:\SS\rightarrow\RR'$ of the polynomial ring which annihilates $\II_{k,n}$ for all $0\leq k\leq n$. Therefore $\upsilon$ annihilates $\II$
and hence induces a homomorphism of the quotient ring $\upsilon:\SS/\II\rightarrow R'$ satisfying (\ref{eq:UpsilonSigma_kn_Commutativity}).
\end{proof}

\begin{definition}
The \textit{infinite dimensional Grassmannian} is the homogeneous spectrum of the direct limit ring (\ref{eq:DirectLimitRing})
\begin{equation}
\Gr_{\HH_+}(\HH):=\mathrm{Proj}(\SS/ \II).
\end{equation}
\end{definition}
By construction, the coordinate ring of the infinite dimensional Grassmannian is
\begin{equation}
\SS(\Gr_{\HH_+}(\HH)):=\SS/\II=\lim_{\longrightarrow} S_{k,n}(\Gr_{V_k}(\HH_{k,n})).
\end{equation}

\begin{proposition}
The ideal $\II\subset\SS$ is prime, so $\SS / \II$ is an integral domain.
\label{prop:SOverI_IntegralDomain}
\end{proposition}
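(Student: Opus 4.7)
The plan is to combine two ingredients: (i) each finite-stage quotient $\SS_{k,n}/\II_{k,n}$ is an integral domain, because it is the homogeneous coordinate ring of the irreducible projective variety $\Gr_{V_k}(\HH_{k,n})$, and the Plücker relations $\II_{k,n}$ generate the full homogeneous ideal of the Plücker embedding (classical; the Grassmannian is a homogeneous space of the connected group $\GL(\HH_{k,n})$, hence irreducible); (ii) by Proposition \ref{prop:DirectLimitCoordinateRing}, $\SS/\II$ is the direct limit of the rings $\SS_{k,n}/\II_{k,n}$ along the injective homomorphisms induced by $\iota_k$ and $\iota^n$ in (\ref{eq:DirectedSystemOfRingHomomorphismsChi}). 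A direct limit of integral domains along injective transition maps is always an integral domain, and the proof is to verify this directly for our setting.

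Concretely, I would proceed as follows. Take $f,g \in \SS$ with $fg \in \II$. Since $f$ and $g$ are polynomials, each involves only finitely many indeterminates $\pi_\lambda$. Choose $(k,n)$ large enough that every such $\lambda$ belongs to $\mathbf\Lambda_{k,n}$, i.e.\ $k \geq \ell(\lambda)$ and $n-k \geq \lambda_1$ for all relevant $\lambda$. Then $f$ and $g$ lie in $\iota(\SS_{k,n})$, where $\iota:\SS_{k,n}\hookrightarrow\SS$ denotes the tautological inclusion; call the (unique, by injectivity) preimages $\tilde f,\tilde g \in \SS_{k,n}$.

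Next, write the hypothesis $fg \in \II$ as a finite sum $fg = \sum_i r_i p_i$ with each $p_i$ a Plücker generator of some $\II_{k_i,n_i}$. By Lemma \ref{lemm:InclusionOfIkn}, each $p_i$ belongs to $\iota(\II_{k'',n''})$ for all sufficiently large $(k'',n'')$; applying the common-successor property of the directed set $(\grG,\preceq)$ finitely many times, we may enlarge $(k,n)$ so that all the coefficients $r_i$ and all the generators $p_i$ simultaneously lie in $\iota(\SS_{k,n})$ and $\iota(\II_{k,n})$ respectively. Injectivity of $\iota$ then yields $\tilde f\tilde g \in \II_{k,n}$ inside $\SS_{k,n}$. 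Since $\SS_{k,n}/\II_{k,n}$ is an integral domain by (i), either $\tilde f \in \II_{k,n}$ or $\tilde g \in \II_{k,n}$, and applying $\iota$ we conclude that $f \in \II$ or $g \in \II$, so $\II$ is prime and $\SS/\II$ is an integral domain.

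The substantive step is the bookkeeping in the middle paragraph: showing that an identity $fg = \sum_i r_i p_i$ in $\SS$ can be reconstituted inside a single finite stage $\SS_{k,n}$. This is purely formal but requires the directedness of $(\grG,\preceq)$, the injectivity of the transition maps $\iota_k,\iota^n$ established via the retractions $(i_k)^*\circ\iota_k=\mathrm{Id}$ and $(i^n)^*\circ\iota^n=\mathrm{Id}$, and the stability statement in Lemma \ref{lemm:InclusionOfIkn} that the Plücker relations are preserved under enlargement of $(k,n)$. The only genuinely external input is the classical fact that each finite Grassmannian is irreducible with defining ideal generated by the Plücker relations, which can be cited from any standard source on the Plücker embedding.
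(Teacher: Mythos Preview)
Your proof is correct and follows essentially the same approach as the paper's: reduce to a finite stage $(k,n)$ large enough that $f,g\in\SS_{k,n}$ and $fg\in\II_{k,n}$, then invoke primality of $\II_{k,n}$. The paper's version is terser (it compresses your middle paragraph into a one-line appeal to (\ref{eq:Definition_II}) and (\ref{eq:InclusionOfIkn})), but the argument is the same.
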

\begin{proof}
Let $a,b\in\SS$ be any pair of elements such that $ab\in\II$. From (\ref{eq:Definition_II}) and (\ref{eq:InclusionOfIkn}) it
follows  that there exists a pair of integers $0\leq k\leq n$ large enough such that $ab\in\II_{k,n}$.
But since $\II_{k,n}$ is prime we must have either $a\in\II_{k,n}\subset\II$ or $b\in\II_{k,n}\subset\II$.
\end{proof}

\begin{remark}
Relevant homogeneous maximal ideals of $\SS(\Gr_{\HH_+}(\HH))$ are in  $1-1$ correspondence
with solutions to the full set of Pl\"ucker relations (i.e.~closed points).
Here,  ``relevant'' means all ideals except the one generated by all elements of positive degree.
The reason is that in affine charts the {\em Nullstellensatz} still holds.
\end{remark}

 \begin{remark}
The above construction is different from the so-called Ind-variety (\cite{Kum}) where one takes the inverse limit of the coordinate rings or, equivalently, the direct limit of the varieties. Whereas our coordinate ring $\SS/\II$ consists of equivalence classes modulo $\II$ of finite polynomials in the Pl\"ucker coordinates, the coordinate ring $\widehat{\SS/\II}$ of an Ind-variety is obtained by completion of the ring with respect to the filtration induced by an ascending chain of Grassmannians $\dots\preceq\Gr(n,2n)\preceq \Gr(n+1,2n+2)\preceq\dots$. For example, the following formal sum belongs only to the coordinate ring of an Ind-variety
\begin{equation}
\sum_{i=0}^{\infty}\pi_{(i)}\in ( \widehat{\SS/\II}) \big\backslash(\SS/ \II).
\end{equation}

On the other hand, points in the Ind-variety always belong to arbitrarily large but finite Grassmannians, while points of $\Gr_{\HH_+}(\HH)$ are solutions of all the infinitely many Pl\"ucker relations with possibly infinitely many nonzero Pl\"ucker coordinates. 
For example
\be
\pi_{\lambda}=
\begin{cases}
n^n \ \text{ if } \lambda=(n)\\
0 \quad  \text{ otherwise}
\end{cases}
\ee
defines a point in $\Gr_{\HH_+}(\HH)$ which does not belong to any finite Grassmannian.
\end{remark}

\subsection{Big cell of the infinite Grassmannian $\Gr_{\HH_+}(\HH)$}
\label{sec: big_cell_inf}

The homomorphisms $\tilde\iota_k$ and $\tilde\iota^n$ in (\ref{eq:TildeIota_kn_definition}) generate a direct system of ring homomorphisms.
\begin{definition}
The \textit{big cell of the infinite dimensional Grassmannian} is the spectrum of the direct limit of coordinate rings of
the finite dimensional big cells
\begin{equation}
\Gr_{\HH_+}^\emptyset(\HH):= \mathrm{Spec}\,\lim_{\longrightarrow}\,\OO(\Gr_{V_k}(\HH_{k,n})).
\end{equation}
\end{definition}
It follows that the coordinate ring of the big cell
\begin{equation}
\mathcal O(\Gr_{\HH_+}^\emptyset(\HH))=\lim_{\longrightarrow}\,\OO(\Gr_{V_k}(\HH_{k,n})) =\Cbb[M_{i,j}\;|\;i,j\in\Nbb^+]
\label{eq:AffineCoodinateRing}
\end{equation}
is a polynomial ring in countably many variables labelled by pairs of nonnegative integers. 
\begin{remark}
Since the cardinality of the base field $|\Cbb|>|\Nbb^+\times\Nbb^+|$ is greater than the cardinality of the generating set for (\ref{eq:AffineCoodinateRing}), the set of closed points in $\Gr_{\HH_+}^\emptyset(\HH)$ is in one-to-one correspondence with 
$\Nbb^+\times\Nbb^+$ matrices of complex numbers. . Note, however, that this does not require any finiteness or 
convergence conditions on the elements, so they are not necessarily matrix representations
of bounded or continuous linear operators in the functional analytic sense.
\end{remark}

The  following are the infinite dimensional counterparts of Definition \ref{def:XiknMap} and Lemma \ref{lemm:XiknAnnihilatesPlucker}.
\begin{definition}
Let
\begin{equation}
\begin{aligned}
\xi:\quad\SS& \ra \Cbb[\pi_\emptyset,\pi_\emptyset^{-1}][\pi_{(i-1|j-1)}\,|\,i,j\in\Nbb^+], \\[0.3em]
\pi_\lambda& \mapsto \pi_\emptyset\,\det\left(\frac{\pi_{(a_i|b_j)}}{\pi_\emptyset}\right)_{1\leq i,j\leq r}.
\label{eq:XiMapDefinition}
\end{aligned}
\end{equation}
 be the homomorphism defined by its action on the generators of the polynomial ring,
where $r$ is the Frobenius rank of the partition $\lambda=(\ab|\bb)$.
\label{def:XiMap}
\end{definition}
\begin{lemma}
All Pl\"ucker quadratic forms in $\OO(Gr_{\HH_+}(\HH))$ are annihilated by $\xi$, so
\begin{equation}
\II\subset\ker \xi.
\label{eq:ISubsetKerXi}
\end{equation}
\label{lemm:XiAnnihilatesPlucker}
\end{lemma}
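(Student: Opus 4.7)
The plan is to reduce the infinite-dimensional statement to its finite-dimensional counterpart, Lemma \ref{lemm:XiknAnnihilatesPlucker}. The key observation is that the homomorphism $\xi$ of Definition \ref{def:XiMap} is, in a precise sense, the direct limit of the finite-dimensional homomorphisms $\xi_{k,n}$ of Definition \ref{def:XiknMap}, so annihilation of each $\II_{k,n}$ by $\xi_{k,n}$ will propagate to annihilation of $\II$ by $\xi$.

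First I would verify the compatibility statement that for every pair $(k,n)$ with $0\leq k\leq n$, the restriction of $\xi$ to the subring $\SS_{k,n}\hookrightarrow \SS$ factors as
\begin{equation}
\xi\big|_{\SS_{k,n}} = \jmath_{k,n}\circ\xi_{k,n},
\end{equation}
where $\jmath_{k,n}$ is the tautological inclusion of the finite localized polynomial ring $\Cbb[\pi_\emptyset,\pi_\emptyset^{-1}][\pi_{(i-1|j-1)}\,|\,1\leq i\leq n-k,\;1\leq j\leq k]$ into the infinite one $\Cbb[\pi_\emptyset,\pi_\emptyset^{-1}][\pi_{(i-1|j-1)}\,|\,i,j\in\Nbb^+]$. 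This is immediate from comparing (\ref{eq:XiknMapDefinition}) and (\ref{eq:XiMapDefinition}), since both send a generator $\pi_\lambda$ with Frobenius indices $(\ab|\bb)$ to the same determinantal expression built from hook-partition generators, and the hook indices $(a_i|b_j)$ involved lie within the finite range $\{0,\dots,n-k-1\}\times\{0,\dots,k-1\}$ as soon as $\lambda\in\mathbf\Lambda_{k,n}$.

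Next, by Lemma \ref{lemm:XiknAnnihilatesPlucker} we have $\II_{k,n}\subset\ker\xi_{k,n}$ for every $(k,n)$. Combined with the factorization above and the injectivity of $\jmath_{k,n}$, this yields $\II_{k,n}\subset\ker\xi$ for each $(k,n)$. Since by definition (\ref{eq:Definition_II}),
\begin{equation}
\II = \Big\langle\bigcup_{0\leq k\leq n}\II_{k,n}\Big\rangle,
\end{equation}
every element of $\II$ is a finite $\SS$-linear combination of Pl\"ucker quadratic forms $p_{I,J}$, and each individual $p_{I,J}$ lies in $\II_{k,n}$ for some sufficiently large pair $(k,n)$ (determined by the ranges of the multi-indices $I$ and $J$). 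Because $\xi$ is a ring homomorphism and annihilates every such generator, it annihilates the ideal they generate, giving $\II\subset\ker\xi$.

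The argument is entirely formal once the compatibility step is in place; the only genuine content is the verification that the determinantal formula used to define $\xi$ on a partition $\lambda$ depends only on the hook generators indexed by the Frobenius coordinates of $\lambda$ and not on the ambient pair $(k,n)$. No significant obstacle is anticipated — this is the straightforward passage from a coherent system of finite-dimensional results to its direct limit, mirroring the construction of $\SS/\II$ in Proposition \ref{prop:DirectLimitCoordinateRing}.
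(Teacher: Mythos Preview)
Your proposal is correct and follows essentially the same approach as the paper: restrict $\xi$ to each $\SS_{k,n}$, observe it agrees with $\xi_{k,n}$ (the paper writes $\xi\big|_{\SS_{k,n}}=\xi_{k,n}$ directly, while you factor through the inclusion $\jmath_{k,n}$), invoke Lemma~\ref{lemm:XiknAnnihilatesPlucker} to get $\II_{k,n}\subset\ker\xi$, and conclude since $\II$ is generated by the union of the $\II_{k,n}$.
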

\begin{proof}
Note that restriction $\xi\big|_{\SS_{k,n}}=\xi_{k,n}$ to the subring $\SS_{k,n}$ coincides with the
homomorphism (\ref{eq:XiknMapDefinition}). Hence, by Lemma \ref{lemm:XiknAnnihilatesPlucker} we have
\begin{equation}
\II_{k,n}\subset\ker\xi\qquad\textrm{for all}\quad 0\leq k\leq n.
\end{equation}
Since  $\II$ is generated by $\II_{k,n}$, it follows that (\ref{eq:ISubsetKerXi}) holds true.
\end{proof}

\subsection{The infinite Lagrangian Grassmannian and its big cell}

\begin{definition}
Let $\II_{\LL}\subset\SS$ denote the homogeneous ideal generated by all
the Pl\"ucker  (\ref{eq:PluckerRelationIndices}),  and Lagrange  (\ref{eq: lagrange_cond_finite}) relations (for all $ k$).
\end{definition}
\begin{proposition}
Consider the direct system of ring homomorphisms $\{ \iota_k^\LL\}_{k\in\Nbb^+}$ defined in (\ref{eq:CoordinateRingInclusionLagrangianFinite}).
 We have
\begin{equation}
\SS/ \II^\LL=\lim_{\longrightarrow}\SS_{k,2k}(\Gr_{V_k}^\LL(\HH_{k,2k})).
\label{eq:DirectLimitRingLagrangian}
\end{equation}
\end{proposition}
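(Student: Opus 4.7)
The plan is to mirror verbatim the proof of Proposition~\ref{prop:DirectLimitCoordinateRing}, since the Lagrangian setting introduces only one additional class of generators --- the Lagrange relations --- which behave identically to the Plücker relations at the level of the ambient ideal. The two facts I would explicitly invoke are: (i)~the inclusions $\iota_k^{2k}(\II_k^\LL)\subset\II_{k+1}^\LL$ of eq.~(\ref{iota_k_def}), ensuring that $\bigl\{\SS_{k,2k}(\Gr^\LL_{V_k}(\HH_{k,2k})),\iota_k^\LL\bigr\}_{k\in\Nbb^+}$ really is a direct system; and (ii)~the fact, based on Lemma~\ref{lemm:InclusionOfIkn}, that every Plücker relation at an arbitrary level $(k,n)$ is transported, through an iterated composition of the $\iota_k$ and $\iota^n$ maps, into $\II_{k',2k'}\subset\II_{k'}^\LL$ where $k'=\max(k,n-k)$.

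First I would define, for each $k\in\Nbb^+$, the map $\sigma_k^\LL:\SS_{k,2k}(\Gr^\LL_{V_k}(\HH_{k,2k}))\rightarrow\SS/\II^\LL$ sending $[\pi_\lambda]\mapsto[\pi_\lambda]$ for $\lambda\in\mathbf\Lambda_{k,2k}$. Well-definedness is immediate from $\II_k^\LL\subset\II^\LL$, since both the Plücker and Lagrange generators of $\II_k^\LL$ appear among the generators of $\II^\LL$. Compatibility $\sigma^\LL_{k+1}\circ\iota_k^\LL=\sigma^\LL_k$ follows from $\iota_k^\LL([\pi_\lambda])=[\pi_\lambda]$. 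This exhibits $(\SS/\II^\LL,\sigma^\LL)$ as a homomorphism out of the direct system.

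To establish the universal property, I would take any other homomorphism $(\RR',\sigma'^\LL)$ from the direct system and construct the unique $\upsilon:\SS/\II^\LL\rightarrow\RR'$ satisfying $\sigma'^\LL_k=\upsilon\circ\sigma^\LL_k$. Uniqueness forces the prescription $\upsilon([\pi_\lambda])=\sigma'^\LL_k([\pi_\lambda])$ for any $k$ with $\lambda\in\mathbf\Lambda_{k,2k}$; independence of the choice of $k$ follows from the compatibility of $\sigma'^\LL$ with the $\iota_k^\LL$. This prescription extends uniquely to a ring homomorphism $\tilde\upsilon:\SS\rightarrow\RR'$ of the polynomial ring on all the $\pi_\lambda$.

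The crux, and the only real --- though modest --- obstacle, is to verify that $\tilde\upsilon$ annihilates $\II^\LL$, which is generated by Plücker relations at \emph{all} pairs $(k,n)$, not merely $n=2k$, together with the Lagrange relations at every $k$. For a Lagrange relation at level $k$, membership in $\II_k^\LL$ immediately gives annihilation by $\sigma'^\LL_k$. For a Plücker relation at level $(k,n)$, I would invoke Lemma~\ref{lemm:InclusionOfIkn} iteratively to transport it into $\II_{k',2k'}\subset\II_{k'}^\LL$ with $k'=\max(k,n-k)$, where $\sigma'^\LL_{k'}$ annihilates it. Hence $\tilde\upsilon$ descends to the required ring homomorphism $\upsilon:\SS/\II^\LL\rightarrow\RR'$, completing the verification of the universal property and therefore the identification $\SS/\II^\LL=\lim_{\longrightarrow}\SS_{k,2k}(\Gr_{V_k}^\LL(\HH_{k,2k}))$.
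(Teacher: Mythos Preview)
Your proposal is correct and follows exactly the approach the paper intends: the paper's own proof reads simply ``Similar to Proposition~\ref{prop:DirectLimitCoordinateRing}.'' You have in fact been more careful than the paper, correctly isolating the one genuine subtlety --- that $\II^\LL$ is generated by Pl\"ucker relations at \emph{all} levels $(k,n)$ while the direct system runs only along the diagonal $(k,2k)$ --- and handling it via Lemma~\ref{lemm:InclusionOfIkn} to transport any such relation into $\II_{k',2k'}\subset\II_{k'}^\LL$ with $k'=\max(k,n-k)$.
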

\begin{proof}
Similar to Proposition \ref{prop:DirectLimitCoordinateRing}.
\end{proof}
\begin{definition}
The infinite dimensional Lagrangian Grassmannian is the homogeneous
spectrum of the  direct limit ring (\ref{eq:DirectLimitRingLagrangian})
\begin{equation}
\Gr_{\HH_+}^\LL(\HH,\omega)):=\mathrm{Proj}(\SS/\II^\LL).
\end{equation}
\end{definition}
By construction, the coordinate ring of the infinite Lagrangian Grassmannian is
\be
\SS(\Gr_{\HH_+}(\HH,\omega))= \SS/\II^\LL.
\ee

\begin{definition}
The \textit{big cell of the infinite Lagrangian Grassmannian} is the spectrum of the direct limit of coordinate rings
of the big cells of finite dimensional Lagrangian Grassmannians.
\begin{equation}
\Gr_{\HH_+}^{\LL,\emptyset}(\HH,\omega):=\Spec\, (\lim_{\longrightarrow}\,\mathcal O(\Gr_{V_k}^{\LL,\emptyset}(\HH_{k,2k},\omega_k))).
\end{equation}
\end{definition}
By construction, the coordinate ring of the big cell of  the infinite Lagrangian Grassmannian
\begin{equation}
\mathcal O(\Gr_{\HH_+}^{\LL,\emptyset}(\mathcal H,\omega))=\Cbb[M_{ij}\,|\,1\leq i\leq j]
\end{equation}
is a polynomial ring in countably many variables. We have a surjective homomorphism from the coordinate ring
of the full big cell to the Lagrangian big cell
\bea
\tilde\ell^*:\OO[\Gr_{\HH_+}^\emptyset(\HH,\omega)]&\& \twoheadrightarrow \OO[\Gr_{\HH_+}^{\LL,\emptyset}(\HH,\omega)]\cr
M_{i,j}&\& \mapsto M_{\min(i,j),\max(i,j)}.
\eea

The Lagrangian analogue of Definition \ref{def:XiMap} and Lemma \ref{lemm:XiAnnihilatesPlucker} is:
\begin{definition}
Define the following homomorphism from the polynomial ring $\SS$
to the ring $\Cbb\big[\pi_\emptyset,\pi_\emptyset^{-1}\big]\big[\pi_{(i-1|j-1)}\;\big|\;1\le i\leq j\big]$ by its action on
the generators
\bea
\xi^\LL: \SS&\& \rightarrow \Cbb\big[\pi_\emptyset,\pi_\emptyset^{-1}\big]\big[\pi_{(i-1|j-1)}\;\big|\;1\le i\leq j\big]\cr
&\& \cr
\xi^\LL: \pi_\lambda &\& \mapsto \pi_\emptyset\; \det\left(\frac{\pi_{(\min(a_i,b_j)| \max(a_i,b_j))}}{\pi_\emptyset}\right)_{1\leq i,j\leq r},
\label{eq:XiLDefinition}
\eea
where $r$ is the Frobenius rank of the partition $\lambda=(\ab|\bb)$.
\end{definition}

\begin{lemma}
The homomorphism $\xi^{\LL}$ annihilates all Pl\"ucker and Lagrange relations in the
 infinite dimensional Lagrangian Grassmannian $\Gr_{\HH_+}^{\LL}(\HH,\omega)$. Hence we have
\begin{equation}
\II_{\LL}\subset\ker\xi^{\LL}.
\end{equation}
\label{lemm:XiLAnnihilatesIL}
\end{lemma}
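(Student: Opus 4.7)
The plan is to adapt the argument of Lemma \ref{lemm:XiAnnihilatesPlucker} to the Lagrangian setting, by factoring $\xi^{\LL}$ through the coordinate ring of the big cell of each finite-dimensional Lagrangian Grassmannian. Fix $k\in\Nbb^+$ and consider the restriction $\xi^{\LL}\big|_{\SS_{k,2k}}$. In analogy with $\chi_{k,n}$ of (\ref{eq:CoordinateRingOfBigCellToLocalizationFinite}), introduce the injective ring homomorphism
\[
\chi_k^{\LL}: \OO(\Gr_{V_k}^{\LL,\emptyset}(\HH_{k,2k},\omega_k)) \hookrightarrow \Cbb[\pi_\emptyset,\pi_\emptyset^{-1}]\big[\pi_{(i-1|j-1)}\,\big|\,1\leq i\leq j\leq k\big],\quad M_{ij}\mapsto \pi_{(i-1|j-1)}/\pi_\emptyset\ \ (i\le j).
\]
Comparing (\ref{eq:XiLDefinition}) with the defining formula for $\zeta_k^{\LL}$ recalled in Section \ref{nested_finite_lagrange}, one sees that the composite $\pi_\emptyset\cdot(\chi_k^{\LL}\circ\zeta_k^{\LL})$ agrees with $\xi^{\LL}\big|_{\SS_{k,2k}}$ on every generator $\pi_\lambda$: both produce the determinant $\det(M_{\min(a_i,b_j),\max(a_i,b_j)})_{1\le i,j\le r}$ (after multiplication by $\pi_\emptyset$ and substitution $M_{ij}=\pi_{(i-1|j-1)}/\pi_\emptyset$), with the same $\min/\max$ symmetrization built into each formula.

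Next, since $\zeta_k^{\LL}$ factors through the quotient $\SS_{k,2k}\twoheadrightarrow\SS_{k,2k}/\II_k^{\LL}=\SS_{k,2k}(\Gr_{V_k}^{\LL}(\HH_{k,2k},\omega_k))$ by construction, it annihilates $\II_k^{\LL}$. Concretely, the Pl\"ucker relations are killed because the minors of any (in particular symmetric) matrix tautologically satisfy them, exactly as in Lemma \ref{lemm:XiknAnnihilatesPlucker}, while the Lagrange relations are killed because the substitution $M_{ij}\mapsto M_{\min(i,j),\max(i,j)}$ is precisely the big-cell form (\ref{eq:LagrangianConditionBigCellFinite}) of (\ref{eq: lagrange_cond_finite}). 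Hence $\xi^{\LL}\big|_{\SS_{k,2k}}(\II_k^{\LL})=0$ for every $k\in\Nbb^+$.

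Finally, by definition $\II^{\LL}$ is the ideal of $\SS$ generated by all Pl\"ucker and Lagrange relations across all $k$, and every such generator lies in some $\II_k^{\LL}\subset\SS_{k,2k}\hookrightarrow\SS$; compatibility of the factorizations across $k$ is provided by the commutative diagram (\ref{eq:ThreeDimensionalDiagramLagrangian}). Therefore $\II^{\LL}\subset\ker\xi^{\LL}$. I expect no significant obstacle beyond the bookkeeping needed to verify the factorization on generators, which is a direct matching of two determinantal expressions indexed by Frobenius coordinates.
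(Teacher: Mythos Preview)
Your proof is correct and follows essentially the same strategy as the paper: factor $\xi^{\LL}\big|_{\SS_{k,2k}}$ through the coordinate ring of the big cell of the finite Lagrangian Grassmannian and use that this composite kills $\II_k^{\LL}$, then take the union over $k$. The only cosmetic difference is that the paper splits the argument in two, handling the Pl\"ucker part via the identity $\xi^{\LL}=\tilde\ell^*\circ\xi$ together with Lemma~\ref{lemm:XiAnnihilatesPlucker}, and the Lagrange part via the diagram identity $\xi^{\LL}\big|_{\SS_{k,2k}}=\zeta_k^{\LL}\circ\ell_k^*$ (the left face of~(\ref{eq:ThreeDimensionalDiagramLagrangian})), whereas you treat both at once through $\chi_k^{\LL}\circ\zeta_k^{\LL}$; the ``up to a $\pi_\emptyset$ factor'' step is justified exactly as in Lemma~\ref{lemm:XiknAnnihilatesPlucker} since $\II_k^{\LL}$ is homogeneous in the first grading.
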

\begin{proof}
From $\xi^\LL=\tilde\ell^*\circ\xi$, and  Lemma \ref{lemm:XiAnnihilatesPlucker}, it follows that $\xi$ annihilates all the Pl\"ucker 
quadratic forms, and so does $\xi^L$.
To prove that $\xi^\LL$ annihilates all Lagrange relations, consider the restriction to the  finitely generated subring
\begin{equation}
\xi^\LL\Big|_{\SS_{k,2k}}=\tilde\ell_k^*\circ\zeta_{k,2k} \;\stackrel{(\ref{eq:ThreeDimensionalDiagramLagrangian})}{=}\; \zeta_l^\LL\circ\ell_k^*,
\label{eq:xiLagrangianFiniteRestriction}
\end{equation}
where we have used the commutativity of the leftmost face in diagram (\ref{eq:ThreeDimensionalDiagramLagrangian}). By construction, $\ell_k^*$ annihilates all Lagrange relations in $\II_k^\LL$, and so does the right hand side of (\ref{eq:xiLagrangianFiniteRestriction}).

Since this argument is valid for all $k\in\Nbb^+$,  it follows that $\xi^\LL$ annihilates all Lagrange
relations in the defining ideal $\II^\LL$ of the infinite dimensional Lagrangian Grassmannian.
\end{proof}


\section{Evaluation of $\tau$-functions and lattices  in $\Gr_{\HH_+}(\HH)$}
\label{sec:lattice_embeddings}


\subsection{Lattice evaluations of KP $\tau$-functions }
\label{sec:lattice_eval_KP_tau}

Now consider two families of indeterminants, $\{x_j\}_{i\in\Zbb}$ and $\{t_i\}_{i\in\Nbb^+}$ labelled by integers and positive integers respectively.
Let $\Cbb[[\xb,\tb]]$ be the ring of formal power series in  $\{x_j, t_i\}_{ j\in\Zbb,   \,  i \in\Nbb^+}$.  Introduce a grading on $\Cbb[[\xb,\tb]]$
by assigning
\begin{equation}
\deg (x_i)=1,\qquad \deg (t_j)=j, \quad i\in\Zbb{>0},\ j\in\Zbb.
\label{eq:CxtGradingGenerators}
\end{equation}

\begin{lemma}
For every  $i\in\Zbb$, the following maps generate  graded ring automorphisms
\begin{equation}
\begin{aligned}
\hat \delta_i:\quad \Cbb[[\xb,\tb]]\rightarrow\Cbb[[\xb,\tb]],\qquad
\left\{
\begin{array}{rcl}
x_i&\mapsto& x_i,\\
t_j&\mapsto& t_j+\dfrac{x_i^j}{j},
\end{array}
\right.
\end{aligned}
\end{equation}
that mutually commute
\begin{equation}
\hat \delta_i\hat \delta_j=\hat \delta_j\hat \delta_i, \quad \forall \ i,j\in\Zbb.
\end{equation}
\label{lemm:ShiftAutomorphismFormalPowerSeries}
\end{lemma}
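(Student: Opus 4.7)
The plan is to break the statement into four easy claims: $\hat\delta_i$ extends uniquely to a ring endomorphism of $\Cbb[[\xb,\tb]]$; this endomorphism is graded; it is an automorphism; and the family commutes.

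First I would verify well-definedness. The generator assignment $x_k \mapsto x_k$ for $k\in\Zbb$ and $t_j \mapsto t_j + x_i^j/j$ for $j\in\Nbb^+$ is degree-preserving with respect to (\ref{eq:CxtGradingGenerators}), because $\deg(x_i^j/j) = j = \deg(t_j)$. Given any formal power series $f = \sum_\alpha c_\alpha m_\alpha$, the naive substitution expands each monomial $m_\alpha$ as a finite sum of monomials of the same total degree. To see that the resulting formal sum is a bona fide element of $\Cbb[[\xb,\tb]]$, I would fix a target monomial $m = \prod_j t_j^{a_j}\prod_k x_k^{b_k}$ and check that only finitely many $m_\alpha$ contribute to its coefficient: for $m_\alpha = \prod_j t_j^{n_j}\prod_k x_k^{c_k}$ to contribute, one must have $c_k = b_k$ for $k\neq i$, $n_j\geq a_j$ for all $j$, and $\sum_j j(n_j-a_j) + c_i = b_i$, which admits only finitely many nonnegative integer solutions. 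Hence $\hat\delta_i(f)$ is well defined coefficient-by-coefficient; by construction it is a $\Cbb$-algebra homomorphism and preserves the grading.

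Next I would construct the inverse. Define $\hat\delta_i^{-1}$ on generators by $x_k \mapsto x_k$ and $t_j \mapsto t_j - x_i^j/j$; the same argument makes this a well-defined graded ring endomorphism. On each generator the compositions $\hat\delta_i\circ\hat\delta_i^{-1}$ and $\hat\delta_i^{-1}\circ\hat\delta_i$ act as the identity, and since a degree-preserving (hence continuous in the degree filtration) ring endomorphism of $\Cbb[[\xb,\tb]]$ is determined by its action on the generators, the compositions equal the identity on the whole ring.

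Finally, for the commutation relation I would compute on generators: $(\hat\delta_i\hat\delta_j)(x_k) = x_k = (\hat\delta_j\hat\delta_i)(x_k)$ for every $k$, while
\begin{equation}
(\hat\delta_i\hat\delta_j)(t_k) \;=\; \hat\delta_i\!\left(t_k + \tfrac{x_j^k}{k}\right) \;=\; t_k + \tfrac{x_i^k}{k} + \tfrac{x_j^k}{k},
\end{equation}
which is symmetric in $i\leftrightarrow j$. Since both $\hat\delta_i\hat\delta_j$ and $\hat\delta_j\hat\delta_i$ are well-defined graded ring endomorphisms of $\Cbb[[\xb,\tb]]$ that agree on generators, they agree everywhere. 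The only point requiring care is the cofinality-style bookkeeping in the first step; the rest is routine.
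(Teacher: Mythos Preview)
The paper states this lemma without proof, treating it as elementary. Your proposal correctly supplies the details: the only point that is not entirely routine is the well-definedness of $\hat\delta_i$ on the full formal power series ring (since infinitely many variables are involved), and your cofinality argument handles that correctly.
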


In the following, the indeterminates $\tb=(t_1,t_2,t_3,\dots)$ will be viewed as   KP flow parameters and $\xb$ as a set of additional parameters
determining lattice spacings as defined below.
When there is no ambiguity, we omit explicitly indicating the dependence on $\xb$ and simply write
\begin{equation}
f(\tb+\sum_{i\in \Zbb}n_i[x_i]):=\hat \delta_i^\nb f(\tb),
\end{equation}
where
\be
\nb =(\dots , n_{-1}, n_0, n_1, \dots n_i, \dots ), \quad n_i \in \Zbb
\ee
and
\be
[x_i]:= \Big(x_i, \frac{x_i^2}{2}, \frac{x_i^3}{3}, \dots\Big)
\ee
for any $f(\tb)\in\Cbb[[\xb,\tb]]$.

Consider the multiplicative subset generated by the elements $\{x_i-x_j\}_{i\neq j}$
\begin{equation}
S_{\xb}^\times:=\langle x_i-x_j\;|\;i,j\in\Zbb, i\neq j\rangle\ \subset\ \Cbb[[\xb,\tb]];
\label{eq:MultiplicativeSubsetX}
\end{equation}
that is,  the set of all finite products of positive powers of $(x_i-x_j)$. Since $\Cbb[[\xb,\tb]]$ is an integral domain,
we can introduce its localization $\left(S_{\xb}^\times\right)^{-1}\Cbb[[\xb,\tb]]$ with respect to $S_{\xb}^\times$;
 i.e., the ring of equivalence classes consisting of quotients of elements of $\Cbb[[\xb,\tb]]$  by those
in $S_{\xb}^\times$.

\begin{remark}
Note that the generators $\{x_i-x_j\}_{i\neq j}$ of $S_{\xb}^\times$ may be identified with the roots of $\mathfrak{gl}(\infty)$.
\end{remark}
\begin{corollary}
For every integer $i\in\Zbb$, the graded automorphism $\hat \delta_i$  extends uniquely to an automorphism of the localized ring
\begin{equation}
\hat \delta_i:\quad \left(S_{\xb}^\times\right)^{-1}\Cbb[[\xb,\tb]]\rightarrow \left(S_{\xb}^\times\right)^{-1}\Cbb[[\xb,\tb]].
\end{equation}
\label{cor:ShiftAutomorphismFieldOfFractions}
\end{corollary}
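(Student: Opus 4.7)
The plan is to verify that the shift automorphism $\hat\delta_i$ preserves the multiplicative set $S_{\xb}^\times$ and then invoke the universal property of localization. Concretely, I would proceed as follows.

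First I would note that although $\hat\delta_i$ modifies the flow indeterminates $\{t_j\}_{j\in\Nbb^+}$, it leaves all of the $\{x_k\}_{k\in\Zbb}$ fixed (the statement of Lemma \ref{lemm:ShiftAutomorphismFormalPowerSeries} only shifts the $t_j$'s). Consequently, for every pair $k\neq\ell$,
\begin{equation}
\hat\delta_i(x_k-x_\ell)=x_k-x_\ell\ \in\ S_{\xb}^\times.
\end{equation}
Since $S_{\xb}^\times$ is the multiplicative set generated by the $x_k-x_\ell$, this shows $\hat\delta_i(S_{\xb}^\times)\subset S_{\xb}^\times$; in fact $\hat\delta_i$ restricts to the identity on $S_{\xb}^\times$. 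The same is of course true of the inverse automorphism, which is the shift sending $t_j\mapsto t_j-x_i^j/j$.

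Next I would compose $\hat\delta_i$ with the canonical localization map
\begin{equation}
\iota:\Cbb[[\xb,\tb]]\longrightarrow \left(S_{\xb}^\times\right)^{-1}\Cbb[[\xb,\tb]],
\end{equation}
to obtain the ring homomorphism $\iota\circ\hat\delta_i:\Cbb[[\xb,\tb]]\to \left(S_{\xb}^\times\right)^{-1}\Cbb[[\xb,\tb]]$. By the previous step, every element of $S_{\xb}^\times$ is mapped by $\iota\circ\hat\delta_i$ to a unit of the localization. The universal property of localization then provides a unique ring homomorphism
\begin{equation}
\widehat{\hat\delta_i}:\left(S_{\xb}^\times\right)^{-1}\Cbb[[\xb,\tb]]\longrightarrow\left(S_{\xb}^\times\right)^{-1}\Cbb[[\xb,\tb]]
\end{equation}
factoring $\iota\circ\hat\delta_i$ through $\iota$. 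Applying the same construction to the inverse shift $\hat\delta_i^{-1}$ yields a homomorphism $\widehat{\hat\delta_i^{-1}}$ in the opposite direction, and the uniqueness clause of the universal property forces $\widehat{\hat\delta_i}\circ\widehat{\hat\delta_i^{-1}}$ and $\widehat{\hat\delta_i^{-1}}\circ\widehat{\hat\delta_i}$ to coincide with the identity (since both extensions agree after precomposition with $\iota$). Hence $\widehat{\hat\delta_i}$ is an automorphism.

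There is really no substantial obstacle here, since $\hat\delta_i$ is literally the identity on the generators of $S_{\xb}^\times$; the only point requiring a moment of attention is confirming that the automorphism does not touch the $x_k$'s, after which the result is a textbook application of the universal property of localization. I would omit any explicit verification of compatibility with the $\Nbb$-grading, as $\hat\delta_i$ preserves total degree (since $\deg(x_i^j/j)=j=\deg(t_j)$) and inverting homogeneous elements yields a $\Zbb$-grading preserved by the extension.
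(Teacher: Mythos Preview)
Your proposal is correct and is exactly the standard argument one would give here; the paper itself omits the proof entirely, treating the corollary as immediate from Lemma~\ref{lemm:ShiftAutomorphismFormalPowerSeries} and the definition of $S_{\xb}^\times$. Your observation that $\hat\delta_i$ acts as the identity on the generators $x_k-x_\ell$ of $S_{\xb}^\times$, followed by the universal property of localization applied to both $\hat\delta_i$ and its inverse, is precisely what is needed.
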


Let $\Ab\subset\Zbb^\infty$ be the infinitely generated free abelian subgroup consisting of elements with only a finite number of nonzero entries.
This will also be referred to as the infinite-dimensional integer lattice.

Fix an element $w\in \Gr_{\HH_+}(\HH)$, and let $\tau_w^{KP}(\tb)\in\Cbb[[\tb]]\subset \Cbb[[\xb,\tb]]$ denote
the corresponding KP $\tau$-function. To this we can associate a function $H_w$ on the infinite dimensional lattice $\Ab$,
where for each element $\nb\in\Ab$ we have
\be
    H_w^{\nb}(\tb):=\prod_{i<j}(x_i-x_j)^{n_in_j}\;\tau_w^{KP}\Big(\tb+\sum_{i=-\infty}^{\infty}n_i [x_i]\Big)\in \left(S_{\xb}^\times\right)^{-1}\Cbb[[\xb,\tb]].
\label{eq:NormalizedEvaluationOfTauFunction}
\ee
\begin{remark}
For most purposes in what follows, we view $\tau$-functions as  formal power series. Throughout this section we will be working in the
 localized ring $(S_{\xb}^\times)^{-1}\Cbb[[\xb,\tb]]$, leaving all convergence issues aside. It is worth noting, however, that in the simplest case, when $w\in \Gr_{\HH_+}(\HH)$ is chosen so that $\tau_w^{KP}(\tb)$ is a polynomial in the $\tb$ variables
  \cite{HL, KRV, HO},  $H^{\nb}_w(\tb)$ is also a polynomial in these variables, with coefficients that are rational functions in $x_i$. Moreover, because the only type of denominators that appear in (\ref{eq:NormalizedEvaluationOfTauFunction}) are powers of $x_i-x_j$, we can  evaluate the external parameters  in this case at any set of pairwise distinct complex numbers. 
\end{remark}

\begin{definition}
For $i\in \Zbb$, let $\alphab_i \in \Ab$ be the basis element with all components $0$ except for a single nonzero 
entry $1$ in the $i$th position.
\end{definition}
The following Lemma describes the change in $H_w^{\nb}$ with respect to shifts of the origin by the elements $\alphab_i$.
\begin{lemma}
For all $i\in\Zbb$, we have
\be
H_w^{\nb+\alphab_i}(\tb)=(-1)^{\sum_{j<i}n_j}\prod_{j\neq i}(x_i-x_j)^{n_j}H_w^{\nb}(\tb+[x_i]).
\label{H_transl_alpha_i}
\ee
\label{lemm:HbasePointShift}
\end{lemma}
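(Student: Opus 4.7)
The proof is a direct calculation from the definition (\ref{eq:NormalizedEvaluationOfTauFunction}): I will simply compute both sides and verify they agree, carefully tracking how the Vandermonde-type prefactor changes when $\nb$ is shifted by $\alphab_i$.

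The plan is first to substitute $\nb \mapsto \nb + \alphab_i$ into (\ref{eq:NormalizedEvaluationOfTauFunction}) and expand the exponent $(n_k+\delta_{ik})(n_l+\delta_{il})$ in the prefactor. The cross-diagonal term $\delta_{ik}\delta_{il}$ vanishes since it requires $k=l$, which is excluded by $k<l$. This splits the prefactor as
\begin{equation}
\prod_{k<l}(x_k-x_l)^{(n_k+\delta_{ik})(n_l+\delta_{il})}
=\prod_{k<l}(x_k-x_l)^{n_kn_l}\cdot\prod_{k<i}(x_k-x_i)^{n_k}\cdot\prod_{l>i}(x_i-x_l)^{n_l},
\end{equation}
where the middle factor comes from $\delta_{il}$ (forcing $l=i$, $k<i$) and the right factor from $\delta_{ik}$ (forcing $k=i$, $l>i$).

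Next I would flip the sign in the middle factor, writing $(x_k-x_i)^{n_k}=(-1)^{n_k}(x_i-x_k)^{n_k}$ for each $k<i$, producing an overall sign $(-1)^{\sum_{j<i}n_j}$ and combining the middle and right factors into $\prod_{j\neq i}(x_i-x_j)^{n_j}$. At the same time, the $\tau$-function argument becomes
\begin{equation}
\tb+\sum_{j}(n_j+\delta_{ij})[x_j]=\bigl(\tb+[x_i]\bigr)+\sum_{j}n_j[x_j],
\end{equation}
which, using the definition of $\hat\delta_i$ from Lemma \ref{lemm:ShiftAutomorphismFormalPowerSeries}, is exactly the argument appearing in $H_w^{\nb}(\tb+[x_i])$ (where the leftover Vandermonde factor $\prod_{k<l}(x_k-x_l)^{n_kn_l}$ is the prefactor of $H_w^{\nb}$ unchanged). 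Assembling these pieces yields (\ref{H_transl_alpha_i}).

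There is no substantive obstacle — the entire argument is a bookkeeping exercise on indices. The only place where one must be careful is the asymmetry induced by the convention $i<j$ in the Vandermonde product of (\ref{eq:NormalizedEvaluationOfTauFunction}): the indices $j<i$ contribute a factor $(x_j - x_i)^{n_j}$ which must be flipped to match the uniform product $\prod_{j\neq i}(x_i-x_j)^{n_j}$ on the right-hand side, and this is precisely the source of the sign $(-1)^{\sum_{j<i}n_j}$.
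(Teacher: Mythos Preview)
Your proof is correct and is precisely the direct computation the paper has in mind; the paper's own proof consists of the single sentence ``This follows from the definition (\ref{eq:NormalizedEvaluationOfTauFunction}),'' and you have simply filled in the bookkeeping.
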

\begin{proof}
This follows from the definition (\ref{eq:NormalizedEvaluationOfTauFunction}) .
\end{proof}
\begin{remark}
\label{H_w_tau_functions}
Note that, since $H^{\nb}_w(\tb)$ only differs from $\tau_w(\tb)$  by a constant translation in the argument $\tb$ and a 
   $\tb$-independent multiplicative factor that is rational in the parameters, $H^\nb_w(\tb)$ is also a KP $\tau$-function.
   Since $\tau$-functions are only defined up to multiplication by an arbitrary nonzero constant  factor we have, 
   up to projective  equivalence,
   \be
   H^\nb_{\gamma_+(\sb)w}(\tb)\sim  H^\nb_{w}(\tb+ \sb)
   \ee 
   and hence
   \be
    H^{\nb+\alphab_i}_{w}(\tb)\sim H^{\nb}_{w}(\tb +[x_i])) =  H^\nb_{\gamma_+([x_i])w}(\tb).
    \label{H_trans_alpha_i}
    \ee
   Eq.~(\ref{H_transl_alpha_i}) may therefore be viewed as generating a projective action on the space of such $\tau$-functions $H^\nb_w(\tb)$,
    of the infinite discrete abelian subgroup $\Gamma^\Ab_+ \ss\Gamma_+$ consisting of evaluations
    \be
    \gamma_+^\nb := \gamma_+\left(\sum_{j\in \Zbb} n_j [x_j]\right) , \quad \nb = (\dots, n_i, n_{i+1}, \dots) \in \Ab
    \ee
    at lattice points $\nb \in \Ab$.  By eq.~(\ref{H_trans_alpha_i}),
   this may be viewed as evaluation of the $\tau$-function at the lattice points 
   \be
   \gamma_+\left(\sum_{j\in \Zbb} n_j [x_j]\right)w\in \Gamma_+(w)
   \ee
    embedded within the $\Gamma_+$ orbit $\{w(\tb)\}$.
\end{remark}

The polynomial ring in the  $H_w^{(\nb)}(\tb)$'s for fixed  $w$ and $\tb$ and varying $\nb\in\Ab$ has a
natural grading by elements of $\Ab$. From Lemma \ref{lemm:HbasePointShift} it follows that:
\begin{corollary}
For every monomial $H_w^{\nb^{(1)}}(\tb)\dots H_w^{\nb^{(m)}}(\tb)$ of total degree
\be
\nb^{(tot)}:=\nb^{(1)}+\dots+\nb^{(m)},
\ee
 we have
\be
\begin{aligned}
    &H_w^{\nb^{(1)}+\mathbf\alphab_i}(\tb)\dots H_w^{\nb^{(m)}+\mathbf\alphab_i}(\tb)\\
    &\qquad=(-1)^{\sum_{j<i}n^{(tot)}_j}\prod_{j\neq i}(x_i-x_j)^{n^{(tot)}_j}  H_w^{\nb^{(1)}}(\tb+[x_i])\dots H_w^{\nb^{(m)}}(\tb+[x_i])
\end{aligned}
\label{eq:ShiftTransformationHomogeneousH}
\ee
\label{cor:ShiftTransformationHomogeneousH}
\end{corollary}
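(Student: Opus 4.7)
The plan is to deduce the corollary directly from Lemma \ref{lemm:HbasePointShift} by applying it factor-by-factor to the monomial and collecting the resulting sign and $(x_i - x_j)$-power contributions.

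First, I would apply Lemma \ref{lemm:HbasePointShift} separately to each of the $m$ factors $H_w^{\nb^{(k)}+\alphab_i}(\tb)$, obtaining
\[
H_w^{\nb^{(k)}+\alphab_i}(\tb) = (-1)^{\sum_{j<i} n_j^{(k)}} \prod_{j \neq i} (x_i - x_j)^{n_j^{(k)}}\, H_w^{\nb^{(k)}}(\tb+[x_i])
\]
for each $k = 1, \dots, m$. Multiplying these $m$ identities together gives the product of the factors on the left, and on the right produces three types of contributions that must be collected.

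Next, I would collect: (i) the sign factors $(-1)^{\sum_{j<i}n_j^{(k)}}$, whose exponents add over $k$ to yield $(-1)^{\sum_{j<i} n_j^{(tot)}}$ by definition of $\nb^{(tot)} := \sum_{k=1}^m \nb^{(k)}$ and linearity of the sum; (ii) the products of $(x_i - x_j)^{n_j^{(k)}}$, whose exponents also add to give $\prod_{j \neq i}(x_i - x_j)^{n_j^{(tot)}}$; and (iii) the shifted factors $H_w^{\nb^{(k)}}(\tb+[x_i])$, which remain as the monomial $H_w^{\nb^{(1)}}(\tb+[x_i])\cdots H_w^{\nb^{(m)}}(\tb+[x_i])$ on the right side of \eqref{eq:ShiftTransformationHomogeneousH}.

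There is no real obstacle here: the corollary is a purely formal bookkeeping consequence of Lemma \ref{lemm:HbasePointShift} together with the additivity of exponents under multiplication, and the fact that the shift $\hat\delta_i$ of the argument $\tb \mapsto \tb + [x_i]$ is the same for every factor. The only thing to verify carefully is that the integers $n_j^{(k)}$ add correctly over $k$ to match the components of $\nb^{(tot)}$, which is immediate from the definition. Hence the statement follows in a single line of multiplication once Lemma \ref{lemm:HbasePointShift} is in hand.
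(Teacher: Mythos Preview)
Your proposal is correct and matches the paper's approach exactly: the paper simply states that the corollary follows from Lemma~\ref{lemm:HbasePointShift}, and your factor-by-factor application of the lemma with additive collection of the sign and power exponents is precisely the intended argument.
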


Thus, all monomials of given homogeneity degree $\nb^{(tot)}$ scale by the same factor when the origin of the lattice $\Ab$ is shifted.

\subsection{Pl\"ucker relations satisfied by $H_w^{\nb}(\tb)$}
\label{sec:lattice_points_in_orbit}
\begin{proposition}
\label{prop:pi_H_alpha_i_hom}
Fix  an element $w\in \Gr_{\HH_+}(\HH)$ of the infinite Grassmannian and let $k,n\in\Nbb^+$,  $k < n$,
be an ordered pair of positive integers.
The following generates a homomorphism of commutative algebras
\bea
    \phi_{w,k,n}^{\0b}: \SS(\Gr_{V_k}(\HH_{k,n})) &\&\rightarrow \left(S_{\xb}^\times\right)^{-1}\Cbb[[\xb,\tb]],\\
    \widetilde{\pi}_{L_1,\dots,L_k}&\&\mapsto\sgn(L) H_w^{\alphab_{L_1}+\dots+\alphab_{L_k}}(\tb).
      \label{eq:pi_lambda_H_alpha_i_hom}
\eea
\label{prop:FiniteGrassmannianOrigin}
\end{proposition}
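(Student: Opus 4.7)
The plan is to extend the map on generators (\ref{eq:pi_lambda_H_alpha_i_hom}) uniquely to a ring homomorphism from the free polynomial ring $\widetilde{\SS}_{k,n}$ into $\left(S_{\xb}^\times\right)^{-1}\Cbb[[\xb,\tb]]$ and then verify that it annihilates the defining ideal $\widetilde{\II}_{k,n}$, so that it descends to the quotient $\widetilde{\SS}_{k,n}/\widetilde{\II}_{k,n}=\SS(\Gr_{V_k}(\HH_{k,n}))$. Since $\widetilde{\II}_{k,n}$ is generated by the skew-symmetry relations (\ref{eq:PluckerSkewSymmetryIndices}) and the Pl\"ucker quadratic relations (\ref{eq:PluckerRelationIndices}), only these two families need to be checked.

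The step that organizes everything is the identity
\be
\sgn(L_1,\dots,L_k)\,H_w^{\alphab_{L_1}+\dots+\alphab_{L_k}}(\tb)\;=\;\zeta_k(\tb;\,x_{L_1},\dots,x_{L_k}),
\label{eq:key_identity_H_zeta_plan}
\ee
relating the normalized lattice evaluations (\ref{eq:NormalizedEvaluationOfTauFunction}) to the quantities $\zeta_k$ of (\ref{eq:zeta_xb_tb_def}). For a $k$-tuple $L$ with pairwise distinct entries, (\ref{eq:key_identity_H_zeta_plan}) follows from the Vandermonde identity: writing $L^s$ for the sorted permutation of $L$, one has $\prod_{i<j}(x_{L^s_i}-x_{L^s_j})=\sgn(L)\prod_{i<j}(x_{L_i}-x_{L_j})$, while the $\tau$-argument $\tb+\sum_i[x_{L_i}]$ is symmetric in the entries of $L$; when $L$ has a repeated entry, both sides vanish by definition of $\sgn(L)$. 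Granted (\ref{eq:key_identity_H_zeta_plan}), the skew-symmetry relations descend automatically, because permuting $L$ by $\sigma$ multiplies $\sgn(L)$ by $\sgn(\sigma)$ while leaving $H_w^{\alphab_{L_1}+\dots+\alphab_{L_k}}(\tb)$ unchanged (the basis-vector sum is symmetric).

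Applying $\phi_{w,k,n}^{\0b}$ to the Pl\"ucker quadratic form $p_{I,J}$ of (\ref{eq:PluckerRelationIndices}) and substituting (\ref{eq:key_identity_H_zeta_plan}) in each factor converts the relation $\phi_{w,k,n}^{\0b}(p_{I,J})=0$ into precisely the identity
\be
\sum_{j=1}^{k+1}(-1)^j\zeta_k(\tb;\,x_{I_1},\dots,x_{I_{k-1}},x_{J_j})\,\zeta_k(\tb;\,x_{J_1},\dots,\hat{x}_{J_j},\dots,x_{J_{k+1}})=0
\ee
of Theorem \ref{th:addition_formula}, under the identifications $x_l\leftrightarrow x_{I_l}$ and $y_j\leftrightarrow x_{J_j}$. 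Thus the substantive content of the statement reduces to the KP addition formulae. The principal care required is in the sign bookkeeping behind (\ref{eq:key_identity_H_zeta_plan}) and the consistent way the two $\sgn(\cdot)$ factors appearing in the two terms of each summand combine with the $(-1)^j$; this is the main point where a misstep could occur, but it is a bounded combinatorial verification, with no further analytic or algebraic obstruction anticipated.
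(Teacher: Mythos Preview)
Your proposal is correct and follows a more direct route than the paper's proof. Both arguments hinge on the same key identity (your (\ref{eq:key_identity_H_zeta_plan}), the paper's displayed relation $H_w^{\sum_i\alphab_{L_i}}(\tb)=\sgn(L)\,\zeta_k(\tb;x_{L_1},\dots,x_{L_k})$), rewriting $\sgn(L)H_w^{\alphab_{L_1}+\dots+\alphab_{L_k}}(\tb)$ as $\zeta_k(\tb;x_{L_1},\dots,x_{L_k})$. You then substitute directly into the Pl\"ucker form $p_{I,J}$ and recognize the result as an instance of Theorem~\ref{th:addition_formula}. The paper instead isolates the common indices $I\cap J$ (of cardinality $s$), applies the addition formula of rank $k-s$ only to the disjoint tail (where all $2(k-s)$ parameters are genuinely distinct, at the shifted base point $\tb^{I\cap J}$), and then runs a descending finite induction from level $s'=s$ down to $s'=0$, using the shift-transformation Corollary~\ref{cor:ShiftTransformationHomogeneousH} to reinsert the common indices one at a time with careful tracking of the sign ratios. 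Your route bypasses this induction entirely. The one point where you are slightly cavalier is that Theorem~\ref{th:addition_formula} is stated for \emph{distinct} parameters, so when $I\cap J\neq\emptyset$ some of your $x_{I_l}$ and $x_{J_j}$ coincide; this is harmless because the addition formula is an identity of graded formal power series in independent indeterminates and is therefore preserved under the specialization homomorphism that identifies some of them, but a sentence to that effect would make the argument watertight. The paper's longer argument buys literal self-containedness---the addition formula is never invoked outside its stated hypotheses---and rehearses the shift-induction mechanism that is reused in the proof of Proposition~\ref{prop:LatticeFiniteGrassmannianPoints}.
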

\begin{proof}
Let $s$ be the cardinality $s = |I\cap J|$ of the intersection of the multi-indices $I$ and $J$ appearing in the Pl\"ucker relation (\ref{eq:PluckerRelationIndices}).
Because of  skew-symmetry, there is no loss of generality in assuming that $I_i=J_i$  for $1\le i \le s$,
and all elements $I_{s+1},\dots, I_{k-1},J_{s+1},\dots J_{k+1}$ are distinct, and ordered,
\be
    I_{s+1}<I_{s+2}<\dots< I_{k-1}\quad \text{and}\quad J_{s+1}<J_{s+2}<\dots<J_{k+1}.
\label{eq:TailOrderingIJ}
\ee
and showing that the Pl\"ucker relations  (\ref{eq:PluckerRelationIndices}) are satisfied by all 
elements $H_w^{\alphab_{L_1}+\dots+\alphab_{L_k}}(\tb)$ in this case.
Under these assumptions, define
\be
   \tb^{I\cap J} :=\tb+\sum_{i=1}^s[x_{I_i}]  =\tb+\sum_{i=1}^s[x_{J_i}].
\ee
Applying the KP addition formulae (\ref{tau_k_addition_formula}),  using
\be
H_w^{\sum_{i=1}^k \alphab_{L_i}}(\tb) = \sgn(L_1, \dots, L_k)\zeta_N(\tb,  x_{L_1}, \dots, x_{L_k}),
\ee
we get
\bea
 \sum_{j=s+1}^{k+1}(-1)^{j}  &\& \sgn(I_{s+1},\dots,I_{k-1},J_j) H_w^{\alphab_{I_{s+1}}+\dots+\alphab_{I_{k-1}}+\alphab_{J_j}}( \tb^{I\cap J}) \cr
 &\& \times H_w^{\alphab_{J_{s+1}}+\dots+\hat{\alphab}_{J_{j}}+\dots+\alphab_{J_{k+1}}}( \tb^{I\cap J})  =0, \cr
 &\&
\label{eq:PluckerRelationHOriginS}
\eea
where $\hat{\alphab}_{J_{j}}$ denotes the absence of the term $\alphab_{J_{j}}$ from the sum.
The following therefore holds for all $0\leq s'\leq s$
\bea
  \sum_{j=s'+1}^{k+1}(-1)^{j}  &\& \sgn(I_{s'+1},\dots,I_{k-1},J_j) H_w^{\alphab_{I_{s'+1}}+\dots+\alphab_{I_{k-1}}+\alphab_{J_j}}(\tb^{I\cap J}) \cr
  &\& \times
   \sgn(J_{s'+1},\dots,\hat{J}_{j},\dots,J_{k+1}) H_w^{\alphab_{J_{s'+1}}+\dots+\hat{\alphab}_{J_{j}}+\dots+\alphab_{J_{k+1}}}(\tb^{I\cap J}) =0.\cr
   &\&
\label{eq:PluckerRelationHOriginInductive}
\eea
This follows by finite induction in $s'$, starting at $s'=s$, and descending.
The case $s'=s$ is just (\ref{eq:PluckerRelationHOriginS}), where, according to (\ref{eq:TailOrderingIJ}) we have $\sgn(J_{s+1},\dots,J_{k+1})=1$.
Assuming now that (\ref{eq:PluckerRelationHOriginInductive}) holds for some $s'$, we show that it also holds for $s'-1$.

Note that the left-hand side of eq.~(\ref{eq:PluckerRelationHOriginInductive}) is homogeneous with respect to the grading given by elements of $\Ab$. Using Corollary (\ref{cor:ShiftTransformationHomogeneousH}) for $i=I_{s'}=J_{s'}$ and cancelling the common factor
\be
\prod_{j=s'+1}^{k-1}(x_{I_{s'}}-x_{I_j})\prod_{j=s'+1}^{k+1}(x_{J_{s'}}-x_{J_j}) ,
\ee
we get
\bea
   &\&  \sum_{j=s'}^{k+1}(-1)^{j}\sgn(I_{s'},\dots,I_{k-1},J_j) H_w^{\alphab_{I_{s'}}+\dots+\alphab_{I_{k-1}}+\alphab_{J_j}}\Big(\tb+\sum_{i=1}^{s'-1}[x_{I_i}]\Big)\times\cr
    &\& \times \sgn(J_{s'},\dots,J_{j-1},J_{j+1},\dots,J_{k+1}) H_w^{\alphab_{J_{s'}}+\dots+\hat{\alphab}_{J_{j}}+\dots+\alphab_{J_{k+1}}} \Big(\tb+\sum_{i=1}^{s'-1}[x_{J_i}]\Big)=0, \cr
    &\&
\label{eq:PluckerRelationHInductiveStepConclusion}
\eea
where we have used the fact that for $s'\leq s$ we have $I_{s'}=J_{s'}$, and hence the following ratio of signs is independent of $j$:
\begin{equation}
\begin{aligned}
&\frac{\sgn(I_{s'},\dots,I_{k-1},J_j)\sgn(J_{s'},\dots,\widehat{J_{j}},\dots,J_{k+1})} {\sgn(I_{s'+1},\dots,I_{k-1},J_j)\sgn(J_{s'+1},\dots,\widehat{J_{j}},\dots,J_{k+1})}\\[0.5em]
&\qquad=\left(\prod_{i=s'+1}^{k-1}\sgn(I_{s'},I_{i})\right)\sgn(I_{s'},J_j) \left(\prod_{i=s'+1}^{k+1}\sgn(J_{s'},J_i)\right)\sgn(J_{s'},J_j)\\
&\qquad=\prod_{i=s'+1}^{k-1}\sgn(I_{s'},I_{i}) \prod_{i=s'+1}^{k+1}\sgn(J_{s'},J_i).
\end{aligned}
\end{equation}

Eq.~(\ref{eq:PluckerRelationHInductiveStepConclusion}) shows that (\ref{eq:PluckerRelationHOriginInductive}) is true for $s'-1$,
and hence, by induction, it holds for all $0\leq s'\leq s$.
In particular, for $s'=0$ we have
\bea
    &\&\sum_{j=1}^{k+1}(-1)^{j}\sgn(I_1,\dots,I_{k-1},J_j) H_w^{\alphab_{I_1}+\dots+\alphab_{I_{k-1}}+\alphab_{J_j}}(\tb)\cr
    &\&\quad\times\ \sgn( J_1,\dots,J_{j-1},J_{j+1},\dots,J_{k+1}) H_w^{\alphab_{J_1}+\dots+\hat{\alphab}_{J_{j}}+\dots+\alphab_{J_{k+1}}}(\tb)\cr
    &\&=\sum_{j=1}^{k+1}(-1)^j\phi_{w,k,n}^{\0b}\big(\widetilde{\pi}_{I_1,\dots,I_{k-1},J_j}\big) \phi_{w,k,n}^{\0b}\big(\widetilde{\pi}_{J_1,\dots,\widehat{J_j},\dots,J_{k+1}}\big)=0.
\eea
\end{proof}

\begin{proposition}
\label{thm:homomorph_grassmannian_lattice_eval}
Fix an element $w\in \Gr_{\HH_+}(\HH)$ and let $k,n\in\Nbb$,  $k< n$ be an ordered pair of positive integers.   For every element
  $\nb\in\Ab$, we have a homomorphism of commutative algebras
\be
    \phi_{w,k,n}^{\nb}:\SS(\Gr_{V_k}(\HH_{k,n}))  \rightarrow \left(S_{\xb}^\times\right)^{-1}\Cbb[[\xb,\tb]]
    \ee
    such that
    \be
     \phi_{w,k,n}^{\nb}: \widetilde{\pi}_{L_1,\dots,L_k}\mapsto \sgn(L) H_w^{\nb+\alphab_{L_1}+\dots+\alphab_{L_k}}(\tb).
    \label{eq:FiniteGrassmannianGeneralLatticePoint}
    \ee
\label{prop:LatticeFiniteGrassmannianPoints}
\end{proposition}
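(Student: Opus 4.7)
The plan is to reduce the claim for general $\nb$ to the base case $\nb=\0b$ treated in Proposition \ref{prop:FiniteGrassmannianOrigin} by exploiting translation-invariance of the Hirota residue equation. As already noted in Remark \ref{H_w_tau_functions}, the shifted formal series $\tilde\tau(\tb):=\tau_w^{KP}(\tb+\sum_{i}n_i[x_i])$ is itself a KP $\tau$-function, namely that (up to a projective constant) of the $\Gamma_+$-translate $\tilde w:=\gamma_+(\sum_i n_i[x_i])\,w\in\Gr_{\HH_+}(\HH)$. Proposition \ref{prop:FiniteGrassmannianOrigin} therefore applies to $\tilde\tau$ and supplies a homomorphism $\tilde\pi_{L_1,\dots,L_k}\mapsto \sgn(L)\,\tilde H^{\sum\alphab_{L_i}}(\tb)$, where $\tilde H$ is built from $\tilde\tau$ in the same way as $H$ is built from $\tau_w^{KP}$ in (\ref{eq:NormalizedEvaluationOfTauFunction}).

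The next step is a direct computation comparing the normalization prefactors. Expanding $(n+\mathbf{1}_L)_i(n+\mathbf{1}_L)_j = n_in_j+n_i\mathbf{1}_{L,j}+n_j\mathbf{1}_{L,i}+\mathbf{1}_{L,i}\mathbf{1}_{L,j}$ in the defining product of $H_w^{\nb+\sum\alphab_{L_i}}$ yields the factorization
\begin{equation*}
H_w^{\nb+\sum\alphab_{L_i}}(\tb) = c(\nb)\,D(\nb,L)\,\tilde H^{\sum\alphab_{L_i}}(\tb),
\end{equation*}
where $c(\nb):=\prod_{i<j}(x_i-x_j)^{n_in_j}$ depends only on $\nb$, and the $L$-dependent correction can be written symmetrically as
\begin{equation*}
D(\nb, L) = \prod_{p\in\Zbb}\Big[\prod_{L_a\in L,\,L_a\neq p}(x_{\min(p,L_a)}-x_{\max(p,L_a)})\Big]^{n_p}.
\end{equation*}

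Substituting this factorization into a Pl\"ucker relation with multi-indices $L_1^{(j)}:=(I_1,\dots,I_{k-1},J_j)$ and $L_2^{(j)}:=(J_1,\dots,\widehat J_j,\dots,J_{k+1})$, every summand acquires the common factor $c(\nb)^2$ together with the \emph{a priori} $j$-dependent factor $D(\nb,L_1^{(j)})\,D(\nb,L_2^{(j)})$. The crux of the argument, and the step I expect to need the most care, is the combinatorial identity that this second factor is in fact independent of $j$. The underlying observation is that the multiset union $L_1^{(j)}\sqcup L_2^{(j)}$ equals $I\sqcup J$ for every $j$, since varying $j$ merely migrates the element $J_j$ from $L_2$ into $L_1$. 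Consequently, for each $p$ the inner product in $D(\nb,L_1^{(j)})D(\nb,L_2^{(j)})$ collapses to $\prod_{L\in I\sqcup J,\,L\ne p}(x_{\min(p,L)}-x_{\max(p,L)})$, which visibly does not depend on $j$. Pulling this common prefactor out of the sum reduces the target Pl\"ucker relation for $\phi^{\nb}_{w,k,n}$ to the one already established for $\tilde w$ at the base point $\0b$, which vanishes by Proposition \ref{prop:FiniteGrassmannianOrigin}. The degenerate case $I\cap J\neq\emptyset$ and the accompanying sign bookkeeping are handled exactly as in the proof of Proposition \ref{prop:FiniteGrassmannianOrigin}: terms with $j\le|I\cap J|$ vanish due to repeated indices, and the multiset identity above survives unchanged.
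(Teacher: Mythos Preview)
Your argument is correct and reaches the same conclusion, but the execution differs from the paper's. The paper proceeds by induction on the lattice $\Ab$: starting from the base case $\nb=\0b$ (Proposition~\ref{prop:FiniteGrassmannianOrigin}), the inductive step passes from $\nb$ to $\nb+\alphab_i$ by invoking Corollary~\ref{cor:ShiftTransformationHomogeneousH}, which says that all monomials of a fixed total $\Ab$-degree acquire the same multiplicative factor under a single lattice shift. Since every Pl\"ucker quadratic form is $\Ab$-homogeneous, the relation at $\nb+\alphab_i$ is a nonzero scalar multiple of the one at $\nb$ (after absorbing a $\tb$-shift via the automorphism $\hat\delta_i$), and the induction closes.

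Your approach instead performs the full shift from $\0b$ to $\nb$ in one stroke, by explicitly factoring $H_w^{\nb+\sum\alphab_{L_a}} = c(\nb)\,D(\nb,L)\,\tilde H^{\sum\alphab_{L_a}}$ and observing that $D(\nb,L_1^{(j)})D(\nb,L_2^{(j)})$ depends only on the multiset $L_1^{(j)}\sqcup L_2^{(j)}=I\sqcup J$, hence is $j$-independent. This is the ``unrolled'' version of the paper's induction: the multiset identity is precisely the statement that the Pl\"ucker form is $\Ab$-homogeneous of degree $I\sqcup J$. Your route is computationally more explicit; the paper's is marginally cleaner in that it never needs the closed form of $D(\nb,L)$, only the one-step scaling of Lemma~\ref{lemm:HbasePointShift}.

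One point to tighten: you invoke Proposition~\ref{prop:FiniteGrassmannianOrigin} for the shifted series $\tilde\tau$, but that proposition is stated for a fixed element $w\in\Gr_{\HH_+}(\HH)$, whereas $\tilde w=\gamma_+(\sum_i n_i[x_i])w$ depends on the formal parameters $x_i$ and is not literally a single Grassmannian point. This is harmless, since the \emph{proof} of Proposition~\ref{prop:FiniteGrassmannianOrigin} uses only the addition formulae (Theorem~\ref{th:addition_formula}), and these hold for $\tilde\tau$ because the Hirota residue equation~(\ref{eq:hirota_res_eq}) is invariant under translations $\tb\mapsto\tb+\sb$; it would be cleaner to phrase your reduction that way directly rather than via Remark~\ref{H_w_tau_functions}.
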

\begin{proof}
We proceed by induction on the lattice $\Ab$. The starting point  is ${\nb}=\0b$,
which is given by Proposition \ref{prop:FiniteGrassmannianOrigin}. Now, assume that $\phi_{w,k,n}^{\nb}$ is a homomorphism for some $\nb\in\Ab$.
 We will show that this implies $\phi_{w,k,n}^{\nb+\alphab_i}$ is also a commutative ring homomorphism for the translation by any $\alphab_i$.

To do  this, let $I$ be a $k-1$-tuple of indices and $J$ a $k+1$ tuple. By the inductive assumption
\bea
    0=&\&\sum_{j=1}^{k+1}(-1)^j\phi_{w,k,n}^{\nb}(\widetilde{\pi}_{I_1,\dots,I_{k-1},J_j}) \phi_{w,k,n}^{\nb}(\widetilde{\pi}_{J_1,\dots,\widehat{J_j},\dots,J_{n+1}})\cr
    =&\&\sum_{j=1}^{k+1}(-1)^{j}\sgn(I_1,\dots,I_{k-1},J_j) H_w^{\nb+\alphab_{I_1}+\dots+\alphab_{I_{k-1}}+\alphab_{J_j}}(\tb)\cr
    &\&\qquad \times \sgn(J_1,\dots,\widehat{J_j},\dots,J_{k+1})
    H_w^{\nb+\alphab_{J_1}+\dots+\hat{\alphab}_{J_j}+\dots+\alphab_{J_{k+1}}}(\tb).
\label{eq:FiniteGrassmannianLatticeShiftInductiveAssumption}
\eea
Note that the right hand side of (\ref{eq:FiniteGrassmannianLatticeShiftInductiveAssumption}) is homogeneous with respect to the grading given by elements of $\Ab$. Corollary \ref{cor:ShiftTransformationHomogeneousH} implies that
\bea
    0=&\& \sum_{j=1}^{k+1}(-1)^{j}\sgn(I_1,\dots,I_{k-1},J_j) H_w^{\nb+\alphab_i+\alphab_{I_1}+\dots+\alphab_{I_{k-1}}+\alphab_{J_j}}(\tb-[x_i])\cr
    &\& \quad\times \sgn(J_1,\dots,\widehat{J_j},\dots,J_{k+1}) H_w^{\nb+\alphab_i+\alphab_{J_1}+\dots+\widehat{\alphab}_{J_j}+\dots+\alphab_{J_{k+1}}}(\tb-[x_i]).
\eea
The right hand side of the latter is the result of applying $\phi_{w,k,n}^{\nb+\alphab_i}\circ \hat \delta_i$ to the same Pl\"ucker relation, where $\hat \delta_i$ is the shift operator introduced in Lemma \ref{lemm:ShiftAutomorphismFormalPowerSeries} and Corollary \ref{cor:ShiftAutomorphismFieldOfFractions}. The same result holds for all Pl\"ucker relations,
so we conclude that $\phi_{w,k,n}^{\nb+\alphab_i}\circ \hat \delta_i$ is a homomorphism for all $i\in \Zbb$. Since
$\hat \delta_i$ is an automorphism of the ring $\Cbb[[\tb]]$, it follows that $\phi_{w,k,n}^{\nb+\alphab_i}$ must also be a homomorphism.
\end{proof}

For  consistency with the infinite dimensional case, it is convenient
 to label Pl\"ucker coordinates by partitions. To every $k$-element increasingly ordered subset
\begin{equation}
L =(L_1 \dots, L_k)\subset\{-k,\dots,n-k-1\}
\end{equation}
having $k-r$ negative elements, where  $0\le r \le k$,
\be
-k \le L_i\le -1, \quad {i \in \{1, \dots, k-r\}},
\ee
and  $r$ non-negative elements,
\be
0 \le L_{k-r+i}\le n-k-1, \quad {i \in \{1, \dots, r\}},
\ee
we can associate two increasingly ordered sets of $r$ positive integers
\begin{subequations}
\bea
I&\&=(I_1, \dots I_r),, \quad  1\le I_i  \le I_{i+1}\le n-k ,  \\
 J&\&=(J_1, \dots J_r), \quad  1\le J_i < J_{i+1} \le k, \quad i =1, \dots, r,
\eea
\end{subequations}
such that
\be
I_i=1+L_{k-r+i}, \quad J=\{1,\dots, k\}\backslash (-L).
\label{eq:FiniteLDecompositionIJ}
\ee

Let
\be
(\ab | \bb)= (a_1 \dots a_r | b_1 \dots b_r)
\ee
be the Frobenius notation for a partition $\lambda(I,J)$  of Frobenius rank $r$,
whose Young diagram fits into the rectangular one for $(n-k)^k$,  with Frobenius indices defined as
\be
a_i=I_{r-i+1}-1,\quad b_i=J_{r-i+1}-1,\quad 1\leq i\leq r.
\ee
Introduce the following notation
 \be
   H_{w}^{\nb,\lambda}(\tb):=H_w^{\nb+\sum_{i=1}^r \alphab_{a_i}-\sum_{i=1}^r\alphab_{-b_i-1} }(\tb),
    \label{eq:H_n0_w_lambda}
\ee
where $\lambda:=\lambda(I,J)=(\ab|\bb)$, and define
\begin{equation}
\Phi_{w,k,n}^{\nb}:=\phi_{w,k,n}^{\nb+\sum_{j=1}^k\alphab_{-j}}.
\label{eq:ShiftedHomomorphismFinite}
\end{equation}
We can reformulate Proposition \ref{prop:LatticeFiniteGrassmannianPoints} in terms of Pl\"ucker coordinates labelled by partitions.
\begin{corollary}
\label{cor:plucker_lattice_homomorph}
Fix an element $w\in \Gr_{\HH_+}(\HH)$  and
a lattice point $\nb\in\Ab$.  The homomorphism (\ref{eq:ShiftedHomomorphismFinite})
applied to $\pi_\lambda$ gives
\be
\Phi_{w,k,n}^{\nb}(\pi_{\lambda})= H_w^{\nb,\lambda}.
\label{Phi_n0_lambda_H_n0_lambda}
\ee
\label{cor:FiniteGrassmannianLatticeLambda}
\end{corollary}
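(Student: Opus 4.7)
The plan is to directly unwind the definition of $\Phi_{w,k,n}^{\nb}$ given in (\ref{eq:ShiftedHomomorphismFinite}), apply Proposition~\ref{prop:LatticeFiniteGrassmannianPoints} to the $\widetilde{\pi}_L$ representative of $\pi_\lambda$, and verify that the net lattice shift on the right-hand side is exactly the one built into the definition (\ref{eq:H_n0_w_lambda}) of $H_w^{\nb,\lambda}$.  All of the real work — that the assignment $\widetilde{\pi}_L\mapsto\sgn(L)\,H_w^{\nb+\sum_i\alphab_{L_i}}$ descends to a well-defined ring homomorphism into $\left(S_{\xb}^{\times}\right)^{-1}\Cbb[[\xb,\tb]]$ — is already contained in Proposition~\ref{prop:LatticeFiniteGrassmannianPoints}, so what remains is a purely combinatorial translation between the Frobenius indexing of $\lambda$ and the ordered $k$-tuple $L$.

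First I would use (\ref{eq:FiniteLThroughLambda}) to identify $\pi_\lambda=\widetilde{\pi}_{L_1,\dots,L_k}$ for the unique strictly increasing $k$-tuple $L$ corresponding to $\lambda$, and then invoke the decomposition (\ref{eq:FiniteLDecompositionIJ}) to express $L$ as its non-negative part $\{a_1,\dots,a_r\}$ together with its negative part, consisting of those $-m$ with $m\in\{1,\dots,k\}\setminus\{b_1+1,\dots,b_r+1\}$.  Since $L$ is already in increasing order, $\sgn(L)=1$ and the sign factor in (\ref{eq:FiniteGrassmannianGeneralLatticePoint}) drops out.

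Applying Proposition~\ref{prop:LatticeFiniteGrassmannianPoints} together with (\ref{eq:ShiftedHomomorphismFinite}) then writes $\Phi_{w,k,n}^{\nb}(\pi_\lambda)$ as a single value of $H_w$ whose lattice argument is assembled from $\nb$, the shift built into $\Phi$, and $\sum_{i=1}^{k}\alphab_{L_i}$.  The remaining check is then the bookkeeping identity
\[
\sum_{i=1}^{k}\alphab_{L_i}-\sum_{j=1}^{k}\alphab_{-j}
\;=\;\sum_{i=1}^{r}\alphab_{a_i}-\sum_{i=1}^{r}\alphab_{-b_i-1},
\]
which follows by splitting $\sum_{i=1}^{k}\alphab_{L_i}$ into its non-negative contribution $\sum_{i=1}^{r}\alphab_{a_i}$ and its negative contribution $\sum_{m\in\{1,\dots,k\}\setminus\{b_j+1\}_{j=1}^{r}}\alphab_{-m}$, and then rewriting the latter sum as the full $\sum_{j=1}^{k}\alphab_{-j}$ minus the missing $r$ terms $\sum_{i=1}^{r}\alphab_{-b_i-1}$.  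Comparison with (\ref{eq:H_n0_w_lambda}) then closes the argument.  No substantive obstacle arises; the only place to slip is in aligning the Frobenius-to-$L$ index conventions, which is the natural place to introduce a sign error.
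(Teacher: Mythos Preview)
Your proposal is correct and follows essentially the same route as the paper: reduce to Proposition~\ref{prop:LatticeFiniteGrassmannianPoints}, decompose the increasing $k$-tuple $L$ into its nonnegative part $\{a_1,\dots,a_r\}$ and its negative part via~(\ref{eq:FiniteLDecompositionIJ}), and verify the lattice bookkeeping identity that rewrites $\sum_{i=1}^k\alphab_{L_i}$ as $\sum_{j=1}^k\alphab_{-j}+\sum_{i=1}^r\alphab_{a_i}-\sum_{i=1}^r\alphab_{-b_i-1}$. Your remark that the only delicate point is the sign convention in passing between Frobenius indices and $L$ is well taken --- indeed the paper's own definition~(\ref{eq:ShiftedHomomorphismFinite}) and its proof appear to use opposite signs on the $\sum_{j=1}^k\alphab_{-j}$ shift, and the identity you wrote (with the minus sign) is the one that actually makes the argument close.
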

\begin{proof}
We reduce this to the statement of Proposition \ref{prop:LatticeFiniteGrassmannianPoints}.
Let $J\subset\{1,\dots,k\}$ and $I\subset\{1,\dots,n-k\}$ be a pair of ordered subsets of the same cardinality $|I|=|J|=r$ and $\lambda(I,J)$
 the corresponding partition of Frobenius rank $r$. From (\ref{eq:FiniteLDecompositionIJ}) we get the
expression for the indexing set
\be
L= (-J^c)\sqcup (I_1-1, \dots, I_r-1) \subset (-k,\dots,n-k-1)
\ee
where
\be
J^c=\{1,\dots,k\}\backslash J.
\ee
in reversed; i.e., decreasing order)
It follows that
\begin{equation}
\begin{aligned}
\mb=&\nb+\sum_{i=1}^r\alphab_{I_i-1}-\sum_{i=1}^r\alphab_{-J_i}\\[0.3em]
=&\nb+\sum_{i=1}^r\alphab_{I_i-1}+\sum_{i=1}^{k-r}\alphab_{-J^c_i}-\sum_{j=1}^k\alphab_{-j}\\[0.3em]
=&\nb-\sum_{j=1}^k\alphab_{-j}+\sum_{i=1}^{k}L_i,
\end{aligned}
\end{equation}
and hence
\begin{equation}
\Phi_{w,k,n}^{\nb}(\pi_\lambda)=H_w^{\mb}=\phi_{w,k,n}^{\nb-\sum_{j=1}^k\alphab_{-j}}(\widetilde{\pi}_{L_1,\dots,L_k}).
\end{equation}
Since this holds for all partitions $\lambda$ that fit into the $k\times (n-k)$ rectangle,
 we arrive at (\ref{Phi_n0_lambda_H_n0_lambda}), which concludes the proof.
\end{proof}

\subsection{Lattice mapping to $\Gr_{\HH_+}(\HH)$}
\label{sec: lattice_inf_grassmann}

In this section we show that to every KP $\tau$-function $\tau_w^{KP}(\tb)$ and every choice of evaluations
of the indeterminants $\{x_i\}_{i\in\in \Zbb}$, we can associate an infinite dimensional lattice of points in $\Gr_{\HH_+}(\HH)$.
For every base point $\nb\in\Ab$, evaluations of the corresponding $\tau$-function
$\tau^{KP}_w$ as in (\ref{eq:NormalizedEvaluationOfTauFunction}) at a lattice of associated points in the $\Gamma_+$
orbit of $w$ satisfy the Pl\"ucker relations of an infinite dimensional Grassmannian and thus can be interpreted as Pl\"ucker coordinates
of some other Grassmannian element $\tilde{w} \in \Gr_{\HH_+}(\HH)$.

\begin{theorem}
\label{thm:plucker_lattice_homomorph}
Fix an element $w\in \Gr_{\HH_+}(\HH)$ and let $\nb\in\Ab$ be a lattice point.
The following map, evaluated on the Pl\"ucker coordinates, viewed as generators of the ring $\SS(\Gr_{\HH_+}(\HH))$,
extends to a homomorphism of commutative algebras
\begin{equation}
\begin{aligned}
\Phi_w^{\nb}:\SS(\Gr_{\HH_+}(\HH))\rightarrow&\left(S_{\xb}^\times\right)^{-1}\Cbb[[\xb,\tb]]\\
\pi_\lambda\mapsto&H_w^{\nb,\lambda}.
\end{aligned}
\label{eq:LatticeOfPointsInInfiniteGrassmannian}
\end{equation}
\label{th:LatticeOfPointsInInfiniteGrassmannian}
\end{theorem}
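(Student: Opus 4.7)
The plan is to invoke the universal property of the direct limit (Proposition \ref{prop:DirectLimitCoordinateRing}), using the family of finite-dimensional homomorphisms $\{\Phi_{w,k,n}^\nb\}_{(k,n)}$ furnished by Corollary \ref{cor:plucker_lattice_homomorph}. Since $\SS(\Gr_{\HH_+}(\HH)) = \SS/\II = \lim_{\longrightarrow}\SS_{k,n}(\Gr_{V_k}(\HH_{k,n}))$, it suffices to check that this family is compatible with the connecting homomorphisms $\varphi_{(k,n),(k',n')}$ of the direct system; the universal property then automatically produces a unique ring homomorphism $\Phi_w^\nb$ taking the prescribed values $H_w^{\nb,\lambda}$ on generators.

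First I would verify compatibility along the two elementary types of connecting maps, $\iota_k$ and $\iota^n$, that generate the direct system via diagram (\ref{eq:Homomorphisms_iota_CommutativeDiagram}). Both of these act as the tautological inclusion $\pi_\lambda \mapsto \pi_\lambda$ on generators. Fix $\lambda \in \mathbf\Lambda_{k,n}$, so that $\lambda$ also lies in $\mathbf\Lambda_{k,n+1}$ and in $\mathbf\Lambda_{k+1,n+1}$. Then by Corollary \ref{cor:plucker_lattice_homomorph},
\begin{equation*}
\Phi_{w,k,n+1}^\nb(\iota^n(\pi_\lambda)) = \Phi_{w,k,n+1}^\nb(\pi_\lambda) = H_w^{\nb,\lambda} = \Phi_{w,k,n}^\nb(\pi_\lambda),
\end{equation*}
and analogously $\Phi_{w,k+1,n+1}^\nb(\iota_k(\pi_\lambda)) = H_w^{\nb,\lambda} = \Phi_{w,k,n}^\nb(\pi_\lambda)$. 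The key observation that makes these equalities work is that the definition (\ref{eq:H_n0_w_lambda}) of $H_w^{\nb,\lambda}$ depends only on the Frobenius data $(\ab|\bb)$ of $\lambda$ and on the lattice point $\nb$; it does not depend on which ambient rectangle $\lambda$ is viewed as sitting inside.

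Since every connecting homomorphism $\varphi_{(k,n),(k',n')}$ of the direct system is a finite composition of maps of type $\iota_k$ and $\iota^n$, the agreement just verified extends to give $\Phi_{w,k',n'}^\nb \circ \varphi_{(k,n),(k',n')} = \Phi_{w,k,n}^\nb$ on all generators, and hence on the whole ring $\SS_{k,n}(\Gr_{V_k}(\HH_{k,n}))$. Thus $\bigl((S_\xb^\times)^{-1}\Cbb[[\xb,\tb]],\{\Phi_{w,k,n}^\nb\}\bigr)$ is a homomorphism from the direct system into $(S_\xb^\times)^{-1}\Cbb[[\xb,\tb]]$, and the universal property yields a unique ring homomorphism $\Phi_w^\nb : \SS/\II \to (S_\xb^\times)^{-1}\Cbb[[\xb,\tb]]$ with $\Phi_w^\nb \circ \sigma_{k,n} = \Phi_{w,k,n}^\nb$ for every admissible $(k,n)$. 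Evaluating on an arbitrary generator $\pi_\lambda$ and choosing any $(k,n)$ with $\lambda \in \mathbf\Lambda_{k,n}$ then yields the required formula $\Phi_w^\nb(\pi_\lambda) = \Phi_{w,k,n}^\nb(\pi_\lambda) = H_w^{\nb,\lambda}$.

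The only conceivable concern is whether $H_w^{\nb,\lambda}$ is truly intrinsic to $\lambda$, independent of the enveloping Grassmannian; but this is immediate from the explicit formula (\ref{eq:H_n0_w_lambda}). All of the substantive work --- showing that lattice evaluations of the KP $\tau$-function satisfy the full set of Pl\"ucker relations --- has already been carried out in Propositions \ref{prop:FiniteGrassmannianOrigin} and \ref{prop:LatticeFiniteGrassmannianPoints}, so the infinite-dimensional statement reduces to a clean application of the direct-limit universal property, with no genuine obstacle remaining.
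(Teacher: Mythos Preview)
Your proof is correct. Both your argument and the paper's rely on the same essential input, Corollary~\ref{cor:plucker_lattice_homomorph}, together with the observation that $H_w^{\nb,\lambda}$ depends only on $\lambda$ and not on the ambient rectangle. The organizational difference is that you invoke the universal property of the direct limit (Proposition~\ref{prop:DirectLimitCoordinateRing}) to assemble the finite homomorphisms $\{\Phi_{w,k,n}^{\nb}\}$ into a single map on $\SS/\II$, whereas the paper proceeds more directly: it defines $\Phi_w^{\nb}$ on generators of the polynomial ring $\SS$ and then checks that each individual Pl\"ucker relation is annihilated by observing that it lives in some $\SS_{n,2n}$ where $\Phi_{w,n,2n}^{\nb}$ already kills it. Your route is cleaner categorically and makes explicit why the direct-limit machinery was set up in the first place; the paper's route is slightly more concrete and avoids verifying the compatibility diagram explicitly. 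Neither approach requires any idea the other lacks.
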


\begin{proof}
Every Pl\"ucker relation involves only finitely many generators $\{\pi_\lambda\;|\;\lambda\in\Lambda\}$ indexed by some 
finite set $\Lambda$ of partitions.
Let $n\in\Nbb^+$ be  large enough that all Young diagrams of partitions $\lambda\in \Lambda$  
fit into an $n\times n$ square. Under these assumptions we have
\begin{equation}
\Phi_w^{\nb}(\pi_\lambda)=\Phi_{w,n,2n}^{\nb}(\pi_\lambda),\qquad\textrm{for all}\quad\lambda\in\Lambda.
\end{equation}
From Corollary \ref{cor:FiniteGrassmannianLatticeLambda} it follows that all the Pl\"ucker quadratic forms
 are annihilated by $\Phi_w^{\nb}$.
This is valid for any Pl\"ucker relation, and any $w \in \Gr_{\HH_+}(\HH)$.
\end{proof}

\begin{example}{\bf The octahedron relations (discrete Hirota equations)}

Consider the short Pl\"ucker relation (\ref{eq:PluckerRelationIndices})
 \be
\pi_\emptyset \,\pi_{(2,2)} - \pi_{(1)}\, \pi_{(2,1)} + \pi_{(2)}\,  \pi_{(1,1)} = 0.
\ee
 for
 \be
 (k, n)=(2,2), \ I_1=-2, \ (J_1,J_2, J_3) = (-1, 0, 1).
 \ee
Choose parameter values
\be
x_{-2}=0, \quad x_{-1}=a, \quad x_0 = b,\quad x_1 = c,
\ee
and lattice values
\bea
n_{-2} &\&=1, \quad n_{-1} =l+1, \quad  n_{0} =m, \quad  n_{1} =n, \\
&\& \cr
n_i &\& = 0 \quad \text{if} \quad i \neq -2, -1, 0, 1.
\eea
and define the quantities   $H^{\nb, \emptyset}_w$,  $H^{\nb, (1)}_w$, $H^{\nb,(2)}_w$,
$H^{\nb, (1,1)}_w$, $H^{\nb,(2,1)}_w$, $H^{\nb,(2,2)}_w$
as in  eq.~(\ref{eq:H_n0_w_lambda}).  Theorem \ref{thm:plucker_lattice_homomorph} then implies that these also satisfy
\be
H_w^{\nb,\emptyset} \,H_w^{\nb,(2,2)} -H_w^{\nb,(1)}\, H_w^{\nb,(2,1)} + H_w^{\nb,(2)}\, H_w^{\nb,(1,1)} = 0
\ee
for all $\nb$. Defining $\tau_{l,m,n}$ as in eq.~(\ref{eq:tau_lmn_def}), and
making the substitutions (\ref{eq:H_n0_w_lambda}), we get
\be
a(b-c) \,\tau_{l+1, m,n} \tau_{l, m+1, n+1} + b(a-c)\, \tau_{l, m+1, n} \tau_{l,m+1, n+1}
+ c(a-b)\, \tau_{l, m, n+1} \tau_{l+1, m+1, n} =0,
\label{eq:octahedron_rels4}
\ee
which are the octahedron (or discrete Hirota) relations \cite{Hir, Mi}.
\end{example}
\subsection{The fraction field $\mathbb F_{\xb,\tb}$ and the effective affine big cell}
\label{sec:fraction_field}

For every element $w\in \Gr_{\HH_+}(\HH)$, the corresponding KP $\tau$-function $\tau^{KP}_{w}(\tb)\in\Cbb[[\xb,\tb]]$ is a nonzero element of the ring of formal power series and so are its normalized shifts
\bea
 H_w^{\nb}(\tb)&\&=\prod_{i<j}(x_i-x_j)^{n_in_j}\;\tau_w^{KP}\Big(\tb+\sum_{i=-\infty}^{\infty}n_i [x_i]\Big)  \neq0,
\label{eq:NonvanishingTauFunction}
\eea
for all $\nb\in\Ab$.  Therefore, for every $\nb\in\Ab$, the image of the Plucker coordinate corresponding to the trivial partition
\begin{equation}
\Phi_w^{\nb}(\pi_\emptyset) =H_w^{\nb,\emptyset} =H_w^{\nb}\neq 0.
\label{eq:NonvanishingFormalEvaluation}
\end{equation}
is a nonzero element.

Recalling that the ring of formal power series $\Cbb[[\xb,\tb]]$ is an integral domain, denote its field of fractions by $\mathbb F_{\xb,\tb}$. Lattice evaluations (\ref{eq:LatticeOfPointsInInfiniteGrassmannian}) are elements of the localized ring
\begin{equation}
\Phi_w^{\nb}(\pi_\lambda)=H_{w}^{\nb,\lambda} \in \left( S_{\xb}^\times\right)^{-1}\Cbb[[\xb,\tb]] \subset \mathbb F_{\xb,\tb}.
\end{equation}
The localized ring, in turn, is a subring of $\mathbb F_{\xb,\tb}$. By (\ref{eq:NonvanishingFormalEvaluation}), we can thus view
 $\Phi_w^{\nb}(\pi_\emptyset)$ as invertible elements of the field $\mathbb F_{\xb,\tb}$.

The analogue of the determinant formula (\ref{eq:DeterminantFormulaPluckerCoordinatesFinite}) for evaluations of the $\tau$-function
is given by the following.
\begin{proposition}
\label{prop:DeterminantFormulaAffineBigCell}
Let $w\in \Gr_{\HH_+}(\HH)$ be a fixed point in the infinite Grassmannian and $\nb\in\Ab$  a lattice point.
For every partition $\lambda=(\ab|\bb)$ of Frobenius rank $r$, the identity
\begin{equation}
\frac{H^{\nb,\lambda}_w}{H^{\nb}_w}=\det\left(\frac{H^{\nb+\alphab_{a_i}-\alphab_{-b_j-1}}_w}{H^{\nb}_w}\right)_{1\leq i,j\leq r}
\label{eq:AffineDeterminantRelationLattice}
\end{equation}
holds in the field of fractions $\mathbb F_{\xb,\tb}$.
\end{proposition}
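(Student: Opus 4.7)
The plan is to lift the classical generalized Giambelli identity from a pointwise identity on the big cell to an identity in the coordinate ring $\SS(\Gr_{\HH_+}(\HH))$, and then transport it to evaluations of the $\tau$-function via the homomorphism $\Phi_w^\nb$ of Theorem \ref{th:LatticeOfPointsInInfiniteGrassmannian}.

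First I would clear denominators in the generalized Giambelli formula (\ref{eq:DeterminantFormulaPluckerCoordinatesFinite}). Multiplying through by $\pi_\emptyset^r$, it becomes the polynomial identity
\[
\pi_\lambda\,\pi_\emptyset^{r-1} \;-\; \det\bigl(\pi_{(a_i|b_j)}\bigr)_{1\le i,j\le r} \;=\; 0
\]
evaluated at every point of the big cell $\Gr^\emptyset_{V_k}(\HH_{k,n})$ for $k,n$ large enough that $\lambda \in \mathbf\Lambda_{k,n}$. Since the big cell is Zariski-dense in $\Gr_{V_k}(\HH_{k,n})$ and the coordinate ring $\SS_{k,n}/\II_{k,n}$ is a domain (the direct-limit ring $\SS/\II$ being an integral domain by Proposition \ref{prop:SOverI_IntegralDomain}, which forces each $\II_{k,n}$ to be prime), the element
\[
G_\lambda \;:=\; \pi_\lambda\,\pi_\emptyset^{r-1} \;-\; \det\bigl(\pi_{(a_i|b_j)}\bigr)_{1\le i,j\le r} \;\in\; \II_{k,n} \;\subset\; \II
\]
descends to zero in $\SS(\Gr_{\HH_+}(\HH)) = \SS/\II$; alternatively one may invoke \cite{HB}, App.~C.8 directly.

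Next I apply the commutative-algebra homomorphism $\Phi_w^\nb: \SS(\Gr_{\HH_+}(\HH)) \to \mathbb F_{\xb,\tb}$ provided by Theorem \ref{th:LatticeOfPointsInInfiniteGrassmannian} to the relation $G_\lambda = 0$. Using $\Phi_w^\nb(\pi_\mu) = H_w^{\nb,\mu}$ for every partition $\mu$, this yields
\[
H_w^{\nb,\lambda}\bigl(H_w^\nb\bigr)^{r-1} \;=\; \det\bigl(H_w^{\nb,(a_i|b_j)}\bigr)_{1\le i,j\le r}.
\]
For each hook partition $(a_i|b_j)$ of Frobenius rank one, definition (\ref{eq:H_n0_w_lambda}) specializes to $H_w^{\nb,(a_i|b_j)} = H_w^{\nb + \alphab_{a_i} - \alphab_{-b_j-1}}$. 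Substituting this and dividing both sides by $(H_w^\nb)^r$, which is invertible in the field of fractions $\mathbb F_{\xb,\tb}$ by (\ref{eq:NonvanishingFormalEvaluation}), gives exactly (\ref{eq:AffineDeterminantRelationLattice}).

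The main obstacle is the first step: one must know that the generalized Giambelli relation lies in the Plücker ideal itself, rather than only holding pointwise on the big cell or only after localizing at $\pi_\emptyset$. Once the density-plus-integral-domain argument (or a direct reference to \cite{HB}, App.~C.8) settles this, the rest of the proof is a mechanical application of the fact that $\Phi_w^\nb$ is a well-defined ring homomorphism on the quotient $\SS/\II$.
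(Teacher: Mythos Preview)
Your proof is correct but takes a different route from the paper. The paper invokes directly the determinantal addition formula for $\tau$-functions (Proposition~2 in \cite{Shig}, or Proposition~3.10.4 in \cite{HB}),
\[
\frac{\tau^{KP}_w\big(\tb+\sum_{i=1}^r[\tilde x_i]-\sum_{i=1}^r[\tilde y_i]\big)}{\tau^{KP}_w(\tb)}\,\frac{\prod_{i<j}(\tilde x_i-\tilde x_j)(\tilde y_j-\tilde y_i)}{\prod_{i,j}(\tilde x_i-\tilde y_j)} =\det\left(\frac{\tau^{KP}_w(\tb+[\tilde x_i]-[\tilde y_j])}{\tau^{KP}_w(\tb)}\right)_{1\leq i,j\leq r},
\]
and then substitutes $\tb\to\tb+\sum n_i[x_i]$, $\tilde x_i\to x_{a_i}$, $\tilde y_j\to x_{-b_j-1}$; the normalizing Vandermonde-type factors on the left are exactly what is absorbed into the definition of $H_w^{\nb,\lambda}$. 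Your approach instead lifts the Giambelli relation $G_\lambda=\pi_\lambda\pi_\emptyset^{r-1}-\det(\pi_{(a_i|b_j)})$ into the Pl\"ucker ideal and then pushes it through the homomorphism $\Phi_w^{\nb}$ of Theorem~\ref{th:LatticeOfPointsInInfiniteGrassmannian}. This is arguably more in keeping with the paper's own philosophy---showing lattice identities as homomorphic images of coordinate-ring identities---and it sidesteps the need for an external addition formula. The paper's route, conversely, is independent of Theorem~\ref{th:LatticeOfPointsInInfiniteGrassmannian} and draws on a ready-made analytic identity. One small remark: your parenthetical that Proposition~\ref{prop:SOverI_IntegralDomain} ``forces each $\II_{k,n}$ to be prime'' is valid (via the injection $\SS_{k,n}/\II_{k,n}\hookrightarrow\SS/\II$), but note the paper's proof of that proposition already assumes primality of $\II_{k,n}$ as a classical input, so it is cleaner to cite \cite{HB}, App.~C.8, directly for $G_\lambda\in\II$, as you also suggest.
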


\begin{proof}
By Proposition 2 in \cite{Shig} (or Proposition 3.10.4 in \cite{HB}), we have
the following consequence  of the addition formulae (Theorem \ref{th:addition_formula},
eq.~\ref{tau_k_addition_formula}).
\begin{equation}
\frac{\tau^{KP}_w\big(\tb+\sum_{i=1}^r[ x_i]-\sum_{i=1}^r[ y_i]\big)}{\tau^{KP}_w\big(\tb\big)}\frac{\prod_{i<j}( x_i- x_j)(y_j- y_i)}{\prod_{i,j=1}^r(x_i-y_j)} =\det\left(\frac{\tau^{KP}_w\big(\tb+[x_i]-[ y_i]\big)}{\tau^{KP}\big(\tb\big)}\right)_{1\leq i,j\leq r}.
\label{eq:AdditionDeterminantFormula}
\end{equation}
Substituting
\begin{equation}
\widetilde{\tb}\ra\tb+\sum_{i=-\infty}^{\infty}n_i[x_i],\qquad \widetilde x_i\ra x_{a_i},\qquad \widetilde y_i \ra x_{-b_i-1}
\end{equation}
into (\ref{eq:AdditionDeterminantFormula}) we get precisely (\ref{eq:AffineDeterminantRelationLattice}).
\end{proof}

\begin{remark}
\label{big_cell_ring_extension}
Viewing lattice evaluations $H_w^{\nb}\in\mathbb F_{\xb,\tb}$ as elements of the fraction field of formal power series in the parameters, 
eq.~(\ref{eq:NonvanishingFormalEvaluation}) implies that, effectively, we are always in the big cell.
It is worth noting, however, that when the element $w\in \Gr_{\HH_+}(\HH)$ is chosen in such a way that $\tau_w^{KP}(\tb)$ is a polynomial, \cite{HL, KRV, HO} its lattice evaluations for particular values of parameters and points of the lattice are allowed to be zero, because 
$\tau_w^{KP}(\tb)$ is required to be nontrivial only as a function of $\tb$. This corresponds to the fact that although identity (\ref{eq:AffineDeterminantRelationLattice}) holds for rational functions of $\xb$ and $\tb$, it does not necessarily translate to evaluations of parameters, even in this simple case.
\label{rem:EffectiveBigCell}
\end{remark}


\subsection{Pl\"ucker coordinates for $w \in \Gr_{\HH_+}(\HH)$ vs. evaluation of $\tau^{KP}_w(\tb)$ on a lattice.}
\label{sec:lattice_eval_plucker_rels}

 Theorem \ref{thm:plucker_lattice_homomorph}  tells us that the ``shifted'' $\tau$-functions $H_w^{\nb,\lambda}$ are homomorphic images of the Plucker coordinates $\pi_\lambda$   on the Grassmannian $\Gr_{\HH_+}(\HH) $, and so satisfy Pl\"ucker relations. Considering just the  $H_w^{\nb}$; these are, up to some (important) normalizations, the $\tau$-function evaluated on a lattice in the infinite dimensional  Grassmannian $\Gr_{\HH_+}(\HH)$ parametrized by $\nb$. The Pl\"ucker relations then translate into recursions on the lattice.
 
Pl\"ucker relations are normally associated to several different coordinates evaluated for a fixed subspace $w$ rather than, as is the case here, a single Pl\"ucker coordinate (the $\tau$-function) evaluated on some subset of a lattice of points. It is natural to ask if there is a relation. The answer is in a sense given  by Theorem \ref{thm:plucker_lattice_homomorph}, but one can see this more directly, at least in a family of simple cases. Our specific example will be based on finite dimensional reductions of the infinite Grassmannian through a suitable subquotient procedure (\cite{HB}, Chapt. 6).

We  fix a finite family of points in the plane $\{x_i\}_{i=1,..,k}$ with $|x_i|>1$, and establish a correspondence between
\begin{itemize}
\item Pl\"ucker coordinates, parametrized by  a set of indices $\{i_1, ..., i_l\}\subset \{1,...,k\}$ for $ l\leq k$,  for a single subspace $w\in \Gr_{\HH_+}(\HH)$.
\item The single Pl\"ucker coordinate $\pi_\emptyset( w(i_1, ..., i_l))$ (i.e. an evaluation of the $\tau$-function) for a corresponding  family of elements
$w(i_1, ..., i_l)\in \Gr_{\HH_+}(\HH)$.
\end{itemize}
  We do this using the procedure  for reducing the infinite dimensional Grassmannian to  finite dimensional ones 
  given in \cite{HB}, Secs.~6.3, 6.4. 
 Consider the family $Gr(w_1, w_2) $ of subspaces $w \ss \HH$ satisfying 
  \be
   w_1  \ss   w  \ss  w_2 . 
   \ee
   where the subspaces $w_1 \ss w_2 \ss \HH$ are  defined by requiring a specified set of zeroes in the 
   the elements of $w_1\ss \HH_+$ and allowing a set of permissible poles in $w_2$:
   \be
    w_1 = r(z) \HH_+,  \quad w_2 := \frac{r(z)}{p(z)} \HH_+.
   \ee
Here $r(z)$, $p(z)$ are specified polynomials with roots inside the unit circle $S^1\ss \Cb$ centred at the origin, and
the virtual dimensions of $w_1$ and $w_2$ are  $-\deg(r)$ and $\deg(p)- \deg(r)$, respectively. Our space $Gr(w_1, w_2) $ is a union over virtual dimensions lying between $-\deg(r)$ and $\deg(p)- \deg(r)$ of finite dimensional Grassmannians  of spaces  in $\HH$.

For our example, we choose 
  \be
  r(z) := \prod_{i=1}^{k} (1-x_iz), \quad p(z) :=  \prod_{i=1}^{k} (1-x_iz)^2,
  \ee
   with $|x_i|>1$ and all $x_i$'s distinct.The quotient $w_2/w_1$ has a basis $\{d_{i,m}\}_{ i\in \{1,  \dots, k\}, \atop m\in \{1, 2\}}$ given by 
\be
d_{i,m}(z):= r(z) (1-x_i z)^{-m},  \quad i\in \{1,  \dots, k\}, \ m\in \{1, 2\}.
\label{eq:basis_W_12_quotient}
\ee

 Recall  that the lattice in the preceding sections  was given by the action of a discrete subgroup   of $\Gamma_+$
 defined by evaluating the $\tb$ variables at translates by sums of integer multiples of the elements
\be
 [x_i]:= \Big(x_i, \frac{x_i^2}{2}, \frac{x_i^3}{3}, \dots\Big).
\ee
 In terms of  $\HH =L^2(S^1)$, our discrete actions are generated by multiplication by
\be
S_i := \gamma_+([x_i])=e^{\sum_{i=1}^\infty \frac{1}{j}( x_i z )^j} = e^{-\ln (1-x_iz)} = \frac{1}{1-x_iz}, \quad i =0, \dots ,k.
\ee

Consider the multiplicative action of $S_i$  on  $\Gr_{\HH_+}(\HH) $ mapping  a space  $w$ to $(1-x_iz)^{-1} w$. This maps the spaces in $Gr(w_1, w_2) $ to $\Gr_{\HH_+}(\HH) $  and does not preserve $Gr(w_1, w_2) $.
It does however act on a multi-filtration on $\HH/w_1$ defined by the increasing order of poles of elements  at the  $x_\ell^{-1}, \ell=1,...,k$, augmenting  
 the order by one  for $(x_i)^{-1}$ and leaving the order unchanged at the other $(x_\ell)^{-1} $. 

Now consider  the action of  $S_{i_1}S_{i_2}...S_{i_l}$, with the $i_j$'s distinct, on an element $w \in Gr(w_1, w_2)$ 
of virtual dimension $-l$. The resulting plane $w(i_1, ..., i_l) = S_{i_1}S_{i_1}...S_{i_l}w$ is of virtual dimension $0$.
The null partition Pl\"ucker coordinate $\pi_\emptyset (S_{i_1}S_{i_1}...S_{i_l}w)$ is the coordinate   corresponding to the exterior product 
of all the basis elements $d_{i,1}$ (\ref{eq:basis_W_12_quotient}) of $w_2/w_1$. This in turn is  the Pl\"ucker coordinate  
of  $w$ corresponding  to  the  exterior product of the basis elements $d_{i,1}, i\notin \{i_1,i_2,..,i_l\}$
of $w_2/w_1$.

For any plane $\hat w$ of virtual dimension zero, the $\tau$-function $\tau_{\hat w}^{KP}$, evaluated at $0$, is the Pl\"ucker coordinate (\ref{tau_null_plucker}) of the subspace $\hat w\ss\HH$
  corresponding to the null partition; that is, up to a projective scaling factor,  the determinant (\ref{tau_fn_det_def})
 of the projection from $\hat w \in \Gr_{\HH_+}(\HH)$ to  the subspace $\HH_+$. From the definitions, the expressions 
 $H_w^{\alphab_{i_1}+\dots+\alphab_{i_l}}(0)$ of Proposition \ref{prop:FiniteGrassmannianOrigin}
 are also null partition Pl\"ucker coordinates,  but of an element $w$ shifted by $S_{i_1}...S_{i_l}$, 
up to a scaling factor.
 
 We have seen that these  are Pl\"ucker coordinates of the fixed plane $w$, 
and with the identifications made, it is easy to see to see that the relations 
 implied by the homomorphism (\ref{eq:pi_lambda_H_alpha_i_hom}) of Proposition (\ref{prop:pi_H_alpha_i_hom}) 
 become Pl\"ucker relations for the fixed plane $w$, with suitable rescalings. 
 These  are essential; one cannot adjust each Pl\"ucker coordinate individually. For the relations to correspond, one needs the  appropriate set of scalings  by defining  $H^{\nb}_w(\tb)$
 as in (\ref{eq:NormalizedEvaluationOfTauFunction}).

\section{Lattice mappings to $\Gr^\LL_{\HH_+}(\HH, \omega)$}
\label{sec:lattice_embeddings_lagrangian}

In this section we show that, restricting to KP $\tau$-functions corresponding
to the CKP hierarchy  and evaluating these on suitably defined lattices within their KP orbits, 
we can associate to every element $w^0\in \Gr_{\HH_+}^{\LL}(\HH,\omega)$
of $\Gr^\LL_{\HH_+}(\HH, \omega)$ a lattice of homomorphisms from the coordinate ring of
$\Gr^\LL_{\HH_+}(\HH, \omega)$ to a localization of the ring $\Cbb[[\yb, \tb']]$ of formal power series in a set of
 indeterminates $(\yb, \tb')$, where $\tb'=(t_1, 0, t_3, 0 \dots)$ is identified with the odd KP flow parameters.
We also prove the analogue of Theorem \ref{th:LatticeOfPointsInInfiniteGrassmannian} for the Lagrangian case.
For evaluations of the formal power series $\tau^{KP}_{w^0}$ that  are convergent (e.g., 
when $\tau^{KP}_{w^0}$ is a polynomial), our result implies that $\tau^{KP}_{w^0}$ gives rise to a lattice of points 
in $\Gr_{\HH_+}^{\LL}(\HH,\omega)$ whose Pl\"ucker coordinates are given by normalized evaluation of the $\tau$-function.

\subsection{Lattice evaluations of Lagrangian KP $\tau$-functions}
\label{sub:lattice_CKP_red}

Consider a family of formal parameters $\{y_i\}_{i\in\Nbb^+}$ labelled by positive integers.
 Denote by $\Cbb[[\yb,\tb']]$ the ring of formal power series in  the variables $y_i$ and
 the odd KP flow variables $\{t_{2j-1}\}_{j\in \Nbb^+}$. Similarly to (\ref{eq:NormalizedEvaluationOfTauFunction}),
 in order to define  the properly normalized evaluation of the $\tau$-function we have to include certain
denominators involving the variables $y_i$. To do this, consider the multiplicative subset of the ring $\Cbb[[\yb,\tb']]$
of formal power series generated by elements $\{y_i\pm y_j\}_{i,j\in \Nbb^+}$
\begin{equation}
S_{\yb}^\times=\langle y_i\pm y_j\;|\;i,j\in\Nbb^+\rangle \ \ss\  \Cbb[[\yb]] \ss  \Cbb[[\yb, \tb']].
\label{eq:MultiplicativeSubsetY}
\end{equation}
\begin{remark}
The generators of $S_{\yb}^\times$ correspond to roots of $\mathfrak{sp}(\infty)$.
\end{remark}
Since $\Cbb[[\yb,\tb']]$ is an integral domain, we can introduce its localization
$\left(S_{\yb}^\times\right)^{-1}\Cbb[[\yb,\tb']]$ with respect to $S_{\yb}^\times$.
Note that $\left(S_{\yb}^\times\right)^{-1}\Cbb[[\yb,\tb']]$ is itself an integral domain.

The following map of generators defines a surjective homomorphism of graded algebras
\begin{equation}
\begin{aligned}
\TT_{\grs\grp}:\Cbb[[\xb,\tb]]\twoheadrightarrow&\;\Cbb[[\yb,\tb']],\\[0.5em]
t_j\mapsto&\left\{\begin{array}{ll}
t_j,&j\equiv 1\bmod 2,\\[0.3em]
0,&j\equiv 0\bmod 2,
\end{array}\right.\\[0.5em]
x_i\mapsto&\left\{\begin{array}{ll}
y_{i+1},&i\geq0,\\[0.3em]
-y_{-i},&i<0.
\end{array}\right.
\end{aligned}
\label{eq:LHomomorphismDefinition}
\end{equation}
\begin{lemma}
The homomorphism $\TT_{\mathfrak{sp}}$ extends uniquely to a homomorphism of localized rings
\begin{equation}
\TT_{\mathfrak{sp}}: \left(S_{\xb}^\times\right)^{-1}\Cbb[[\xb,\tb]] \rightarrow \left(S_{\yb}^\times\right)^{-1}\Cbb[[\yb,\tb']].
\label{eq:LambdaExtended}
\end{equation}
\end{lemma}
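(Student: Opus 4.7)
The plan is to invoke the universal property of localization: a ring homomorphism $f: R \to R'$ extends (uniquely) to a homomorphism $S^{-1}R \to R'$ whenever $f(S)$ consists of units of $R'$. Applied here with $R = \Cbb[[\xb,\tb]]$, $R' = \left(S_{\yb}^\times\right)^{-1}\Cbb[[\yb,\tb']]$, $S = S_{\xb}^\times$ and $f = \TT_{\mathfrak{sp}}$, it suffices to check that every generator $x_i - x_j$ (with $i \neq j$) of the multiplicative set $S_{\xb}^\times$ is sent into the multiplicative set $S_{\yb}^\times$ (up to sign), hence becomes invertible in the target.

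The verification is case-by-case according to the signs of $i$ and $j$, using the defining formula (\ref{eq:LHomomorphismDefinition}): for $i, j \geq 0$ one gets $\TT_{\mathfrak{sp}}(x_i - x_j) = y_{i+1} - y_{j+1}$; for $i, j < 0$ one gets $\TT_{\mathfrak{sp}}(x_i - x_j) = y_{-j} - y_{-i}$; and for mixed signs $i \geq 0 > j$ or $j \geq 0 > i$ one gets $\pm(y_{i+1} + y_{-j})$ or $\pm(y_{-i} + y_{j+1})$. In every case the image lies in $\pm S_{\yb}^\times$ and is therefore a unit of $\left(S_{\yb}^\times\right)^{-1}\Cbb[[\yb,\tb']]$, as required. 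Uniqueness of the extension is automatic from the universal property, since the extension must act on $(x_i - x_j)^{-1}$ as the inverse of the image of $x_i - x_j$.

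The only mild subtlety is that $S_{\yb}^\times$ as defined in (\ref{eq:MultiplicativeSubsetY}) is generated by the elements $y_i \pm y_j$ for $i,j \in \Nbb^+$; one should note that when $i = j$ the element $y_i - y_j = 0$ is not a generator, but this case does not arise here since all images of nontrivial generators $x_i - x_j$ with $i \neq j$ land on genuinely nonzero elements $\pm(y_a \pm y_b)$ with $a \neq b$ in the ``minus'' case and arbitrary $a, b \in \Nbb^+$ in the ``plus'' case (with $y_i + y_i = 2 y_i$ invertible only if we include such elements --- however, inspection of the case analysis shows that $a = b$ does not occur either, since the mixed-sign case produces $y_{i+1} + y_{-j}$ with $i \geq 0$ and $-j \geq 1$, so the indices $i+1$ and $-j$ need not differ, e.g.\ $i=0, j=-1$ gives $y_1 + y_1 = 2 y_1$). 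Thus one must verify that $S_{\yb}^\times$ as given already includes all such elements, or equivalently that $2y_i \in S_{\yb}^\times$ (trivially true since $y_i + y_i$ is of the allowed form with $i = j$). This accounts for the ``$\pm$'' in the definition of $S_{\yb}^\times$.

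I expect no serious obstacle: the entire proof is an application of the universal property together with the bookkeeping above. The only thing that matters conceptually is that the definition of $S_{\yb}^\times$ with both $y_i + y_j$ and $y_i - y_j$ generators was chosen precisely so that $\TT_{\mathfrak{sp}}$ sends $S_{\xb}^\times$ into units, reflecting the root-system interpretation in the remark preceding the lemma: restricting KP to CKP corresponds to folding the $\grg\grl(\infty)$ root system (with roots $x_i - x_j$) onto the $\grs\grp(\infty)$ root system (with roots $\pm y_i \pm y_j$).
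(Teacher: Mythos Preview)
Your proof is correct and follows essentially the same approach as the paper: both invoke the universal property of localization after checking that $\TT_{\mathfrak{sp}}$ sends $S_{\xb}^\times$ into the units of the target ring. Your case analysis is more explicit than the paper's (which simply asserts $\TT_{\mathfrak{sp}}(S_{\xb}^\times) = S_{\yb}^\times$), and your observation about the long-root elements $2y_i$ arising from pairs like $(i,j)=(0,-1)$ correctly identifies why the definition of $S_{\yb}^\times$ must allow $i=j$ in the ``plus'' case.
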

\begin{proof}
Consider the multiplicative subset $S_{\xb}^\times\subset\Cbb[[\xb,\tb]]$ introduced in (\ref{eq:MultiplicativeSubsetX}) and note that its image is exactly the multiplicative set (\ref{eq:MultiplicativeSubsetY})
\begin{equation}
\TT_{\mathfrak{sp}}(S_{\xb}^\times) = S_{\yb}^\times.
\end{equation}
In other words, all elements of $\TT_{\mathfrak{sp}}(S_{\xb}^\times)$ are units of the localized ring $\left(S_{\yb}^\times\right)^{-1}\Cbb[[\yb,\tb']]$. It follows from the Universal Property of localization
that there exists a unique homomorphism between localized rings that makes the following diagram commutative
\begin{equation}
\begin{tikzcd}
\Cbb[[\xb,\tb]]\arrow[r,"\TT_{\mathfrak{sp}}"]\arrow[d,hook]&\left(S_{\yb}^\times\right)^{-1}\Cbb[[\yb,\tb']]\\
\left(S_{\xb}^\times\right)^{-1}\Cbb[[\xb,\tb]]\arrow[ru,dashed,"\exists!"']
\end{tikzcd}
\end{equation}
\end{proof}

Define the Lagrangian evaluation of the $\tau$-function as the $\TT_{\mathfrak{sp}}$-image of (\ref{eq:NormalizedEvaluationOfTauFunction}) for all $\nb\in\Ab$
\begin{equation}
\begin{aligned}
h^{\nb}_{w^0}:=&\TT_{\mathfrak{sp}}(H^{\nb}_{w^0})\\
=&\prod_{1\leq i< j}(y_i-y_j)^{n_{i-1}n_{j-1}+n_{-i}n_{-j}} \prod_{i,j=1}^{\infty}(-1)^{n_{-i}n_{j-1}}(y_i+y_j)^{n_{-i}n_{j-1}}
\\ &\qquad\times
\tau_{w^0}^{KP}\left(\tb'+\sum_{i=1}^{\infty}\big(n_{i-1}[y_i]+n_{-i}[-y_i]\big)\right).
\end{aligned}
\label{eq:LagrangianEvaluationTauFunction}
\end{equation}
\begin{remark}
We view $h_{w^0}^{\nb}\in(S_{\yb}^\times)^{-1}\Cbb[[\yb,\tb']]$ here
as an element of the localized ring of formal power series. Note, however, that in the simplest special case
when the element $w^0\in \Gr_{\HH_+}^{\LL}(\HH,\omega)$ of the infinite dimensional
Lagrangian Grassmannian is  such that $\tau^{KP}_{w^0}(\tb')$ is a polynomial
in the KP flow parameters, we can evaluate all parameters in (\ref{eq:LagrangianEvaluationTauFunction}).
If we then set $\{t_1,t_3,t_5,\dots\}$ to be arbitrary complex numbers and  the $y_i$'s  to be pairwise distinct complex
numbers satisfying $y_i+y_j\neq 0$, formula (\ref{eq:LagrangianEvaluationTauFunction}) defines a lattice of complex numbers.
\end{remark}

Consider the infinite dimensional sublattice
\begin{equation}
\Bb=\langle \betab_i:=\alphab_{i-1}-\alphab_{-i}\;|\;i\in\Nbb^+\rangle \ \subset\ \Ab,
\end{equation}
with generators labeled by the positive integers; i.e.,  $\Bb\ss \Ab$ consists of elements 
\be
\sum_{i\in \Zbb} n_i \alphab_i = \sum_{i\in \Nbb^+} n_{i-1} \betab_i \in \Ab
\ee
satisfying
\be
n_{i-1}   = -   n_{-i }, \quad  \forall \ i\in \Nbb^+.
\label{n_prime_i_sym}
\ee
We  denote such elements $\nb'$.
Similarly to the previous section, for every element $\nb'\in\Bb$ of the sublattice and every
partition $\lambda=(\ab|\bb)$
of Frobenius rank $r$ we introduce the  notation
\be
h_{w^0}^{\nb',\lambda}:=h^{\mb} =h_{w^0}^{\nb'+\sum_{i=1}^r \alphab_{a_i}-\sum_{i=1}^r\alphab_{-b_i-1}},
\label{eq:hn0lambdadefinition}
\ee
where
 \be
\mb := \nb'+\sum_{i=1}^r \alphab_{a_i}-\sum_{i=1}^r\alphab_{-b_i-1}
\label{eq:mb_def}
\ee
and $\nb'\in \Bb$  is an arbitrary element of the sublattice, with components $\{n'_i\}_{i\in\Zbb}$
 satisfying (\ref{n_prime_i_sym}).
Note that for a symmetric partition $\lambda=\lambda^T$, we have  $\{a_i=b_i\}_{i-1, \dots, r}$, so
$\mb$ is also in the sublattice $\Bb$. But for a general partition $\lambda$, $\mb\in \Ab$ is not in $\Bb$.

For each $i\in \Nbb^+$, define the map $\hat\delta_i^{\LL}$
which, evaluated on the generators, gives
\bea
\hat\delta_i^{\LL} : y_j&\& \mapsto y_j,\cr
\hat\delta_i^{\LL} : t_{2j+1}&\&\mapsto t_{2j+1}+\dfrac{2y_i^{2j+1}}{2j+1}.
\eea
\begin{lemma} The maps  $\hat\delta_i^{\LL}$ extend to homomorphisms
 \be
  \hat\delta_i^{\LL}:\Cbb[[\yb,\tb']]\rightarrow \Cbb[[\yb,\tb']],
  \ee
 that mutually commute
\begin{equation}
\hat\delta_i^{\LL}\hat\delta_j^{\LL} =\hat\delta_j^{\LL}\hat\delta_i^{\LL}, \quad \forall\  i,j\in\Nbb^+.
\end{equation}

\end{lemma}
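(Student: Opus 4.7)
The plan is to deduce this lemma from Lemma \ref{lemm:ShiftAutomorphismFormalPowerSeries} via the surjective homomorphism $\TT_{\mathfrak{sp}}$ of (\ref{eq:LHomomorphismDefinition}). The key step is to establish the generator-level compatibility identity
\[
\hat\delta_i^{\LL}\circ\TT_{\mathfrak{sp}} \;=\; \TT_{\mathfrak{sp}}\circ\hat\delta_{i-1}\circ\hat\delta_{-i}^{-1}
\]
by evaluating both sides on each $x_k$ and $t_j$. On the $x_k$'s both sides coincide with $\TT_{\mathfrak{sp}}$. On $t_j$ with $j$ even, both sides vanish: the right side gives $\TT_{\mathfrak{sp}}\bigl(t_j+x_{i-1}^{j}/j-x_{-i}^{j}/j\bigr)=0+y_i^{j}/j-(-y_i)^{j}/j=0$. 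On $t_{2k+1}$ the right side yields $t_{2k+1}+y_i^{2k+1}/(2k+1)-(-y_i)^{2k+1}/(2k+1)=t_{2k+1}+2y_i^{2k+1}/(2k+1)$, matching the action of $\hat\delta_i^{\LL}$ on $\TT_{\mathfrak{sp}}(t_{2k+1})=t_{2k+1}$.

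Once this identity is in hand, the well-definedness of $\hat\delta_i^{\LL}$ on $\Cbb[[\yb,\tb']]$ follows automatically. Indeed, the composite $\TT_{\mathfrak{sp}}\circ\hat\delta_{i-1}\circ\hat\delta_{-i}^{-1}$ is a ring homomorphism into $\Cbb[[\yb,\tb']]$ by Lemma \ref{lemm:ShiftAutomorphismFormalPowerSeries}, and the identity above shows that it factors through the surjection $\TT_{\mathfrak{sp}}$; the induced map on the quotient is precisely $\hat\delta_i^{\LL}$, which is therefore a well-defined homomorphism.

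For commutativity, iterating the identity gives
\[
\hat\delta_i^{\LL}\circ\hat\delta_j^{\LL}\circ\TT_{\mathfrak{sp}} \;=\; \TT_{\mathfrak{sp}}\circ\hat\delta_{i-1}\circ\hat\delta_{-i}^{-1}\circ\hat\delta_{j-1}\circ\hat\delta_{-j}^{-1},
\]
and since the four shifts on the right mutually commute by Lemma \ref{lemm:ShiftAutomorphismFormalPowerSeries}, surjectivity of $\TT_{\mathfrak{sp}}$ yields $\hat\delta_i^{\LL}\hat\delta_j^{\LL}=\hat\delta_j^{\LL}\hat\delta_i^{\LL}$.

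There is no real obstacle here; the only mild subtlety is the sign bookkeeping in the identity, which hinges precisely on the parity distinction $y_i^{j}=(-y_i)^{j}$ versus $y_i^{j}=-(-y_i)^{j}$ for $j$ even or odd — the very mechanism by which the symplectic reduction $\TT_{\mathfrak{sp}}$ eliminates the even KP flow parameters. (Alternatively, one may verify directly that $\hat\delta_i^{\LL}$ preserves the grading $\deg y_j=1$, $\deg t_{2j+1}=2j+1$, so that the polynomial-level definition extends by continuity to the completion $\Cbb[[\yb,\tb']]$, with commutativity then checked on generators.)
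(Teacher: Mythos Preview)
Your proof is correct. The paper states this lemma without proof (just as it does the analogous Lemma~\ref{lemm:ShiftAutomorphismFormalPowerSeries}), evidently regarding both as immediate from the grading-preservation argument you sketch in your final parenthetical remark. Your main route through the intertwining identity $\hat\delta_i^{\LL}\circ\TT_{\mathfrak{sp}} = \TT_{\mathfrak{sp}}\circ\hat\delta_{i-1}\circ\hat\delta_{-i}^{-1}$ is more than is strictly needed, but it is valid and in fact anticipates the relation the paper records immediately after the lemma in eq.~(\ref{eq:DeltaLhAction}). The one step that could be made more explicit is why the generator-level check suffices for the factoring claim: formally one needs that $\ker\TT_{\mathfrak{sp}}$ (the closed ideal generated by the $t_{2j}$ and the $x_{k-1}+x_{-k}$) is annihilated by the composite $\TT_{\mathfrak{sp}}\circ\hat\delta_{i-1}\circ\hat\delta_{-i}^{-1}$, and your computations on $t_j$ with $j$ even and on the $x_k$ do exactly this.
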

They act on the evaluations of the $\tau$-function as shifts along the generators $\betab_i\in\Bb$ of the sublattice $\Bb\subset\Ab$. For a given element $w^0\in\Gr_{\HH_+}^{\LL}(\HH,\omega)$ of infinite dimensional Lagrangian Grassmannian, we have
\begin{equation}
\hat\delta_i^{\LL} h_{w^0}^{\nb}=h_{w^0}^{\nb+\beta_i}=\TT_{\mathfrak{sp}}\big(H_{w^0}^{\nb+\betab_i}\big)=\TT_{\mathfrak{sp}}\big(\hat\delta_{i-1}\hat\delta_{-i}^{-1}H_{w^0}^{\nb}\big)
\label{eq:DeltaLhAction}
\end{equation}
for all $\nb\in\Ab$ and $i\in\Nbb^+$


\subsection{Lagrangian condition on the big cell}
\label{sub:lagrangian_big_cell}

The analogue of Lagrangian condition on the big cell (\ref{eq:LagrangianConditionBigCellFinite})
for Lagrangian evaluations (\ref{eq:hn0lambdadefinition}) of the $\tau$-function is given by the following.
\begin{lemma}
\label{symmetric_h_lagrange_cond}
Let $w^0\in \Gr_{\HH_+}^{\LL}(\HH,\omega)$ be an element of the infinite Lagrangian Grassmannianian. We 
then have
\begin{equation}
h_{w^0}^{\0b,(i|j)}=h_{w^0}^{\0b,(j|i)}, \quad \forall \ i,j\in\Nbb^+.
\end{equation}
\label{lemm:SymmetryForEvaluationsOnTheBigCell}
\end{lemma}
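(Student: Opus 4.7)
The plan is to unpack the definitions, match the prefactors by symmetry, and then invoke the CKP symmetry condition (\ref{eq:LagrangianIsotropyConditionTau}) to match the $\tau$-function arguments under the tilde involution.

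First I would note that the hook partition $(i|j)$ has Frobenius rank $r=1$ with $\ab=(i)$, $\bb=(j)$, so by (\ref{eq:hn0lambdadefinition}) we have $h_{w^0}^{\0b,(i|j)}=h_{w^0}^{\alphab_i-\alphab_{-j-1}}$. Evaluating the right-hand side through (\ref{eq:LagrangianEvaluationTauFunction}) with $n_i=1$, $n_{-j-1}=-1$ and all other components zero, only one factor of the double product contributes: the pair $(i',j')=(j+1,i+1)$ produces $(-1)^{-1}(y_{j+1}+y_{i+1})^{-1}$, while the two $(y_{i'}-y_{j'})$-products are vacuous (they would require two nonzero nonnegative components or two nonzero negative components). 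Writing out $h_{w^0}^{\0b,(j|i)}$ in the same way gives exactly the same prefactor $-(y_{i+1}+y_{j+1})^{-1}$, so the prefactors coincide. This reduces the claim to the identity
\begin{equation}
\tau_{w^0}^{KP}\bigl(\tb'+[y_{i+1}]-[-y_{j+1}]\bigr)=\tau_{w^0}^{KP}\bigl(\tb'+[y_{j+1}]-[-y_{i+1}]\bigr).
\end{equation}

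Next I would verify the elementary identity $\widetilde{[y]}=-[-y]$ by comparing the $k$-th components of both sides, using that $[y]_k=y^k/k$ while the tilde operation multiplies the $k$-th component by $(-1)^{k+1}$. Applying the tilde involution to the left-hand argument and using $\tilde{\tb'}=\tb'$ (since $\tb'$ has no even part), together with this identity, yields
\begin{equation}
\widetilde{\tb'+[y_{i+1}]-[-y_{j+1}]}=\tb'-[-y_{i+1}]+[y_{j+1}],
\end{equation}
which coincides with the right-hand argument. The CKP symmetry condition (\ref{eq:LagrangianIsotropyConditionTau}), which holds because $w^0\in\Gr_{\HH_+}^{\LL}(\HH,\omega)$, then gives the required equality of $\tau$-values, completing the proof.

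The only nontrivial point is the bookkeeping with the indexing conventions of (\ref{eq:LagrangianEvaluationTauFunction}) (where positive-index components of $\nb$ contribute shifts $[y_{i'}]$ through $n_{i'-1}$, and negative-index components contribute shifts $[-y_{i'}]$ through $n_{-i'}$), together with the $+1$ offset between $(i|j)$ and the associated lattice element $\alphab_i-\alphab_{-j-1}$; once these are tracked correctly everything reduces to the two clean observations that (i) the combinatorial prefactor is manifestly symmetric in $i\leftrightarrow j$, and (ii) the two $\tau$-arguments are tilde-related via $\widetilde{[y]}=-[-y]$. No additional input beyond the CKP isotropy condition is required.
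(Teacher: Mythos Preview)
Your proof is correct and follows essentially the same approach as the paper's: unwind the definition via (\ref{eq:hn0lambdadefinition}) and (\ref{eq:LagrangianEvaluationTauFunction}), observe that the rational prefactor $-(y_{i+1}+y_{j+1})^{-1}$ is symmetric in $i,j$, and then use the CKP symmetry (\ref{eq:LagrangianIsotropyConditionTau}) together with the identity $\widetilde{[y]}=-[-y]$ to match the $\tau$-function arguments. The only difference is a harmless index shift: the paper writes $h_{w^0}^{\alphab_{i-1}-\alphab_{-j}}$ (treating $i,j\in\Nbb^+$ as shifted labels), whereas you write $h_{w^0}^{\alphab_i-\alphab_{-j-1}}$ (reading $(i|j)$ as Frobenius indices, which is consistent with (\ref{eq:hn0lambdadefinition})); this changes $y_i,y_j$ to $y_{i+1},y_{j+1}$ throughout but is otherwise immaterial.
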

\begin{proof}
\begin{equation}
\begin{aligned}
h_{w^0}^{\0b,(i|j)} \;\stackrel{(\ref{eq:hn0lambdadefinition})}{=}\;& h_{w^0}^{\alphab_{i-1}-\alphab_{-j}}\\[0.5em] \;\stackrel{(\ref{eq:LagrangianEvaluationTauFunction})}{=}\;& \frac{-\tau_{w^0}^{KP}\left(\tb'+[y_i]-[-y_j]\right)}{y_i+y_j} \\[0.5em] \;\stackrel{(\ref{eq:LagrangianIsotropyConditionTau})}{=}\;& \frac{-\tau_{w^0}^{KP}\left(\tb'+[y_j]-[-y_i]\right)}{y_j+y_i}=h_{w^0}^{\alphab_{j-1}-\alphab_{-i}}=h_{w^0}^{\0b,(j|i)}.
\label{eq:SymmetryProofEvaluationsAffineCell}
\end{aligned}
\end{equation}
\end{proof}

\begin{corollary}
Let $w^0\in\Gr_{\HH_+}^{\LL}(\HH,\omega)$ be a fixed point in the infinite dimensional Lagrangian Grassmannian such that $h_{w^0}^{\0b}\neq0$ is a nontrivial formal power series. The following identity holds in the field of fractions $\Fbb_{\yb,\tb'}$
\begin{equation}
\frac{h^{\0b,\lambda}_{w^0}}{h^{\0b}_{w^0}}=\det\left(\frac{h_{w^0}^{\mathbf\alphab_{\min(a_i,b_j)} -\mathbf\alphab_{-\max(a_i,b_j)-1}}}{h^{\0b}_{w^0}}\right)_{1\leq i,j\leq r}
\label{eq:h0lambdaeffectivebigcell}
\end{equation}
\end{corollary}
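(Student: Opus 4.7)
The strategy is to derive this Lagrangian determinantal identity directly from its non-Lagrangian counterpart, Proposition \ref{prop:DeterminantFormulaAffineBigCell}, by pulling that identity back through the symplectic restriction homomorphism $\TT_{\mathfrak{sp}}$, and then symmetrizing the resulting matrix entries using Lemma \ref{lemm:SymmetryForEvaluationsOnTheBigCell}.

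First, I would specialize Proposition \ref{prop:DeterminantFormulaAffineBigCell} to the base point $\nb = \0b$ and the subspace $w = w^0 \in \Gr_{\HH_+}^{\LL}(\HH, \omega)$. To avoid having to extend $\TT_{\mathfrak{sp}}$ to the ambient fraction field $\Fbb_{\xb, \tb}$ (on which it is not a priori defined, the underlying map $\Cbb[[\xb, \tb]] \twoheadrightarrow \Cbb[[\yb, \tb']]$ being far from injective), I would first clear denominators, so that the identity reads
\begin{equation*}
H^{\0b,\lambda}_{w^0}\,(H^{\0b}_{w^0})^{r-1} \;=\; \det\!\left(H^{\alphab_{a_i} - \alphab_{-b_j - 1}}_{w^0}\right)_{1\leq i,j\leq r},
\end{equation*}
as an equality in the integral domain $(S_{\xb}^\times)^{-1}\Cbb[[\xb,\tb]]$.

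Next, I apply $\TT_{\mathfrak{sp}}$ in its extension (\ref{eq:LambdaExtended}) to the localized rings. Being a ring homomorphism it commutes with products and determinants, and the very definition (\ref{eq:LagrangianEvaluationTauFunction}) gives $\TT_{\mathfrak{sp}}(H^{\nb}_{w^0}) = h^{\nb}_{w^0}$ for every $\nb \in \Ab$. The displayed identity therefore pushes forward to
\begin{equation*}
h^{\0b,\lambda}_{w^0}\,(h^{\0b}_{w^0})^{r-1} \;=\; \det\!\left(h^{\alphab_{a_i} - \alphab_{-b_j - 1}}_{w^0}\right)_{1\leq i,j\leq r}
\end{equation*}
in $(S_{\yb}^\times)^{-1}\Cbb[[\yb,\tb']]$. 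Since the hypothesis guarantees $h^{\0b}_{w^0}$ is a unit of $\Fbb_{\yb, \tb'}$, I may divide through by $(h^{\0b}_{w^0})^{r}$ in that fraction field to recover the claimed ratio form, but still with the entries written in the unsymmetric shape $h^{\alphab_{a_i} - \alphab_{-b_j-1}}_{w^0}$.

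The last step is to rewrite each entry in $\min/\max$ form via Lemma \ref{lemm:SymmetryForEvaluationsOnTheBigCell}. By (\ref{eq:hn0lambdadefinition}), the $(i,j)$ entry is the Lagrangian evaluation of the rank-one (hook) partition with Frobenius indices $(a_i|b_j)$; the Lemma, which reflects the isotropy condition (\ref{eq:LagrangianIsotropyConditionTau}) on $\tau^{KP}_{w^0}$, asserts the invariance of this evaluation under the swap of its two Frobenius indices, so each entry equals $h^{\alphab_{\min(a_i,b_j)} - \alphab_{-\max(a_i,b_j) - 1}}_{w^0}$, and the claimed identity follows. The only subtlety I anticipate is a small bookkeeping check translating between the one-based Frobenius notation $(i|j)$ used in the statement of Lemma \ref{lemm:SymmetryForEvaluationsOnTheBigCell} and the zero-based indices $(a_i|b_j)$ drawn from the arms and legs of $\lambda$; no genuine analytic or algebraic difficulty arises, as every step prior to the final division takes place inside an integral domain, and the division is shielded by the standing hypothesis $h^{\0b}_{w^0} \neq 0$.
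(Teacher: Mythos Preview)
Your proposal is correct and follows essentially the same route as the paper: apply $\TT_{\mathfrak{sp}}$ to the determinantal identity of Proposition \ref{prop:DeterminantFormulaAffineBigCell} at $\nb=\0b$, then symmetrize the matrix entries via Lemma \ref{lemm:SymmetryForEvaluationsOnTheBigCell}. Your extra step of clearing denominators before applying $\TT_{\mathfrak{sp}}$ (so as to stay inside the localized ring where the homomorphism is actually defined, rather than the ambient fraction field) is a welcome bit of rigor that the paper's one-line proof glosses over.
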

\begin{proof}
Apply homomorphism $\TT_{\mathfrak{sp}}$ to both sides of (\ref{eq:AffineDeterminantRelationLattice}) and use Lemma \ref{lemm:SymmetryForEvaluationsOnTheBigCell}.
\end{proof}


\subsection{Lattice points in Lagrangian Grassmannian}
\label{sub:lattice_lagrangian}

Lemma \ref{symmetric_h_lagrange_cond} gives a condition, valid on the big cell,  for a subspace to be Lagrangian;
In what follows  it is explained why the result on the big cell is sufficient to imply it in general (cf.~remark \ref{big_cell_ring_extension}).
Our first step is to show that the evaluation (\ref{eq:LagrangianEvaluationTauFunction}) gives rise to a family of homomorphisms of the coordinate rin
identifying the homomorphism that corresponds to the origin $\0b\in\Bb$ of the sublattice.
\begin{proposition}
Fix an element $w^0\in \Gr^{\LL}_{\HH_+}(\HH,\omega)$ of the infinite dimensional Lagrangian Grassmannian such that $\tau^{KP}_{w^0}(\tb')\neq 0$ is a nontrivial formal power series. The following map, defined on the linear basis elements $\pi_\lambda\in \SS(\Gr^{\LL}_{\HH_+}(\HH,\omega))$,   generates  a homomorphism of commutative algebras
\bea
\Psi_{w^0}^{\0b}:\SS(\Gr^{\LL}_{\HH_+}(\HH,\omega))&\&\rightarrow \left(S_{\yb}^\times\right)^{-1}\Cbb[[\yb,\tb']],\cr
&\& \cr
\Psi_{w^0}^{\0b}: \pi_\lambda&\&\mapsto h_{w^0}^{\0b,\lambda}.
\eea
\label{prop:LagrngianGrassmannianianHomomorphismOrigin}
\end{proposition}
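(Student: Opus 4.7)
The plan is to define the candidate map on the free polynomial ring $\SS = \Cbb[\pi_\lambda\,|\,\lambda\in\mathbf\Lambda]$ by setting $\pi_\lambda\mapsto h_{w^0}^{\0b,\lambda}$, and then to show that the resulting ring homomorphism annihilates the ideal $\II^\LL$ generated by the Pl\"ucker and the Lagrange relations. Descent to the quotient $\SS/\II^\LL = \SS(\Gr^\LL_{\HH_+}(\HH,\omega))$ will then give the desired map. There are therefore two things to verify: first, that all Pl\"ucker quadratic forms $p_{I,J}$ are sent to zero, and second, that all Lagrange linear relations (\ref{eq: lagrange_cond_finite}) are sent to zero. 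The CKP symmetry of $\tau_{w^0}^{KP}$ expressed in (\ref{eq:LagrangianIsotropyConditionTau}) enters only in the second step.

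For the Pl\"ucker relations, I would invoke Theorem \ref{th:LatticeOfPointsInInfiniteGrassmannian} applied to $w^0$ regarded as an element of the ambient Grassmannian $\Gr_{\HH_+}(\HH)$, which furnishes the ring homomorphism $\Phi_{w^0}^{\0b}:\SS(\Gr_{\HH_+}(\HH))\to (S_{\xb}^\times)^{-1}\Cbb[[\xb,\tb]]$ sending $\pi_\lambda\mapsto H_{w^0}^{\0b,\lambda}$. The extension of $\TT_{\mathfrak{sp}}$ in (\ref{eq:LambdaExtended}) is itself a ring homomorphism, so the composite $\TT_{\mathfrak{sp}}\circ \Phi_{w^0}^{\0b}$ is a ring homomorphism; by the definition (\ref{eq:LagrangianEvaluationTauFunction}) it agrees on each generator $\pi_\lambda$ with $\Psi_{w^0}^{\0b}$. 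Consequently the candidate $\Psi_{w^0}^{\0b}$ factors through $\SS/\II$, so in particular annihilates every Pl\"ucker quadratic form.

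For the Lagrange relations, the plan is to factor $\Psi_{w^0}^{\0b}$ through the homomorphism $\xi^\LL$ of (\ref{eq:XiLDefinition}) and apply Lemma \ref{lemm:XiLAnnihilatesIL}. Since $\tau_{w^0}^{KP}(\tb')\neq 0$, the element $h_{w^0}^{\0b}=\tau_{w^0}^{KP}(\tb')$ is a unit in the fraction field $\Fbb_{\yb,\tb'}$, so one may define an auxiliary homomorphism
\[
\tilde\Psi:\Cbb[\pi_\emptyset,\pi_\emptyset^{-1}]\big[\pi_{(i-1|j-1)}\,\big|\,1\le i\le j\big]\;\lra\;\Fbb_{\yb,\tb'}
\]
by $\pi_\emptyset\mapsto h_{w^0}^{\0b}$ and $\pi_{(i-1|j-1)}\mapsto h_{w^0}^{\0b,(i-1|j-1)}$. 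Evaluating $\tilde\Psi\circ \xi^\LL$ on a generator $\pi_\lambda$ with $\lambda=(\ab|\bb)$ of Frobenius rank $r$ produces
\[
\tilde\Psi\big(\xi^\LL(\pi_\lambda)\big)=h_{w^0}^{\0b}\,\det\!\left(\frac{h_{w^0}^{\0b,(\min(a_i,b_j)|\max(a_i,b_j))}}{h_{w^0}^{\0b}}\right)_{1\le i,j\le r},
\]
which equals $h_{w^0}^{\0b,\lambda}$ by the determinant identity (\ref{eq:h0lambdaeffectivebigcell}) together with the entrywise symmetry $h_{w^0}^{\0b,(a|b)}=h_{w^0}^{\0b,(b|a)}$ of Lemma \ref{lemm:SymmetryForEvaluationsOnTheBigCell}. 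Thus $\Psi_{w^0}^{\0b}=\tilde\Psi\circ\xi^\LL$ on generators, hence as ring maps on all of $\SS$; since Lemma \ref{lemm:XiLAnnihilatesIL} gives $\II^\LL\subset\ker\xi^\LL$, it follows that $\II^\LL\subset\ker\Psi_{w^0}^{\0b}$ as required. The main subtlety will be bookkeeping about target rings: the factorization takes place in $\Fbb_{\yb,\tb'}$, whereas the values $h_{w^0}^{\0b,\lambda}$ are a priori in the smaller localized ring $(S_{\yb}^\times)^{-1}\Cbb[[\yb,\tb']]$ (being $\TT_{\mathfrak{sp}}$-images of elements already in $(S_{\xb}^\times)^{-1}\Cbb[[\xb,\tb]]$), so the induced map on $\SS/\II^\LL$ automatically lands in the prescribed target even though the verification of annihilation momentarily enlarges the codomain to the field of fractions.
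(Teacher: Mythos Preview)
Your proposal is correct and follows essentially the same approach as the paper's proof: both verify that the map on the free ring $\SS$ kills the Pl\"ucker relations via the composition $\TT_{\mathfrak{sp}}\circ\Phi_{w^0}^{\0b}$, and kill the Lagrange relations by factoring through $\xi^\LL$ (using the determinant identity (\ref{eq:h0lambdaeffectivebigcell}) and the hook-symmetry Lemma \ref{lemm:SymmetryForEvaluationsOnTheBigCell}) after passing to the fraction field $\Fbb_{\yb,\tb'}$ where $h_{w^0}^{\0b}$ becomes invertible. Your remark about the target-ring bookkeeping, namely that the factorization detour into $\Fbb_{\yb,\tb'}$ does not disturb the fact that the actual values $h_{w^0}^{\0b,\lambda}$ already live in $(S_{\yb}^\times)^{-1}\Cbb[[\yb,\tb']]$, is exactly the point the paper encodes in its commutative diagram (\ref{eq:CommutativeDiagramLagrangianF}).
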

\begin{proof}
Consider the homomorphism $\widetilde\Psi_{w^0}^{\0b}$ defined by the same action on generators of the polynomial ring $\SS$
\bea
\widetilde{\Psi}_{w^0}^{\0b}:\SS&\&\rightarrow \left(S_{\yb}^\times\right)^{-1}\Cbb[[\yb,\tb']],\cr
&\& \cr
\widetilde{\Psi}_{w^0}^{\0b}:\pi_\lambda&\&\mapsto h_{w^0}^{\0b,\lambda}.
\label{eq:PsiTildeZeroDefinition}
\eea
We need to show that $\Psi_{w^0}^{\0b}$ annihilates all Pl\"ucker and Lagrange relations. For the Pl\"ucker relations, simply note that
\begin{equation}
\widetilde{\Psi}_{w^0}^{\0b}=\TT_{\mathfrak{sp}}\circ\widetilde{\Phi}_{w^0}^{\0b},
\end{equation}
is a composition of two ring homomorphisms. By Theorem \ref{th:LatticeOfPointsInInfiniteGrassmannian},  $\widetilde{\Phi}_{w^0}^{\0b}$ annihilates all Pl\"ucker relations and, hence, so does the $\widetilde{\Psi}_{w^0}^{\0b}$.

It remains to prove that $\widetilde{\Psi}_{w^0}^{\0b}$ annihilates all Lagrange relations. Recall that the ring $(S_{\yb}^\times)^{-1}\Cbb[[\yb,\tb']]$ is a localization of an integral domain, so it must itself be an integral domain. Denote its field of fractions by $\mathbb F_{\yb,\tb}$. By the assumption of the Proposition, the $\tb'$-restriction $\tau_{w^0}^{KP}(\tb')\in\Cbb[[\tb']]\subset\mathbb F_{\yb,\tb'}$ of the KP hierarchy $\tau$-function is a nontrivial formal power series, and hence, an invertible element of the field of fractions.
\color{black}
By the universal property of localization, we have a unique homomorphism
\begin{equation}
\overline \Psi_{w^0}^{\0b}: \Cbb\big[\pi_\emptyset,\pi_\emptyset^{-1}\big]\big[\pi_{(i-1|j-1)}\,\big|\,1\leq i\leq j\big]\rightarrow \mathbb F_{\yb,\tb'}
\end{equation}
obtained as a localization of the homomorphism $\widetilde{\Psi}_{w^0}^{\0b}$ restricted to the subring
\begin{equation}
\Cbb\big[\pi_\emptyset\big]\big[\pi_{(i-1|j-1)}\,\big|\, 1\leq i\leq j\big]\subset\SS
\end{equation}
generated by Pl\"ucker ccoordinates corresponding to hook  partitions; i.e., those of Frobenius rank 1. Consider the following diagram
\begin{equation}
\begin{tikzcd}
\SS=\Cbb[\pi_\lambda\,|\,\lambda\in\mathbf \Lambda] \arrow[rr,"\widetilde\Psi_{w^0}^{\0b}"] \arrow[dd,"\xi^{\LL}"] \arrow[ddrr,dashed,"\Delta"] &&(S_{\yb}^\times)^{-1}\Cbb[[\yb,\tb']] \arrow[dd,hook]\\\\
\Cbb\big[\pi_\emptyset,\pi_\emptyset^{-1}\big]\big[\pi_{(i-1|j-1)}\,\big|\,0<i\leq j\big]\arrow[rr,"\overline\Psi_{w^0}^{\0b}"]&&\mathbb F_{\yb,\tb'}
\end{tikzcd}
\label{eq:CommutativeDiagramLagrangianF}
\end{equation}
Its commutatative follows from comparing the actions on the generators $\pi_{\lambda}$ 
for all partitions $\lambda=(\ab|\bb)$, of any Frobenius rank $r$. This gives
\begin{equation}
\begin{aligned}
\overline\Psi_{w^0}^{\0b}(\xi^{\LL}(\pi_\lambda)) \;\;\stackrel{(\ref{eq:XiLDefinition})}{=}&\;\; \overline\Psi_{w^0}^{\0b}\left(\pi_\emptyset\, \det\Big(\frac{\pi_{(\min(a_i,b_j)|\max(a_i,b_j))}}{\pi_\emptyset}\Big)_{1\leq i,j\leq r}\right)\\[0.5em]
\;\;\stackrel{(\ref{eq:PsiTildeZeroDefinition})}{=}&\;\; h_{w^0}^{\0b,\emptyset}\, \det\left(\frac{h_{w^0}^{\0b,(\min(a_i,b_j)|\max(a_i,b_j))}}{h_{w^0}^{\0b,\emptyset}}\right)_{1\leq i,j\leq r}\\[0.5em]
\;\;\stackrel{(\ref{eq:hn0lambdadefinition})}{=}&\;\; h_{w^0}^{\0b}\, \det\left(\frac{h_{w^0}^{\alphab_{\min(a_i,b_j)}-\alphab_{-\max(a_i,b_j)-1}}}{h_{w^0}^{\0b}}\right)_{1\leq i,j\leq r}\\[0.5em]
\;\;\stackrel{(\ref{eq:h0lambdaeffectivebigcell})}{=}&\;\; h_{w^0}^{\0b,\lambda}=\widetilde\Psi_{w^0}^{\0b}(\pi_\lambda).
\end{aligned}
\end{equation}

By Lemma \ref{lemm:XiLAnnihilatesIL},  $\xi^{\LL}$ annihilates the  $\II^\LL$ defining Lagrangian Grassmannians.
Hence
\begin{equation}
\II^\LL\subset\ker\Delta
\label{eq:ILAnnihilatedByDelta}
\end{equation}
is in the kernel of the diagonal map $\Delta$ of diagram (\ref{eq:CommutativeDiagramLagrangianF}). But $\Delta$ is the composition 
of an injective map with $\widetilde\Psi_{w^0}^{\0b}$, so
\begin{equation}
\ker\Delta=\ker\widetilde\Psi_{w^0}^{\0b}.
\label{eq:KerDeltaEqualsKerPsi}
\end{equation}
Combining (\ref{eq:ILAnnihilatedByDelta}) with (\ref{eq:KerDeltaEqualsKerPsi}) completes the proof.
\end{proof}

This brings us to the main result of this section.
\begin{theorem}
Fix an element $w^0\in\Gr^{\LL}_{\HH_+}(\HH,\omega)$ of the infinite Lagrangian Grassmannian such that $\tau_{w^0}^{KP}(\tb')\neq0$.\footnote{We expect that the condition  $\tau_{w^0}^{KP}(\tb')\neq0$  is automatically satisfied for all elements of the infinite Lagrangian Grassmannian,
whatever the value of $\tb'$ is. However a proof of this remains to be found. We therefore impose this  condition as an additional requirement in the formulation of the Theorem.} To every element $\nb_0'\in\Bb$ of the sublattice there is a homomorphism of commutative algebras
generated by the following evaluation on linear basis elements
\bea
\Psi_{w^0}^{\nb_0'}:\SS(\Gr^{\LL}_{\HH_+}(\HH,\omega)) &\&\rightarrow (S_{\yb}^\times)^{-1}\Cbb[[\yb,\tb']],\cr
&\& \cr
 \Psi_{w^0}^{\nb_0'}:\pi_\lambda &\& \mapsto h_{w^0}^{\nb_0',\lambda}.
\eea
\label{th:LatticeOfPointsInInfiniteLagrangianGrassmannian}
\end{theorem}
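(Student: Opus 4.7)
The plan is to extend Proposition \ref{prop:LagrngianGrassmannianianHomomorphismOrigin} from the origin $\0b\in\Bb$ to an arbitrary lattice point $\nb_0'\in\Bb$ by induction along the generators $\{\betab_i\}_{i\in\Nbb^+}$. The base case is Proposition \ref{prop:LagrngianGrassmannianianHomomorphismOrigin} itself. For the inductive step I would assume $\Psi_{w^0}^{\nb_0'}$ is a homomorphism and deduce the same for $\Psi_{w^0}^{\nb_0'\pm\betab_i}$. The bridge between consecutive base points is the shift automorphism $\hat\delta_i^{\LL}$ of the localized ring $(S_{\yb}^\times)^{-1}\Cbb[[\yb,\tb']]$: equation (\ref{eq:DeltaLhAction}) shows that, up to a monomial in $S_{\yb}^\times$, a $\betab_i$-shift of the base point is implemented by applying $\hat\delta_i^{\LL}$ to the $\tau$-function argument, and by the $\Bb$-graded analogue of Corollary \ref{cor:ShiftTransformationHomogeneousH} this monomial is homogeneous across each quadratic relation and hence cancels.

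For the Pl\"ucker part of the defining ideal $\II^\LL$, there is essentially nothing new to prove: $\widetilde\Psi_{w^0}^{\nb_0'}:\SS\to (S_{\yb}^\times)^{-1}\Cbb[[\yb,\tb']]$ factors as $\TT_{\mathfrak{sp}}\circ\widetilde\Phi_{w^0}^{\nb_0'}$, where the second factor is the polynomial-ring lift of the homomorphism of Theorem \ref{th:LatticeOfPointsInInfiniteGrassmannian}, available at every point of the larger lattice $\Ab\supset\Bb$. Theorem \ref{th:LatticeOfPointsInInfiniteGrassmannian} annihilates every Pl\"ucker quadratic form, and $\TT_{\mathfrak{sp}}$ is a ring homomorphism, so their composition annihilates the Pl\"ucker part of $\II^\LL$.

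The Lagrange relations are the content of the theorem and the main technical point. My strategy is to reproduce the commutative diagram (\ref{eq:CommutativeDiagramLagrangianF}) with $\nb_0'$ in place of $\0b$. This requires two ingredients: (i) the shifted symmetry $h_{w^0}^{\nb_0',(i|j)}=h_{w^0}^{\nb_0',(j|i)}$, generalizing Lemma \ref{lemm:SymmetryForEvaluationsOnTheBigCell}, and (ii) the shifted version of the determinant identity (\ref{eq:h0lambdaeffectivebigcell}). Ingredient (i) follows from the key observation that the translation $\tb'\mapsto \tb'+\sum_{i}n'_{i-1}([y_i]-[-y_i])$ attached to any $\nb_0'\in\Bb$ is invariant under the involution $\tb\mapsto\tilde\tb$ (since $\widetilde{[y_i]}=-[-y_i]$), so the Lagrangian isotropy relation (\ref{eq:LagrangianIsotropyConditionTau}) transfers intact and the calculation (\ref{eq:SymmetryProofEvaluationsAffineCell}) goes through verbatim at the shifted base point. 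Ingredient (ii) is the general lattice-point statement of Proposition \ref{prop:DeterminantFormulaAffineBigCell} combined with (i). Under the standing hypothesis $\tau_{w^0}^{KP}(\tb')\neq 0$, the automorphism $\hat\delta_i^{\LL}$ preserves nonvanishing, so $h_{w^0}^{\nb_0'}$ remains invertible in $\Fbb_{\yb,\tb'}$; the universal property of the localization $\Cbb[\pi_\emptyset,\pi_\emptyset^{-1}][\pi_{(i-1|j-1)}]$ then produces an extension $\overline\Psi_{w^0}^{\nb_0'}$ satisfying $\overline\Psi_{w^0}^{\nb_0'}\circ\xi^{\LL}=\widetilde\Psi_{w^0}^{\nb_0'}$, and Lemma \ref{lemm:XiLAnnihilatesIL} forces $\II^\LL\subset\ker\widetilde\Psi_{w^0}^{\nb_0'}$.

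I expect the main obstacle to be the careful bookkeeping of the $(y_i\pm y_j)$-prefactors in the definition (\ref{eq:LagrangianEvaluationTauFunction}) when checking that the $\Ab$-homogeneity of Corollary \ref{cor:ShiftTransformationHomogeneousH} survives the projection $\TT_{\mathfrak{sp}}$ and descends to a compatible $\Bb$-grading, ensuring that the same normalization factor multiplies every term of each Lagrange quadratic relation when the base point is translated along $\betab_i$. Once this combinatorial cancellation is verified, the inductive step is completed by composing with the ring automorphism $\hat\delta_i^{\LL}$ and invoking the integral-domain property of $\Cbb[[\yb,\tb']]$ to kill the rescaled relations.
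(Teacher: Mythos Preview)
Your proposal is correct and shares the same inductive skeleton as the paper's proof: base case $=$ Proposition \ref{prop:LagrngianGrassmannianianHomomorphismOrigin}, inductive step along the generators $\betab_i$ via the shift automorphism $\hat\delta_i^{\LL}$. However, you are doing substantially more work than the paper does.

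The paper's entire inductive step is a single observation drawn directly from (\ref{eq:DeltaLhAction}): for every partition $\lambda$,
\[
\Psi_{w^0}^{\nb_0'+\betab_i}(\pi_\lambda)=h_{w^0}^{\nb_0'+\betab_i,\lambda}=\hat\delta_i^{\LL}\big(h_{w^0}^{\nb_0',\lambda}\big)=\hat\delta_i^{\LL}\big(\Psi_{w^0}^{\nb_0'}(\pi_\lambda)\big),
\]
so $\Psi_{w^0}^{\nb_0'+\betab_i}=\hat\delta_i^{\LL}\circ\Psi_{w^0}^{\nb_0'}$ on generators and hence globally. Since $\hat\delta_i^{\LL}$ is a ring automorphism of $(S_{\yb}^\times)^{-1}\Cbb[[\yb,\tb']]$ and $\Psi_{w^0}^{\nb_0'}$ is, by inductive hypothesis, a ring homomorphism, their composite is a ring homomorphism. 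That is the whole proof --- no separate treatment of Pl\"ucker versus Lagrange relations, no re-running of the big-cell diagram (\ref{eq:CommutativeDiagramLagrangianF}) at each shifted base point, no shifted symmetry lemma, no prefactor bookkeeping.

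Your paragraphs two and three effectively re-prove Proposition \ref{prop:LagrngianGrassmannianianHomomorphismOrigin} at every point of $\Bb$. This is valid but redundant: the point of isolating that argument at the origin is precisely to do it once and then transport it by the automorphism. Your paragraph-four worry about residual $(y_i\pm y_j)$-monomials is also absent from the paper's argument, because (\ref{eq:DeltaLhAction}) is stated there as an exact identity $\hat\delta_i^{\LL} h_{w^0}^{\nb}=h_{w^0}^{\nb+\betab_i}$, so no homogeneity-and-cancellation step in the style of Corollary \ref{cor:ShiftTransformationHomogeneousH} is invoked.
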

\begin{proof}
We prove this by induction on the sublattice $\Bb \ss \Ab$. The starting point is  $\nb_0'=\0b\in\Bb$, which is given by Proposition \ref{prop:LagrngianGrassmannianianHomomorphismOrigin}. Now suppose that the statement of the Theorem holds for 
$\nb_0'\in\Bb$ and let $i\in\Nbb^+$ be a positive integer. Note that for all partitions $\lambda$ we have
\begin{equation}
\Psi_{w^0}^{\nb_0'+\beta_i}(\pi_\lambda)=h^{\nb_0'+\beta_i,\lambda}_{w^0} \;\;\stackrel{(\ref{eq:DeltaLhAction})}{=}\;\; \hat\delta^{\LL}_i(h^{\nb_0',\lambda}_{w^0}) =\hat\delta^{\LL}_i(\Psi_{w^0}^{\nb_0'}(\pi_\lambda)).
\end{equation}
Hence
\begin{equation}
\Psi_{w^0}^{\nb_0'+\beta_i}=\hat\delta_i^{\LL}\circ \Psi_{w^0}^{\nb_0'}
\end{equation}
is a composition of two ring homomorphisms and therefore must itself be a ring homomorphism.
By induction on the sublattice $\Bb$,  this implies that $\mathbf\Psi_{w^0}^{\nb_0'}$ is a homomorphism for all $\nb_0'\in\mathbb B$,
concluding the proof.
\end{proof}


\section{Applications to Lattice Integrable Systems}
\label{sec:applications_lattice}

\subsection{Kenyon-Pemantle recurrences for $\Gr_{\HH_+}(\HH)$}
\label{hexahedron_ordinary_grassmannian}

\begin{lemma}
For every ordered  triple of nonnegative integers $i,j,k\in\Nbb$, \ $i<j<k$, the cubic 
\begin{equation}
\kappa_1:=\pi_{(\emptyset|\emptyset)}(\pi_{(i|i)}\pi_{(k,j|k,j)}+\pi_{(j|k)}\pi_{(k,i|j,i)}) -\pi_{(i|i)}\pi_{(j|j)}\pi_{(k|k)}+\pi_{(i|j)}\pi_{(j|k)}\pi_{(k|i)}.
\label{eq:Kappa1RelationDef}
\end{equation}
\label{lemm:Kappa1Relation}
 belongs to the defining ideal  $\II$ of the coordinate ring $\SS(\Gr_{\HH_+}(\HH))=\SS/\II$ of $\Gr_{\HH_+}(\HH)$.
\end{lemma}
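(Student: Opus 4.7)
The plan is to apply the generalized Giambelli identity (equation (\ref{eq:DeterminantFormulaPluckerCoordinatesFinite})) to rewrite each Frobenius rank-two Pl\"ucker coordinate appearing in $\kappa_1$ as a polynomial in the hook (rank-one) coordinates, and then to verify that every cubic monomial cancels in pairs. Since the definition of $\kappa_1$ carries a $\pi_\emptyset$ in front of the rank-two terms, the Giambelli denominators will clear exactly, leaving a polynomial identity that can be checked by inspection.

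First I would note that, since $\SS/\II$ is an integral domain by Proposition~\ref{prop:SOverI_IntegralDomain} and the big cell is open and dense in $\Gr_{\HH_+}(\HH)$, the Giambelli formula (\ref{eq:DeterminantFormulaPluckerCoordinatesFinite}), valid as an identity of regular functions on the big cell, lifts to the polynomial identity
\begin{equation*}
\pi_\emptyset\,\pi_{(a_1,a_2|b_1,b_2)}\;\equiv\;\pi_{(a_1|b_1)}\pi_{(a_2|b_2)} - \pi_{(a_1|b_2)}\pi_{(a_2|b_1)}\pmod{\II}
\end{equation*}
in $\SS$ for every Frobenius rank-two partition $(a_1,a_2|b_1,b_2)$. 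Specializing to the two rank-two coordinates appearing in (\ref{eq:Kappa1RelationDef}) under the hypothesis $i<j<k$ gives
\begin{align*}
\pi_\emptyset\,\pi_{(k,j|k,j)}&\equiv \pi_{(k|k)}\pi_{(j|j)} - \pi_{(k|j)}\pi_{(j|k)}\pmod{\II},\\
\pi_\emptyset\,\pi_{(k,i|j,i)}&\equiv \pi_{(k|j)}\pi_{(i|i)} - \pi_{(k|i)}\pi_{(i|j)}\pmod{\II}.
\end{align*}

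Second, substituting these two congruences into the definition (\ref{eq:Kappa1RelationDef}) of $\kappa_1$ and expanding, the result modulo $\II$ is a cubic polynomial in the six hook coordinates $\pi_{(i|i)}, \pi_{(j|j)}, \pi_{(k|k)}, \pi_{(i|j)}, \pi_{(j|k)}, \pi_{(k|i)}, \pi_{(k|j)}$. Each of the three distinct monomials $\pi_{(i|i)}\pi_{(j|j)}\pi_{(k|k)}$, $\pi_{(i|i)}\pi_{(j|k)}\pi_{(k|j)}$, and $\pi_{(i|j)}\pi_{(j|k)}\pi_{(k|i)}$ then occurs once with a $+$ and once with a $-$ sign, so all terms cancel and $\kappa_1\equiv 0\pmod{\II}$.

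The only non-mechanical point, and the place where care is needed, is justifying the rank-two Giambelli identity as a genuine element of $\II\ss\SS$, rather than merely a relation in the big-cell localization $\SS[\pi_\emptyset^{-1}]/\II$. Either the density/integrality argument above, or a direct identification with one of the short three-term Pl\"ucker relations (\ref{eq:PluckerRelationIndices}) for a suitable choice of index multi-sets $I$ and $J$, will suffice; I expect the former to be cleaner given the tools already developed in Section~\ref{sec:Infinite_Grassmannians_KP_tau}, and in particular the fact that Lemma~\ref{lemm:XiAnnihilatesPlucker} together with integrality of $\SS/\II$ promotes the Giambelli equality on the big cell to an equality in $\SS/\II$.
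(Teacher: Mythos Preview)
Your proof is correct and uses essentially the same two ingredients as the paper's, but the paper's presentation is more direct. The paper simply observes that
\[
\kappa_1 \;=\; \pi_{(j|k)}\,\rho_1 \;+\; \pi_{(\emptyset|\emptyset)}\,\rho_2,
\]
where
\[
\rho_1 = \pi_{(i|j)}\pi_{(k|i)} - \pi_{(k|j)}\pi_{(i|i)} + \pi_{(\emptyset|\emptyset)}\pi_{(k,i|j,i)},
\qquad
\rho_2 = \pi_{(j|k)}\pi_{(k|j)} - \pi_{(j|j)}\pi_{(k|k)} + \pi_{(\emptyset|\emptyset)}\pi_{(k,j|k,j)}
\]
are short (three-term) Pl\"ucker quadratic forms, hence in $\II$ by definition. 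Your two ``rank-two Giambelli identities'' are precisely $\rho_2=0$ and $\rho_1=0$ rewritten, so your substitution-and-cancellation step is exactly the verification of this decomposition.

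The only difference is packaging. Because the rank-two Giambelli relation \emph{is} a short Pl\"ucker relation, there is no need to invoke density of the big cell or integrality of $\SS/\II$ to lift it from the localization: you may cite (\ref{eq:PluckerRelationIndices}) directly, which is the second option you mention. That route is both shorter and avoids a mild gap in your first option, namely that Lemma~\ref{lemm:XiAnnihilatesPlucker} only gives $\II\subset\ker\xi$, not equality, so ``$\xi$ kills it'' does not by itself imply membership in $\II$. (A correct version of the integrality argument would note that localization of the integral domain $\SS/\II$ at the nonzero element $\pi_\emptyset$ is injective, and that the Giambelli relation holds in that localization; but this is more work than simply recognizing the relation as a Pl\"ucker form.)
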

\begin{proof}
(\ref{eq:Kappa1RelationDef}) can be written as
\begin{equation}
\kappa_1=\pi_{(j|k)}\rho_1+\pi_{(\emptyset|\emptyset)}\rho_2,
\end{equation}
where
\begin{subequations}
\begin{eqnarray}
&&\rho_1:=\pi_{(i|j)}\pi_{(k|i)}-\pi_{(k|j)}\pi_{(i|i)}+\pi_{(\emptyset|\emptyset)}\pi_{(k,i|j,i)},
\label{eq:rho1Plucker} \\[0.3em] &&\rho_2:=\pi_{(j|k)}\pi_{(k|j)}-\pi_{(j|j)}\pi_{(k|k)}+\pi_{(\emptyset|\emptyset)}\pi_{(k,j|k,j)}
 \label{eq:rho2Plucker}
\end{eqnarray}
\end{subequations}
are (short) Pl\"ucker quadratic forms  belonging to $\II$. Therefore, so does $\kappa_1$.
\end{proof}

Since the Pl\"ucker relations  are invariant under transposition of all partitions, (\ref{eq:Kappa1RelationDef}) also implies
the dual relation:
\begin{lemma}
For every ordered  triple of nonnegative integers $i,j,k\in\Nbb^+$, \ $i<j<k$, the cubic
\begin{equation}
\kappa_1^T:=\pi_{(\emptyset|\emptyset)}(\pi_{(i|i)}\pi_{(k,j|k,j)}+\pi_{(k|j)}\pi_{(j,i|k,i)}) -\pi_{(i|i)}\pi_{(j|j)}\pi_{(k|k)}+\pi_{(j|i)}\pi_{(k|j)}\pi_{(i|k)}
\label{eq:Kappa1RelationDualDef}
\end{equation}
 also belongs to $\II$.
\end{lemma}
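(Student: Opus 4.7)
The plan is immediate: apply the partition-transpose involution $\lambda \mapsto \lambda^T$ to the identity $\kappa_1 \in \II$ of the preceding lemma. First I would formalize the transposition symmetry by introducing the ring automorphism $\sigma: \SS \to \SS$ defined on generators by $\sigma(\pi_\lambda) := \pi_{\lambda^T}$. This is well-defined and preserves the bi-grading (\ref{eq:BidegreeOfPiLambda}), since $|\lambda| = |\lambda^T|$. The crucial point, highlighted in the lead-in to the lemma, is that $\sigma(\II) = \II$: the ideal generated by all Plücker relations is preserved under this involution. This can be verified either geometrically—transposition implements the classical duality $\Gr_{V_k}(\HH_{k,n}) \cong \Gr_{V_{n-k}}(\HH_{n-k,n})$ at each finite-dimensional level, which sends Plücker relations to Plücker relations—or combinatorially, by direct inspection of the short Plücker relations expressed in Frobenius notation. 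In the direct-limit setting, the compatibility with the homomorphisms $\iota_k, \iota^n$ of (\ref{eq:Homomorphisms_iota_CommutativeDiagram}) has to be noted, but this is routine once the finite-dimensional statement is in place.

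Next, I would compute $\sigma(\kappa_1)$ using the Frobenius transposition formula $(\ab|\bb)^T = (\bb|\ab)$. The self-conjugate factors $(i|i), (j|j), (k|k), (k,j|k,j)$ are fixed, while the remaining partitions transform by $(j|k)^T = (k|j)$, $(i|j)^T = (j|i)$, $(k|i)^T = (i|k)$, and $(k,i|j,i)^T = (j,i|k,i)$. Substituting these into (\ref{eq:Kappa1RelationDef}) yields exactly the expression $\kappa_1^T$ in (\ref{eq:Kappa1RelationDualDef}). Since $\kappa_1 \in \II$ and $\sigma(\II) = \II$, one concludes $\kappa_1^T = \sigma(\kappa_1) \in \II$.

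Alternatively, and perhaps more concretely, one may mimic the factorization proof of the preceding lemma by defining the transposed short Plücker quadratic forms
\begin{equation*}
\rho_1^T := \pi_{(j|i)}\pi_{(i|k)} - \pi_{(j|k)}\pi_{(i|i)} + \pi_{(\emptyset|\emptyset)}\pi_{(j,i|k,i)}, \quad
\rho_2^T := \pi_{(k|j)}\pi_{(j|k)} - \pi_{(j|j)}\pi_{(k|k)} + \pi_{(\emptyset|\emptyset)}\pi_{(k,j|k,j)},
\end{equation*}
which are, termwise, the transposes of $\rho_1, \rho_2$ in (\ref{eq:rho1Plucker})--(\ref{eq:rho2Plucker}); by the transposition invariance of $\II$, both $\rho_1^T, \rho_2^T$ lie in $\II$. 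One then transposes the factoring $\kappa_1 = \pi_{(j|k)}\rho_1 + \pi_{(\emptyset|\emptyset)}\rho_2$ to obtain $\kappa_1^T = \pi_{(k|j)}\rho_1^T + \pi_{(\emptyset|\emptyset)}\rho_2^T \in \II$. The main (and only) obstacle is establishing the invariance $\sigma(\II) = \II$; everything else is a symbolic substitution.
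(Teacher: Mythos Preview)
Your proposal is correct and takes essentially the same approach as the paper, which simply observes that the Pl\"ucker relations (and hence the ideal $\II$) are invariant under transposition of all partitions, so $\kappa_1^T$ follows immediately from $\kappa_1\in\II$. Your version is more detailed than the paper's one-line justification, but the underlying argument is identical.
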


Since the Pl\"ucker ideal is invariant under permutations of basis elements of $\HH$,
four more relations of the form (\ref{eq:Kappa1RelationDef}), (\ref{eq:Kappa1RelationDualDef}) follow.
 Let $i,j,k\in\Nbb^+$ be an ordered triple of nonnegative integers $i<j<k$. Consider the 
 permutations $\sigma_1^{(i,j,k)}$, $\sigma_2^{(i,j,k)}$ of the basis elements $\{e_l\}_{l\in\Zbb}$ of  $\HH$ given by
\begin{equation}
\sigma_1^{(i,j,k)}(e_l)=\left\{\begin{array}{ll}
e_j,& \text{if }\  l=i,\\
e_k,& \text{if }\  l=j,\\
e_i,& \text{if }\  l=k,\\
e_{-j-1},& \text{if }\  l=-i-1,\\
e_{-k-1},& \text{if }\  l=-j-1,\\
e_{-i-1},& \text{if }\  l=-k-1,\\
e_l,&\textrm{otherwise},
\end{array}\right.
\qquad
\sigma_2^{(i,j,k)}(e_l)=\left\{\begin{array}{ll}
e_{-k-1},& \text{if }\  l=i,\\
e_{-j-1},&  \text{if }\  l=j,\\
e_{-i-1},& \text{if }\  l=k,\\
e_k,& \text{if }\  l=-i-1,\\
e_j,& \text{if }\  l=-j-1,\\
e_i,& \text{if }\  l=-k-1,\\
e_l,&\textrm{otherwise}.
\end{array}\right.
\label{sigma_perm_automorph}
\end{equation}
These induce invertible endomorphisms $\hat\sigma_1^{(i,j,k)},\hat\sigma_2^{(i,j,k)}\in\mathrm{End}(\FF)$ of the Fock space,
which are just the Cliffford representations of these as $\GL(\HH)$ elements.
 The dual action on the Pl\"ucker coordinates gives  a pair of automorphisms
\bea
\hat\sigma_1^{(i,j,k)*}&\&:\SS\rightarrow\SS,  \cr
\hat\sigma_2^{(i,j,k)*}&\&:\SS\rightarrow\SS 
\eea 
 of the polynomial ring $\SS$  that preserve the Pl\"ucker ideal $\II$ and thus descend to automorphisms of the coordinate ring 
 $\SS(\Gr_{\HH_+}(\HH))=\SS/\II$:
\bea
\hat\sigma_1^{(i,j,k)*}&\&:\SS(\Gr_{\HH_+}(\HH))\rightarrow\SS(\Gr_{\HH_+}(\HH)), \cr
\hat\sigma_2^{(i,j,k)*}&\&:\SS(\Gr_{\HH_+}(\HH))\rightarrow\SS(\Gr_{\HH_+}(\HH)).
\label{eq:AutomorphismSimga_ijk}
\eea
\begin{proposition}
The following seven elements belong to the defining ideal $\II$ of  $\SS(\Gr_{\HH_+}(\HH))$.
\begin{subequations}
\bea
\kappa_1&\&:=\pi_{(\emptyset|\emptyset)}(\pi_{(i|i)}\pi_{(k,j|k,j)}+\pi_{(j|k)}\pi_{(k,i|j,i)}) -\pi_{(i|i)}\pi_{(j|j)}\pi_{(k|k)}+\pi_{(i|j)}\pi_{(j|k)}\pi_{(k|i)}
\label{eq:Kappa1Relation}\\
\kappa_2&\&:=\pi_{(\emptyset|\emptyset)}(\pi_{(j|j)}\pi_{(k,i|k,i)}-\pi_{(k|i)}\pi_{(j,i|k,j)}) -\pi_{(i|i)}\pi_{(j|j)}\pi_{(k|k)}+\pi_{(i|j)}\pi_{(j|k)}\pi_{(k|i)}
\label{eq:Kappa2Relation}\\
\kappa_3&\&::=\pi_{(\emptyset|\emptyset)}(\pi_{(k|k)}\pi_{(j,i|j,i)}+\pi_{(i|j)}\pi_{(k,j|k,i)}) -\pi_{(i|i)}\pi_{(j|j)}\pi_{(k|k)}+\pi_{(i|j)}\pi_{(j|k)}\pi_{(k|i)}
\label{eq:Kappa3Relation}\\
\kappa_1^T&\&:=\pi_{(\emptyset|\emptyset)}(\pi_{(i|i)}\pi_{(k,j|k,j)}+\pi_{(k|j)}\pi_{(j,i|k,i)}) -\pi_{(i|i)}\pi_{(j|j)}\pi_{(k|k)}+\pi_{(j|i)}\pi_{(k|j)}\pi_{(i|k)}
\label{eq:Kappa1TRelation}\\
\kappa_2^T&\&:=\pi_{(\emptyset|\emptyset)}(\pi_{(j|j)}\pi_{(k,i|k,i)}-\pi_{(i|k)}\pi_{(k,j|j,i)}) -\pi_{(i|i)}\pi_{(j|j)}\pi_{(k|k)}+\pi_{(j|i)}\pi_{(k|j)}\pi_{(i|k)}
\label{eq:Kappa2TRelation}\\
\kappa_3^T&\&:=\pi_{(\emptyset|\emptyset)}(\pi_{(k|k)}\pi_{(j,i|j,i)}+\pi_{(j|i)}\pi_{(k,i|k,j)}) -\pi_{(i|i)}\pi_{(j|j)}\pi_{(k|k)}+\pi_{(j|i)}\pi_{(k|j)}\pi_{(i|k)}
\label{eq:Kappa3TRelation}\\
\kappa_0&\&:=\pi_{(\emptyset|\emptyset)}^2\pi_{(k,j,i|k,j,i)}\pi_{(i|j)}\pi_{(j|k)}\pi_{(k|i)} -(\pi_{(i|j)}\pi_{(j|k)}\pi_{(k|i)})^2\\
&\&+(\pi_{(i|j)}\pi_{(j|k)}\pi_{(k|i)}) \left(2\pi_{(i|i)}\pi_{(j|j)}\pi_{(k|k)} 
-\pi_{(\emptyset|\emptyset)}(\pi_{(i|i)}\pi_{(k,j|k,j)} +\pi_{(j|j)}\pi_{(k,i|k,i)} +\pi_{(k|k)}\pi_{(j,i|j,i)})\right)\cr
&\&-(\pi_{(i|i)}\pi_{(j|j)} -\pi_{(\emptyset|\emptyset)}\pi_{(j,i|j,i)})(\pi_{(i|i)}\pi_{(k|k)} 
-\pi_{(\emptyset|\emptyset)}\pi_{(k,i|k,i)})(\pi_{(j|j)}\pi_{(k|k)} -\pi_{(\emptyset|\emptyset)}\pi_{(k,j|k,j)})
\cr
&\& 
\label{eq:Kappa0Relation}
\eea
\end{subequations}
for every triple $i,j,k\in\Nbb^+$ of nonnegative integers satisfying $i<j<k$.
\label{prop:hexahedronRelations}
\end{proposition}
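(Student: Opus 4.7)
My plan is to split the argument into two parts: first, derive the six cubic identities $\kappa_r, \kappa_r^T$ ($r=1,2,3$) from $\kappa_1$ by exploiting the symmetries of the Pl\"ucker ideal; and second, prove the sextic identity $\kappa_0$ by a direct algebraic computation using Giambelli's identity together with the short Pl\"ucker relations.

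For the cubic step, I would start from Lemma~\ref{lemm:Kappa1Relation}, which gives $\kappa_1\in\II$, and from the invariance of the Pl\"ucker ideal under Young-diagram transposition $\lambda\mapsto\lambda^T$, which yields $\kappa_1^T\in\II$. I would then apply the cyclic automorphism $\hat\sigma_1^{(i,j,k)*}$ of (\ref{eq:AutomorphismSimga_ijk}) (and its square), which preserves $\II$. Its action on the relevant generators cycles $\pi_{(\alpha|\alpha)}$ in a $3$-cycle ($\alpha\in\{i,j,k\}$), cycles $\pi_{(i|j)}\to\pi_{(j|k)}\to\pi_{(k|i)}\to\pi_{(i|j)}$, and maps the double-hook generators $\pi_{(k,j|k,j)}, \pi_{(k,i|j,i)},\dots$ appearing in $\kappa_1$ to the corresponding double hooks in $\kappa_2, \kappa_3$, up to signs determined by the reorderings of the fermionic operators $\psi_{a_\ell}, \psi^\dagger_{-b_\ell-1}$ dictated by the basis formula of Section~\ref{sec:fermionic_rep_tau}. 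A finite sign check would identify $\hat\sigma_1^{(i,j,k)*}(\kappa_1)$ and $(\hat\sigma_1^{(i,j,k)*})^2(\kappa_1)$ with $\kappa_2, \kappa_3$ (up to an overall sign), and similarly $\kappa_2^T, \kappa_3^T$ would arise from $\kappa_1^T$.

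For the sextic $\kappa_0$, my plan is to apply the generalized Giambelli identity (\ref{eq:DeterminantFormulaPluckerCoordinatesFinite}) to the Frobenius-rank-$3$ partition $(k,j,i|k,j,i)$. Since $\SS/\II$ is an integral domain by Proposition~\ref{prop:SOverI_IntegralDomain} and $\pi_\emptyset$ is not a zero divisor, clearing denominators yields the polynomial congruence
\begin{equation*}
\pi_\emptyset^{\,2}\,\pi_{(k,j,i|k,j,i)} \;\equiv\; A - C + B + B^T \pmod{\II},
\end{equation*}
where
\begin{align*}
A &:= \pi_{(i|i)}\pi_{(j|j)}\pi_{(k|k)}, \quad B := \pi_{(i|j)}\pi_{(j|k)}\pi_{(k|i)}, \quad B^T := \pi_{(j|i)}\pi_{(k|j)}\pi_{(i|k)},\\
C &:= \pi_{(i|i)}\pi_{(j|k)}\pi_{(k|j)} + \pi_{(j|j)}\pi_{(i|k)}\pi_{(k|i)} + \pi_{(k|k)}\pi_{(i|j)}\pi_{(j|i)},
\end{align*}
are read off from the expansion of the corresponding $3\times 3$ determinant. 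I would then use the three short Pl\"ucker relations
$\pi_\emptyset\,\pi_{(\alpha,\beta|\alpha,\beta)} \equiv \pi_{(\alpha|\alpha)}\pi_{(\beta|\beta)} - \pi_{(\alpha|\beta)}\pi_{(\beta|\alpha)} \pmod{\II}$
for $(\alpha,\beta)\in\{(k,j),(k,i),(j,i)\}$ to replace each factor $\bigl(\pi_{(\alpha|\alpha)}\pi_{(\beta|\beta)} - \pi_\emptyset\,\pi_{(\alpha,\beta|\alpha,\beta)}\bigr)$ appearing in the last line of $\kappa_0$ by $\pi_{(\alpha|\beta)}\pi_{(\beta|\alpha)}$ modulo $\II$, whereupon the triple product reduces to $B\cdot B^T$; the same substitutions convert the middle parenthesis to $-A+C$. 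Substituting back into $\kappa_0$ should give
\begin{equation*}
\kappa_0 \;\equiv\; B\bigl(A-C+B+B^T\bigr) - B^2 + B\bigl(-A+C\bigr) - B\cdot B^T \;=\; 0 \pmod{\II}.
\end{equation*}

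The hard part will be the sign bookkeeping in the cubic step: the Clifford action of $\hat\sigma_1^{(i,j,k)}$ on the double-hook generators requires reordering the fermionic operators $\psi_{a_\ell}, \psi^\dagger_{-b_\ell-1}$ according to the basis formula of Section~\ref{sec:fermionic_rep_tau}, and the resulting sign factors are precisely what produce the distinctive minus sign in the middle summand of $\kappa_2$ and $\kappa_2^T$, in contrast to the plus signs in $\kappa_1, \kappa_3, \kappa_1^T, \kappa_3^T$. The sextic step, by contrast, reduces to a mechanical cancellation once Giambelli is accepted as a polynomial identity modulo $\II$.
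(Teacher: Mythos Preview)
Your treatment of the six cubic relations $\kappa_r,\kappa_r^T$ is essentially the same as the paper's: start from Lemma~\ref{lemm:Kappa1Relation}, obtain $\kappa_1^T$ by transposition, and generate the remaining four by the cyclic automorphism $\hat\sigma_1^{(i,j,k)*}$ of~(\ref{eq:AutomorphismSimga_ijk}). Your remark that the sign bookkeeping is the only nontrivial point is accurate.

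Your argument for $\kappa_0$, however, is genuinely different from the paper's and, in fact, more direct. The paper proceeds indirectly: it applies the second automorphism $\hat\sigma_2^{(i,j,k)*}$ to $\kappa_1$ to produce a new cubic $\tilde\kappa_1$ involving $\pi_{(k,j,i|k,j,i)}$, then in the fraction field of $\SS/\II$ solves $\kappa_1=\kappa_2=\kappa_3=0$ for the three off-diagonal rank-$2$ coordinates $\pi_{(k,i|j,i)},\pi_{(j,i|k,j)},\pi_{(k,j|k,i)}$, substitutes into $\tilde\kappa_1=0$, and obtains a factored relation $(\pi_{(i|j)}\pi_{(j|k)}\pi_{(k|i)}-\pi_{(i|i)}\pi_{(j|j)}\pi_{(k|k)})\,\kappa_0=0$; the integral-domain property (Proposition~\ref{prop:SOverI_IntegralDomain}) then cancels the first factor. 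Your route bypasses $\hat\sigma_2$ entirely: the rank-$3$ Giambelli identity expresses $\pi_\emptyset^2\pi_{(k,j,i|k,j,i)}$ as the $3\times3$ determinant $A-C+B+B^T$ modulo $\II$, and the three short Pl\"ucker relations of type~(\ref{eq:rho2Plucker}) reduce the remaining terms of $\kappa_0$ to monomials in $A,B,B^T,C$, after which the cancellation is immediate. The only point to make airtight is that Giambelli holds as a polynomial congruence in $\SS/\II$, not merely pointwise on the big cell; but this follows exactly as you say, since $\SS/\II$ is an integral domain and $\pi_\emptyset$ is a nonzerodivisor, so the inclusion $\SS/\II\hookrightarrow(\SS/\II)[\pi_\emptyset^{-1}]$ lets you clear denominators from the localized identity established by the map $\xi$ of Definition~\ref{def:XiMap} and Lemma~\ref{lemm:XiAnnihilatesPlucker}. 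Your approach buys a self-contained computation that avoids the elimination step and the auxiliary automorphism $\hat\sigma_2$; the paper's approach has the virtue of uniformity, deriving everything from the $\kappa_r$'s and the symmetry group alone.
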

\begin{proof}
Relations (\ref{eq:Kappa1Relation})--(\ref{eq:Kappa3Relation}) are  the orbit of (\ref{eq:Kappa1RelationDef}) under
the three-element cyclic group of automorphisms generated by $\hat\sigma_1^{(i,j,k)*}$ and  relations (\ref{eq:Kappa1TRelation})--(\ref{eq:Kappa3TRelation}) are the orbit of (\ref{eq:Kappa1RelationDualDef}) under the same group.
To prove (\ref{eq:Kappa0Relation}),first  note that applying the automorphism $\hat\sigma_2^{(i,j,k)*}$ to relation (\ref{eq:Kappa1Relation}) gives
\bea
\widetilde\kappa_1&\&:=\hat\sigma_2^{(i,j,k)*}(\kappa_1)=\pi_{(k,j,i|k,j,i)} (\pi_{(k|k)}\pi_{(j,i|j,i)}+\pi_{(i|j)}\pi_{(k,j|k,i)})\cr
&\&{\hskip 90 pt}  +\pi_{(j,i|k,j)}\pi_{(k,i|j,i)}\pi_{(k,j|k,i)} -\pi_{(j,i|j,i)}\pi_{(k,i|k,i)}\pi_{(k,j|k,j)}.
\eea

It follows from Proposition \ref{prop:SOverI_IntegralDomain} that $\SS(\Gr_{\HH_+}(\HH))=\SS/\II$ is an integral domain. In the corresponding field of fractions we can solve the relations $\kappa_1=0$, $\kappa_2=0$,
$\kappa_3=0$ for $\pi_{(k,i|j,i)}$, $\pi_{(j,i|k,j)}$, $\pi_{(k,j|k,i)}$, respectively,
 and substitute the solutions into relation $\widetilde\kappa_1=0$ to get the following identity in the fraction field
\begin{equation}
(\pi_{(i|j)}\pi_{(j|k)}\pi_{(k|i)}-\pi_{(i|i)}\pi_{(j|j)}\pi_{(k|k)})\kappa_0=0.
\end{equation}
The first factor is nonzero, as can be seen by computing it in terms of determinants of homogeneous
coordinates on a sufficiently large finite Grassmannian $\Gr_{V_n}(\HH_{n,2n})$ (i.e. with $n\ge k+1$).

 Finally, using the fact that $\SS(\Gr_{\HH_+}(\HH))$ is an integral domain, 
it follows that the relation $\kappa_0=0$ holds in this ring and hence $\kappa_0\in\II$.
\end{proof}

Now fix  an element $w\in\Gr_{\HH_+}(\HH)$ and define the infinite family of functions on the sublattice $\Bb\subset\Ab$,  
consisting of the function 
\bea
\Hb:\Bb &\&\rightarrow\left(S_x^\times\right)^{-1}\Cbb[[\xb,\tb]], \cr
\Hb:\nb' &\& \mapsto H_w^{\nb',\emptyset}
\eea
corresponding to the null partition, together with infinitely many functions labelled by 
nonsymmetric hook partitions $(p|q)$,  $p\neq q$ with values
\bea
\Hb^{(p|q)}:\Bb&\&\rightarrow\left(S_x^\times\right)^{-1}\Cbb[[\xb,\tb]],\cr
\Hb^{(p|q)}:\nb'&\& \mapsto H^{\nb',(p|q)}_w,
\eea
Also define, for  $\vb'\in \Bb$ the lattice functions $\Hb _{\vb'}, \Hb^{(p|q)} _{\vb'}$  with shifted arguments 
having values:
\bea
\Hb _{\vb'}:\Bb &\&\ra \left(S_x^\times\right)^{-1}\Cbb[[\xb,\tb]],\cr
&\& \cr
\Hb_{\vb'}: \nb'  &\&\mapsto H^{\nb'+\vb', \emptyset}_w,  \cr
&\& \cr
\Hb^{(p|q)} _{\vb'}:\Bb &\&\ra \left(S_x^\times\right)^{-1}\Cbb[[\xb,\tb]],\cr
&\& \cr
\Hb^{(p|q)}_{\vb'}: \nb'  &\&\mapsto H^{\nb'+\vb', (p|q)}_w,  \quad
\forall \  \vb' \in \Bb,  \quad p, q \in \Nbb.
\eea
\begin{theorem}
\label{th:hexahedron_relations_signs}
For every ordered triple $(i,j,k)$ of nonnegative integers,  $i<j<k$, these lattice functions satisfy the system of recurrence relations
\begin{subequations}
\bea
\Hb\Hb^{(j|k)}\Hb_{\betab_i}^{(k|j)}&\&=-\Hb\Hb_{\betab_i}\Hb_{\betab_j+\betab_k} +\Hb_{\betab_i}\Hb_{\betab_j}\Hb_{\betab_k}-\Hb^{(i|j)}\Hb^{(j|k)}\Hb^{(k|i)},
\\
\Hb\Hb^{(k|i)}\Hb_{\betab_j}^{(i|k)}&\&=\Hb\Hb_{\betab_j}\Hb_{\betab_i+\betab_k} -\Hb_{\betab_i}\Hb_{\betab_j}\Hb_{\betab_k}+\Hb^{(i|j)}\Hb^{(j|k)}\Hb^{(k|i)},
\\
\Hb\Hb^{(i|j)}\Hb^{(j|i)}_{\betab_k} &\&=-\Hb\Hb_{\betab_k}\Hb_{\betab_i+\betab_j} +\Hb_{\betab_i}\Hb_{\betab_j}\Hb_{\betab_k}-\Hb^{(i|j)}\Hb^{(j|k)}\Hb^{(k|i)},
\\
\Hb\Hb^{(k|j)}\Hb^{(j|k)}_{\betab_i}&\&=-\Hb\Hb_{\betab_i}\Hb_{\betab_j+\betab_k} +\Hb_{\betab_i}\Hb_{\betab_j}\Hb_{\betab_k}-\Hb^{(j|i)}\Hb^{(k|j)}\Hb^{(i|k)},
\\
\Hb\Hb^{(i|k)}\Hb^{(k|i)}_{\betab_j}&\&=\Hb\Hb_{\betab_j}\Hb_{\betab_i+\betab_k} -\Hb_{\betab_i}\Hb_{\betab_j}\Hb_{\betab_k}+\Hb^{(j|i)}\Hb^{(k|j)}\Hb^{(i|k)},
\\
\Hb\Hb^{(j|i)}\Hb^{(i|j)}_{\betab_k}&\&=-\Hb\Hb_{\betab_k}\Hb_{\betab_i+\betab_j} +\Hb_{\betab_i}\Hb_{\betab_j}\Hb_{\betab_k}-\Hb^{(j|i)}\Hb^{(k|j)}\Hb^{(i|k)},\\
\Hb^2\Hb^{(i|j)}\Hb^{(j|k)}\Hb^{(k|i)}\Hb_{\betab_i+\betab_j+\betab_k}&\& =
 (\Hb^{(i|j)}\Hb^{(j|k)}\Hb^{(k|i)})^2-(\Hb^{(i|j)}\Hb^{(j|k)}\Hb^{(k|i)})
\cr
&\&\times\Big(2\Hb_{\betab_i}\Hb_{\betab_j}\Hb_{\betab_k}-\Hb(\Hb_{\betab_i} \Hb_{\betab_j+\betab_k}+\Hb_{\betab_j}\Hb_{\betab_i+\betab_k} +\Hb_{\betab_k}\Hb_{\betab_i+\betab_j})\Big)
\cr
&\&+(\Hb_{\betab_i}\Hb_{\betab_j} -\Hb\Hb_{\betab_i+\betab_j})(\Hb_{\betab_i} \Hb_{\betab_k} -\Hb\Hb_{\betab_i+\betab_k})(\Hb_{\betab_j}\Hb_{\betab_k} -\Hb\Hb_{\betab_j+\betab_k}).
\cr
&\&
\eea
\label{eq:hexahedronRecurrenceRawSigns}
\end{subequations}
\end{theorem}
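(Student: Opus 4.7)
The plan is to derive the seven recurrences directly by pulling back the seven elements $\kappa_1,\kappa_2,\kappa_3,\kappa_1^T,\kappa_2^T,\kappa_3^T,\kappa_0\in\II$ identified in Proposition \ref{prop:hexahedronRelations} under the homomorphism $\Phi_w^{\nb'}:\SS(\Gr_{\HH_+}(\HH))\to(S_\xb^\times)^{-1}\Cbb[[\xb,\tb]]$ supplied by Theorem \ref{th:LatticeOfPointsInInfiniteGrassmannian}. Since each $\kappa_r$ lies in the defining Pl\"ucker ideal $\II=\ker(\SS\to\SS/\II)$, its image $\Phi_w^{\nb'}(\kappa_r)$ vanishes in the fraction field $\Fbb_{\xb,\tb}$ for every $\nb'\in\Bb$, yielding an identity among the shifted evaluations $\{H_w^{\nb',\lambda}\}$, which is exactly a recurrence for the lattice functions $\Hb,\Hb_{\betab_i},\Hb^{(p|q)},\Hb_{\betab_i}^{(p|q)},\ldots$.

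The first step is to record the translation dictionary. Setting $\betab_i:=\alphab_i-\alphab_{-i-1}$ for $i\in\Nbb$, and unwinding the definition (\ref{eq:H_n0_w_lambda}) together with $\Phi_w^{\nb'}(\pi_\lambda)=H_w^{\nb',\lambda}$, one obtains
\begin{align*}
\Phi_w^{\nb'}\bigl(\pi_{(\emptyset|\emptyset)}\bigr)&=\Hb(\nb'), &
\Phi_w^{\nb'}\bigl(\pi_{(i|j)}\bigr)&=\Hb^{(i|j)}(\nb'),\\
\Phi_w^{\nb'}\bigl(\pi_{(i|i)}\bigr)&=\Hb_{\betab_i}(\nb'), &
\Phi_w^{\nb'}\bigl(\pi_{(k,j|k,j)}\bigr)&=\Hb_{\betab_j+\betab_k}(\nb'),\\
\Phi_w^{\nb'}\bigl(\pi_{(a,c|b,c)}\bigr)&=\Hb^{(a|b)}_{\betab_c}(\nb'), &
\Phi_w^{\nb'}\bigl(\pi_{(k,j,i|k,j,i)}\bigr)&=\Hb_{\betab_i+\betab_j+\betab_k}(\nb'),
\end{align*}
and analogous expressions for the transposed rank-two patterns $\pi_{(c,a|c,b)}$. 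Inserting these identifications into the cubic relation $\Phi_w^{\nb'}(\kappa_1)=0$ and isolating the term $\Hb\,\Hb^{(j|k)}\Hb^{(k|j)}_{\betab_i}$ produces the first equation of (\ref{eq:hexahedronRecurrenceRawSigns}); the triple $\kappa_1,\kappa_2,\kappa_3$ yields the first three cubic recurrences, while the transposed triple $\kappa_1^T,\kappa_2^T,\kappa_3^T$ gives the next three. Applying the same dictionary to the degree-seven polynomial relation $\kappa_0$ in (\ref{eq:Kappa0Relation}) yields, term by term, the final septic equation of (\ref{eq:hexahedronRecurrenceRawSigns}); no additional sign or normalization factor intervenes because the definition (\ref{eq:NormalizedEvaluationOfTauFunction}) of $H_w^{\nb}$ was engineered, via the prefactors in (\ref{H_transl_alpha_i}), so that $\Phi_w^{\nb'}$ is literally a ring homomorphism.

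The only delicate bookkeeping point -- which I expect to be the main (but routine) obstacle -- is the translation of the mixed rank-two coordinates $\pi_{(a,c|b,c)}$ and $\pi_{(c,a|c,b)}$ with $a\ne b$: here the associated lattice shift is $\alphab_a+\alphab_c-\alphab_{-b-1}-\alphab_{-c-1}=\betab_c+(\alphab_a-\alphab_{-b-1})$, which must be recognized as the composition of the basepoint shift $\betab_c$ with the hook shift $\alphab_a-\alphab_{-b-1}$ that produces $\Hb^{(a|b)}$. Once this single identification is confirmed, the seven claimed recurrences are obtained by transcription from Proposition \ref{prop:hexahedronRelations} with no further computation, and their validity at every $\nb'\in\Bb$ follows because Theorem \ref{th:LatticeOfPointsInInfiniteGrassmannian} provides a homomorphism $\Phi_w^{\nb'}$ for every lattice point.
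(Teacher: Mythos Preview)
Your proposal is correct and follows exactly the paper's own proof, which reads in its entirety: ``Combine Proposition \ref{prop:hexahedronRelations} with Theorem \ref{thm:plucker_lattice_homomorph}.'' You have simply unpacked this one-line argument by spelling out the dictionary $\Phi_w^{\nb'}(\pi_\lambda)=H_w^{\nb',\lambda}$ on the relevant Pl\"ucker coordinates and verifying that each $\kappa_r$ transcribes into the corresponding recurrence. One small notational remark: you set $\betab_i:=\alphab_i-\alphab_{-i-1}$ for $i\in\Nbb$, whereas the paper's earlier convention is $\betab_i:=\alphab_{i-1}-\alphab_{-i}$ for $i\in\Nbb^+$; your indexing is the one that makes the dictionary $\Phi_w^{\nb'}(\pi_{(i|i)})=\Hb_{\betab_i}(\nb')$ literal, and it is evidently the convention implicitly used in the theorem statement itself.
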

\begin{proof}
Combine Proposition \ref{prop:hexahedronRelations} with Theorem \ref{thm:plucker_lattice_homomorph}.
\end{proof}

Denote the sum of nonvanishing components of $\nb'$ as
\begin{equation}
\mathrm{height}(\nb'):=\sum_{j=-\infty}^{\infty}n'_j
\end{equation}
and suppose that
\begin{enumerate}
\item The values   of  $\Hb$ are known for all $\nb'\in\Bb$ with
$l\leq \mathrm{height}(\nb')\leq l+2$.
\item The values of the functions $\Hb^{(p,q)}$ are known for all $\nb'\in\Bb$ with 
$\mathrm{height}(\nb')=l$.
\end{enumerate}
If all known values are nonzero,  eqs.~(\ref{eq:hexahedronRecurrenceRawSigns}) allow us to determine 
the values  $H^{\nb'}_w$ for $\mathrm{height}(\nb')=l+3$ and the values  $H^{\nb', (p|q)}_w$ for $\mathrm{height}(\nb')=l+1$ as Laurent polynomials in the known values. However, as written, the recurrence relations 
(\ref{eq:hexahedronRecurrenceRawSigns}) contain negative coefficients.

However, if we restrict to elements $\nb'\in \Bb_3$ of the $3$-dimensional sublattice
\begin{equation}
\Bb_3:=\mathrm{Span}\{\betab_1,\betab_2,\betab_3\}\subset\Bb,
\end{equation}
we can eliminate the $(-)$ signs by defining the following values for all $\nb'\in\Bb_3$
\begin{equation}
\begin{aligned}
\Hc^{\nb'}:=&(-1)^{n'_1+n'_2+n'_3+n'_1n'_2+n'_1n'_3+n'_2n'_3}H^{\nb'}_w,\\
\Hc^{\nb', (x)}:=&(-1)^{(n'_1)^2+n'_2n'_3+(n'_3)^2}H^{\nb', (1|2)}_w,\\
\Hc^{\nb', (-x)}:=&(-1)^{(n'_1)^2+n'_2n'_3+(n'_3)^2}H^{\nb', (2|1)}_w,\\
\Hc^{\nb', (y)}:=&(-1)^{n'_1n'_3+(n'_3)^2}H^{\nb', (2|0)}_w,\\
\Hc^{\nb', (-y)}:=&(-1)^{n'_1n'_3+(n'_3)^2}H^{\nb', (0|2)}_w,\\
\Hc^{\nb', (z)}:=&(-1)^{n'_1n'_2+(n'_2)^2+(n'_3)^2}H^{\nb', (0|1)}_w,\\
\Hc^{\nb',(-z)}:=&(-1)^{n'_1n'_2+(n'_2)^2+(n'_3)^2}H^{\nb', (1|0)}_w.
\end{aligned}
\end{equation}

Denote the lattice translated values of the corresponding seven functions by
\begin{equation}
\Hc^{\nb'}_{\vb'}:=\Hc^{\nb'+\vb'},\quad \Hc^{\nb', (\cdot)}_{\vb}:=\Hc^{\nb'+\vb', (\cdot)},\quad \nb',\vb' \in\Bb_3.
\end{equation}
\begin{corollary}
\label{cor:hexahedron_evaluations_general}
For every $w\in\Gr_{\HH_+}(\HH)$ these lattice functions give  solutions of the following system of recurrence relations
\begin{subequations}
\bea
\Hbc\Hbc^{(x)}\Hbc_{\betab_1}^{(-x)}&\&=\Hbc\Hbc_{\betab_1}\Hbc_{\betab_2+\betab_3} +\Hbc_{\betab_1}\Hbc_{\betab_2}\Hbc_{\betab_3}+\Hbc^{(x)}\Hbc^{(y)}\Hbc^{(z)},
\\
\Hbc\Hbc^{(y)}\Hbc_{\betab_2}^{(-y)}&\&=\Hbc\Hbc_{\betab_2}\Hbc_{\betab_1+\betab_3} +\Hbc_{\betab_1}\Hbc_{\betab_2}\Hbc_{\betab_3}+\Hbc^{(x)}\Hbc^{(y)}\Hbc^{(z)},
\\
\Hbc\Hbc^{(z)}\Hbc^{(-z)}_{\betab_3}&\&=\Hbc\Hbc_{\betab_3}\Hbc_{\betab_1+\betab_2} +\Hbc_{\betab_1}\Hbc_{\betab_2}\Hbc_{\betab_3}+\Hbc^{(x)}\Hbc^{(y)}\Hbc^{(z)},
\\
\Hbc\Hbc^{(-x)}\Hbc_{\betab_1}^{(x)}&\&=\Hbc\Hbc_{\betab_1}\Hbc_{\betab_2+\betab_3} +\Hbc_{\betab_1}\Hbc_{\betab_2}\Hbc_{\betab_3}+\Hbc^{(-x)}\Hbc^{(-y)}\Hbc^{(-z)},
\\
\Hbc\Hbc^{(-y)}\Hbc_{\betab_2}^{(y)}&\&=\Hbc\Hbc_{\betab_2}\Hbc_{\betab_1+\betab_3} +\Hbc_{\betab_1}\Hbc_{\betab_2}\Hbc_{\betab_3}+\Hbc^{(-x)}\Hbc^{(-y)}\Hbc^{(-z)},
\\
\Hbc\Hbc^{(-z)}\Hbc^{(z)}_{\betab_3}&\&=\Hbc\Hbc_{\betab_3}\Hbc_{\betab_1+\betab_2} +\Hbc_{\betab_1}\Hbc_{\betab_2}\Hbc_{\betab_3}+\Hbc^{(-x)}\Hbc^{(-y)}\Hbc^{(-z)},\\
\Hbc^2\Hbc^{(x)}\Hbc^{(y)}\Hbc^{(z)}\Hbc_{\betab_1+\betab_2+\betab_3}&\&= (\Hbc^{( x)}\Hbc^{(y)}\Hbc^{(z)})^2
+(\Hbc^{(x)}\Hbc^{(y)}\Hbc^{(z)})  \cr
&\&\times \left(2\Hbc_{\betab_1}\Hbc_{\betab_2}\Hbc_{\betab_3} +\Hbc(\Hbc_{\betab_1} \Hbc_{\betab_2+\betab_3}+\Hbc_{\betab_2}\Hbc_{\betab_1+\betab_3} +\Hbc_{\betab_3}\Hbc_{\betab_1+\betab_2})\right)\cr
&\&+(\Hbc_{\betab_1}\Hbc_{\betab_2} +\Hbc\Hbc_{\betab_1+\betab_2})(\Hbc_{\betab_1} \Hbc_{\betab_3} +\Hbc\Hbc_{\betab_1+\betab_3})(\Hbc_{\betab_2}\Hbc_{\betab_3} +\Hbc\Hbc_{\betab_2+\betab_3}).\cr
&\&
\eea
\label{eq:hexahedronRecurrenceGr36}
\end{subequations}
\label{cor:hexahedronRecurrenceGr36}
\end{corollary}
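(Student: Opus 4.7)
The plan is to deduce Corollary \ref{cor:hexahedronRecurrenceGr36} directly from Theorem \ref{th:hexahedron_relations_signs} by a sign-twisting calculation. First I would instantiate Theorem \ref{th:hexahedron_relations_signs} with the triple of partition-indices $(i,j,k)=(0,1,2)$. Under the identification $H^{\nb',(l|l)}_w = H^{\nb'+\alphab_l-\alphab_{-l-1}}_w = H^{\nb'+\betab_{l+1}}_w$, the three lattice shifts $\betab_i,\betab_j,\betab_k$ that appear in (\ref{eq:hexahedronRecurrenceRawSigns}) may be read off as $\betab_1,\betab_2,\betab_3$, the generators of $\Bb_3$. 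Under the same identification, the six non-symmetric hook partitions $(0|1),(1|0),(0|2),(2|0),(1|2),(2|1)$ correspond respectively to $H^{\nb',(z)}_w$, $H^{\nb',(-z)}_w$, $H^{\nb',(-y)}_w$, $H^{\nb',(y)}_w$, $H^{\nb',(x)}_w$, $H^{\nb',(-x)}_w$, matching the untwisted factors that enter the Corollary's definitions of $\Hbc^{(\pm x)},\Hbc^{(\pm y)},\Hbc^{(\pm z)}$.

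Next, for each of the seven equations of (\ref{eq:hexahedronRecurrenceGr36}), I would substitute the sign-twisted definitions of $\Hbc$ and $\Hbc^{(\pm x)},\Hbc^{(\pm y)},\Hbc^{(\pm z)}$, and reduce the accumulated sign factors modulo $2$. Every monomial on either side then acquires a factor of the form $(-1)^{Q(\nb')}$ for a quadratic polynomial $Q$ in the components $n'_1,n'_2,n'_3$ of $\nb'\in\Bb_3$; the extra contribution from evaluating at shifted points $\nb'+\betab_i$ or $\nb'+\betab_i+\betab_j$ is controlled by the linear and bilinear parts of $Q$. Using $m^2\equiv m\pmod{2}$, a direct calculation should show that all terms in a given equation carry the \emph{same} overall sign factor, while the internal relative signs flip exactly the terms marked with a minus in (\ref{eq:hexahedronRecurrenceRawSigns}); after dividing out the common factor $(-1)^{Q(\nb')}$, one recovers (up to possible multiplication by $-1$ of both sides) the equation of Theorem \ref{th:hexahedron_relations_signs}.

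The main obstacle is purely bookkeeping. Seven recurrences must be verified, and the last one, for $\Hbc_{\betab_1+\betab_2+\betab_3}$, involves nine cubic blocks of factors whose signs must all be matched. This effort can be substantially reduced by exploiting two symmetries: the $\Zbb/3$ cyclic symmetry permuting $(\betab_1,\betab_2,\betab_3)$ together with $(x,y,z)$, which relates the first three recurrences to one another and likewise the next three; and the transposition symmetry $(p|q)\leftrightarrow(q|p)$, which relates each equation in the first group to its counterpart in the second. These symmetries reduce the verification to a single hook-cycle recurrence plus the single ``big'' $\Hbc_{\betab_1+\betab_2+\betab_3}$ equation, which must be checked by hand.
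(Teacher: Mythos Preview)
Your proposal is correct and matches the paper's approach: the paper gives no explicit proof of the Corollary, presenting it as an immediate consequence of Theorem \ref{th:hexahedron_relations_signs} after introducing the sign-twisted functions $\Hc^{\nb'}$, $\Hc^{\nb',(\pm x)}$, $\Hc^{\nb',(\pm y)}$, $\Hc^{\nb',(\pm z)}$. Your plan to instantiate the Theorem at $(i,j,k)=(0,1,2)$ (so that the shifts $\alphab_l-\alphab_{-l-1}=\betab_{l+1}$ land on $\betab_1,\betab_2,\betab_3$), substitute the sign-twisted definitions, and reduce the quadratic sign exponents modulo $2$ using $m^2\equiv m$ is exactly the implicit computation the paper leaves to the reader; your observation that the cyclic and transposition symmetries cut the bookkeeping down to two representative equations is a sensible economy.
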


Note that, when evaluated at lattice points $\nb'$,  eq.~(\ref{eq:hexahedronRecurrenceGr36})
can be understood  as containing two subcases: the first is when
\begin{equation}
\Hc^{\nb'},\;\left\{\begin{array}{ll}
\Hc^{\nb', (x)},\; \Hc^{\nb', (y)},\; \Hc^{\nb', (z)},&\text{and height}(\nb')\;\textrm{is even},\\[0.3em]
\Hc^{\nb', (-x)},\; \Hc^{\nb', (-y)},\; \Hc^{\nb', (-z)},&\text{and height}(\nb')\;\textrm{is odd},
\end{array}\right.
\end{equation}
the second, when the words {\em even} and {\em odd} are interchanged.
In the next section we show how these results may be reduced to the case of
Lagrangian Grassmannians.

\subsection{Kenyon-Pemantle recurrences for $\Gr_{\HH_+}^{\LL}(\HH,\omega)$}
\label{hexahedron_rels_lagrangian_grassmannian}

We now consider the Lagrangian counterpart of relations (\ref{eq:hexahedronRecurrenceGr36}), which were
 named ``hexahedron recurrence'' relations by R.~Kenyon and R.~Pemantle \cite{KePe2}. We will show that, for every element $w^0\in\Gr_{\HH_+}^\LL(\HH)$, suitably normalized evaluations of the KP $\tau$-function $\tau^{KP}_{w_0}\in\Cbb[[\tb]]$ give rise to solutions of the hexahedron relations. In particular, when the element $w_0\in\Gr_{\HH_+}^\LL(\HH)$ is such that the
 lattice evaluations of the formal power series $\tau^{KP}_{w_0}$ converge (e.g., for polynomial $\tau$-functions), we obtain a solution to the hexahedron recurrence in complex numbers.

Fix an element of the Lagrangian Grassmannian $w^0\in\Gr_{\HH_+}^\LL(\HH)$,  and define  four functions
 on the three-dimensional sublattice $\Bb_3\subset\Bb$
\begin{subequations}
\be
\hbc,\hbc^{(x)},\hbc^{(y)},\hbc^{(z)}:\Bb_3\rightarrow\left(S_y^\times\right)^{-1}\Cbb[[\yb,\tb']]
\ee
where, for $\nb'\in\Bb_3$, the values are
\bea
\hc^{\nb}:=&\&\mathcal T_{\mathfrak{sp}}(\Hc^{\nb'})=(-1)^{n'_1+n'_2+n'_3+n'_1n'_2+n'_1n'_3+n'_2n'_3}h^{\nb',\emptyset}_{w^0},\\
\hc^{\nb', (x)}:=&\&\mathcal T_{\mathfrak{sp}}(\Hc^{\nb', (\pm x)})=(-1)^{(n'_1)^2+n'_2n'_3+(n'_3)^2}h^{\nb',(1|2)}_{w^0},\\
\hc^{\nb', (y)}:=&\&\mathcal T_{\mathfrak{sp}}(\Hc^{\nb', (\pm y)})=(-1)^{n'_1n'_3+(n'_3)^2}h^{\nb',(0|2)}_{w^0},\\
\hc^{\nb', (z)}:=&\&\mathcal T_{\mathfrak{sp}}(\Hc^{\nb', (\pm z)})=(-1)^{n'_1n'_2+(n'_2)^2+(n'_3)^2}h^{\nb',(0|1)}_{w^0}.
\eea
\label{eq:mathfrakhDefinition}
\end{subequations}
Because of the Lagrangian conditions (\ref{eq: lagrange_cond_finite}) (for all $k$)   we have:
\be
h_{w^0}^{\nb',(1|2)}=h_{w^0}^{\nb',(2|1)}, \quad h_{w^0}^{\nb',(0|2)}= h_{w^0}^{\nb',(2|0)}, \quad h_{w^0}^{\nb',(1|0)}= h_{w^0}^{\nb',(0|1)} 
\ee
so we need not introduce $h_{w^0}^{\nb',(-x)}$, $h_{w^0}^{\nb',(-y)}$, $h_{w^0}^{\nb',(-z)}$ independently.
Denote the lattice translates of these  functions as
\be
\hbc_{\vb'},\hbc^{(x)}_{\vb'},\hbc^{(y)}_{\vb'},\hbc^{(z)}_{\vb'}:\Bb_3\rightarrow\left(S_y^\times\right)^{-1}\Cbb[[\yb,\tb']], \quad \vb'\in\Bb_3, 
\ee
with values
\begin{equation}
\hc^{\nb'}_{\vb'}:=\hc^{\nb'+\vb'},\quad \hc^{\nb', (\cdot)}_{\vb'}:=\hc^{\nb'+\vb', (\cdot)},\quad \nb',\vb'\in\Bb_3, \quad \cdot = x, y, z.
\end{equation}
Then as a specialization of Corollary \ref{cor:hexahedron_evaluations_general}, we obtain:
\begin{corollary}
\label{cor:hexahedron_tau_eval}
For any given element $w^0\in\Gr_{\HH_+}^\LL(\HH)$, the functions (\ref{eq:mathfrakhDefinition}) satisfy 
the system of recurrence equations, with positive coefficients:
\begin{subequations}
\bea
\hbc\hbc^{(x)}\hbc_{\betab_1}^{(x)}&\&=\hbc\hbc_{\betab_1}\hbc_{\betab_2+\betab_3} +\hbc_{\betab_1}\hbc_{\betab_2}
\hbc_{\betab_3}+\hbc^{(x)}\hbc^{(y)}\hbc^{(z)},
\\
\hbc\hbc^{(y)}\hbc_{\betab_2}^{(y)}&\&=\hbc\hbc_{\betab_2}\hbc_{\betab_1+\betab_3} +\hbc_{\betab_1}\hbc_{\betab_2}\hbc_{\betab_3}+\hbc^{(x)}\hbc^{(y)}\hbc^{(z)},
\\
\hbc\hbc^{(z)}\hbc^{(z)}_{\betab_3}&\&=\hbc
_{\betab_3}\hbc_{\betab_1+\betab_2} +\hbc_{\betab_1}\hbc_{\betab_2}\hbc_{\betab_3}+\hbc^{(x)}\hbc^{(y)}\hbc^{(z)},\\
 \hbc^2\hbc^{(x)}\hbc^{(y)}\hbc^{(z)}\hbc_{\betab_1+\betab_2+\betab_3} &\&= (\hbc^{(x)}\hbc^{(y)}\hbc^{(z)})^2+
 (\hbc^{(x)}\hbc^{(y)}\hbc^{(z)}) \cr
&\&\times \left(2\hbc_{\betab_1}\hbc_{\betab_2}\hbc_{\betab_3}+\hbc(\hbc_{\betab_1} \hbc_{\betab_2+\betab_3}+\hbc_{\betab_2}\hbc_{\betab_1+\betab_3} +\hbc_{\betab_3}\hbc_{\betab_1+\betab_2})\right)\cr
&\& +(\hbc_{\betab_1}\hbc_{\betab_2} +\hbc\hbc_{\betab_1+\betab_2})(\hbc_{\betab_1} \hbc_{\betab_3} +\hbc\hbc_{\betab_1+\betab_3})(\hbc_{\betab_2}\hbc_{\betab_3} +\hbc\hbc_{\betab_2+\betab_3}).\cr
&\&
\eea
\label{eq:hexahedronRecurrenceLagrangianGr36}
\end{subequations}
\end{corollary}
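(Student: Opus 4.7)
The plan is to push forward the seven sign-corrected identities (\ref{eq:hexahedronRecurrenceGr36}) of Corollary \ref{cor:hexahedronRecurrenceGr36}, which hold in $(S_{\xb}^\times)^{-1}\Cbb[[\xb,\tb]]$, along the surjective ring homomorphism $\TT_{\mathfrak{sp}}$ of (\ref{eq:LambdaExtended}). Since $\TT_{\mathfrak{sp}}$ is a ring homomorphism, each polynomial identity transports verbatim to the target ring $(S_{\yb}^\times)^{-1}\Cbb[[\yb,\tb']]$, with the images of the seven functions $\Hbc,\Hbc^{(\pm x)},\Hbc^{(\pm y)},\Hbc^{(\pm z)}$ as the new unknowns. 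By construction (\ref{eq:mathfrakhDefinition}), $\TT_{\mathfrak{sp}}(\Hc^{\nb'})=\hc^{\nb'}$; the nontrivial point is to identify both $\TT_{\mathfrak{sp}}(\Hc^{\nb',(x)})$ and $\TT_{\mathfrak{sp}}(\Hc^{\nb',(-x)})$ with the single Lagrangian object $\hc^{\nb',(x)}$, and analogously for the $y$ and $z$ variants.

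For these identifications I invoke Theorem \ref{th:LatticeOfPointsInInfiniteLagrangianGrassmannian}, which furnishes, for every $\nb'\in\Bb$, a homomorphism $\Psi_{w^0}^{\nb'}$ from the Lagrangian coordinate ring $\SS(\Gr^\LL_{\HH_+}(\HH,\omega))=\SS/\II^\LL$ sending $\pi_\lambda\mapsto h_{w^0}^{\nb',\lambda}$. Since $\II^\LL$ contains the Lagrange generators, $\pi_{(p|q)}=\pi_{(q|p)}$ in the quotient, so the pointwise symmetries
\[
h_{w^0}^{\nb',(1|2)}=h_{w^0}^{\nb',(2|1)},\qquad h_{w^0}^{\nb',(0|2)}=h_{w^0}^{\nb',(2|0)},\qquad h_{w^0}^{\nb',(0|1)}=h_{w^0}^{\nb',(1|0)}
\]
hold for every $\nb'\in\Bb_3$. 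A direct inspection of (\ref{eq:mathfrakhDefinition}) shows that the sign prefactors defining $\Hc^{\nb',(x)}$ and $\Hc^{\nb',(-x)}$ coincide, and similarly for the $y,z$ pairs. This yields $\TT_{\mathfrak{sp}}(\Hc^{\nb',(\pm x)})=\hc^{\nb',(x)}$, and likewise for $(\pm y),(\pm z)$.

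Under these identifications the seven pushed-forward identities consolidate into exactly the four positive-coefficient equations (\ref{eq:hexahedronRecurrenceLagrangianGr36}): the first and fourth of (\ref{eq:hexahedronRecurrenceGr36}) both collapse onto the first Lagrangian equation, the second and fifth onto the second, and the third and sixth onto the third, while the seventh, which contains only the product $\Hbc^{(x)}\Hbc^{(y)}\Hbc^{(z)}$ (no separate $(-x,-y,-z)$ block), maps directly to the fourth. All coefficients remain non-negative throughout. The only non-routine step is verifying that each of the three pairs of sign-distinguished equations collapses \emph{consistently}, i.e.\ yields the same target identity after identification; this reduces to a parity check in the exponents $n'_1,n'_2,n'_3$ built into (\ref{eq:mathfrakhDefinition}), and is the principal obstacle --- albeit an elementary one.
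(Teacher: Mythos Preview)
Your proposal is correct and follows essentially the same approach as the paper. The paper treats this corollary as an immediate specialization of Corollary~\ref{cor:hexahedronRecurrenceGr36}: one applies $\TT_{\mathfrak{sp}}$ to the seven identities (\ref{eq:hexahedronRecurrenceGr36}) and uses the Lagrangian symmetries $h_{w^0}^{\nb',(p|q)}=h_{w^0}^{\nb',(q|p)}$ (asserted just before the corollary) to collapse the $(\pm x),(\pm y),(\pm z)$ pairs, exactly as you describe. Your invocation of Theorem~\ref{th:LatticeOfPointsInInfiniteLagrangianGrassmannian} to justify those symmetries at every $\nb'\in\Bb_3$ is a clean way to make precise what the paper states informally; note that this route carries the implicit hypothesis $\tau_{w^0}^{KP}(\tb')\neq 0$ from that theorem, and that your final remark about a separate ``consistency'' parity check is unnecessary, since once the identifications $\Hbc^{(\pm x)}\mapsto\hbc^{(x)}$ etc.\ are made the paired equations are literally identical.
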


\subsection{Hyperdeterminantal relations}
\label{hyperdet_rec}

As shown in \cite{KePe2}, the hexahedron recurrence implies the hyperdeterminantal relations. Below we show that, for every element $w^0\in\Gr_{\HH_+}^\LL(\HH)$ of the Lagrangian Grassmanian, properly normalized evaluations of the KP $\tau$-function $\tau^{KP}_{w_0}\in\Cbb[[\tb]]$ give rise to a solution of the hyperdeterminantal relations in any dimensions.

The infinite-dimensional counterpart of the core hyperdeterminantal relations \cite{HoSt, Oed, AHH} is given by the following.
\begin{lemma}
For all triples $i,j,k\in\Nbb^+$ satisfying $i<j<k$, the following element belongs to the defining ideal of $\SS(\Gr_{\HH_+}^\LL(\HH,\omega))$
\begin{equation}
\begin{aligned}
\varkappa:=&\pi_{(\emptyset|\emptyset)}^2\pi_{(k,j,i|k,j,i)}^2 +\pi_{(i|i)}^2\pi_{(k,j|k,j)}^2 +\pi_{(j|j)}^2\pi_{(k,i|k,i)}^2 +\pi_{(k|k)}^2\pi_{(j,i|j,i)}^2\\
&-2\pi_{(\emptyset|\emptyset)}\pi_{(k,j,i|k,j,i)}\left(\pi_{(i|i)}\pi_{(k,j|k,j)} +\pi_{(j|j)}\pi_{(k,i|k,i)} +\pi_{(k|k)}\pi_{(j,i|j,i)}\right)\\
&-2\left(\pi_{(i|i)}\pi_{(j|j)}\pi_{(k,i|k,i)}\pi_{(k,j|k,j)} +\pi_{(j|j)}\pi_{(k|k)}\pi_{(j,i|j,i)}\pi_{(k,i|k,i)} +\pi_{(i|i)}\pi_{(k|k)}\pi_{(j,i|j,i)}\pi_{(k,j|k,j)}\right)\\
&+4\pi_{(\emptyset|\emptyset)}\pi_{(j,i|j,i)}\pi_{(k,i|k,i)}\pi_{(k,j|k,j)} +4\pi_{(i|i)}\pi_{(j|j)}\pi_{(k|k)}\pi_{(k,j,i|k,j,i)}.
\end{aligned}
\end{equation}
\label{lemm:HyperdeterminantalRelationFrobenius}
\end{lemma}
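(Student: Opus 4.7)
The plan is to reduce $\varkappa$ to a finite-dimensional statement, verify it on the affine big cell as a classical algebraic identity for principal minors of symmetric matrices, and then lift through the direct limit construction to $\SS/\II^\LL$. Observe first that $\varkappa$ involves only Pl\"ucker coordinates $\pi_\lambda$ with $\lambda=\lambda^T$ self-conjugate and of Frobenius rank $r\leq 3$, whose Frobenius indices lie in $\{i,j,k\}$. Taking $n=k+1$, all such Young diagrams fit inside the $n\times n$ square, so $\varkappa\in\SS_{n,2n}\subset\SS$. By the direct limit presentation \eqref{eq:DirectLimitRingLagrangian} and the injective homomorphism $\iota_n^\LL$ of \eqref{eq:CoordinateRingInclusionLagrangianFinite}, it will suffice to establish that $\varkappa\in\II^\LL_n$.

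Next I would apply the surjective homomorphism $\zeta^\LL_n$ from Section~\ref{nested_finite_lagrange} to the big cell. Under $\zeta^\LL_n$, each self-conjugate Pl\"ucker coordinate $\pi_{(\mathbf{a}|\mathbf{a})}$ of Frobenius rank $r$ is sent to the principal $r\times r$ minor of the symmetric coordinate matrix $M$ indexed by $(a_1,\dots,a_r)$; the symmetry $M_{pq}=M_{qp}$ is precisely the Lagrangian condition \eqref{eq:LagrangianConditionBigCellFinite} on the big cell. The image $\zeta^\LL_n(\varkappa)$ then becomes the $2\times 2\times 2$ hyperdeterminantal polynomial evaluated on the eight principal minors of the symmetric $3\times 3$ submatrix of $M$ indexed by $\{i,j,k\}$. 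This is precisely the classical identity of Kashaev \cite{Ka}, Holtz--Sturmfels \cite{HoSt}, and Oeding \cite{Oed}, so $\zeta^\LL_n(\varkappa)=0$ in $\OO(\Gr_{V_n}^{\LL,\emptyset}(\HH_{n,2n},\omega_n))$.

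To promote vanishing on the big cell to membership in the ideal, I would invoke the Lagrangian analogue of Proposition~\ref{prop:SOverI_IntegralDomain}: the ring $\SS_{n,2n}/\II^\LL_n$ is an integral domain, since the finite-dimensional Lagrangian Grassmannian is an irreducible projective variety. Consequently $\pi_\emptyset$ is a non-zero-divisor, and the composite of localization at $\pi_\emptyset$ with evaluation on the big cell is injective on the bihomogeneous component containing $\varkappa$, forcing $\varkappa\in\II^\LL_n$. A cleaner alternative, when available, is to invoke Corollary~3.20 of \cite{AHH}, where the same identity is already shown for the Pl\"ucker coordinates of any element of the finite-dimensional Lagrangian Grassmannian. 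Either way, the inclusion $\iota^\LL_n$ then yields $\varkappa\in\II^\LL$.

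The principal obstacle is the algebraic identity itself---the vanishing of the $2\times 2\times 2$ hyperdeterminant on the eight principal minors of a symmetric $3\times 3$ matrix---but this is classical and is the very reason the hyperdeterminantal relations enter the discussion. The remaining delicate point is sign bookkeeping: the Frobenius-index labels and the dualizing convention \eqref{eq:dualizing_basis}, together with the determinantal formula \eqref{eq:DeterminantFormulaPluckerCoordinatesFinite}, determine the precise signs with which the principal minors appear in $\zeta^\LL_n(\varkappa)$, and these must match the classical hyperdeterminantal polynomial exactly; a direct check of a few self-conjugate partitions of rank $1$, $2$, and $3$ suffices to pin down the conventions and confirm that $\varkappa$ is the right polynomial rather than a twisted variant.
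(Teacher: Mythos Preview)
Your argument is correct but follows a genuinely different route from the paper. The paper proceeds by an explicit syzygy: it writes
\[
\varkappa=\varrho_1^2+4\pi_{(k,i|k,j)}^2\,\varrho_2+4\bigl(\pi_{(i|i)}\pi_{(j|j)}-\pi_{(\emptyset|\emptyset)}\pi_{(j,i|j,i)}\bigr)\varrho_3,
\]
where $\varrho_2,\varrho_3$ are short Pl\"ucker forms up to linear Lagrange relations, and $\varrho_1$ is further decomposed as a sum of three four-term Pl\"ucker forms plus terms proportional to linear Lagrange relations such as $\pi_{(i|j)}-\pi_{(j|i)}$. This is entirely constructive: it exhibits $\varkappa$ as a concrete element of the ideal generated by the quadratic Pl\"ucker and linear Lagrange relations, without appealing to any irreducibility or to the classical hyperdeterminantal identity.

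Your approach instead passes to the big cell, invokes the Holtz--Sturmfels/Kashaev identity for principal minors of a symmetric $3\times3$ matrix, and then lifts back using integrality of $\SS_{n,2n}/\II^\LL_n$ and injectivity of localization at $\pi_\emptyset$. This is perfectly valid and more conceptual, but it imports two facts not established in the paper: the classical hyperdeterminantal identity (which you cite), and primality of $\II^\LL_n$. The latter is standard---the finite Lagrangian Grassmannian is irreducible and its ideal under the Pl\"ucker embedding is prime---but you should cite it rather than call it ``the Lagrangian analogue of Proposition~\ref{prop:SOverI_IntegralDomain}'', since the paper only treats the ordinary case. The trade-off is that the paper's proof is self-contained and gives an explicit certificate of ideal membership, whereas yours explains \emph{why} the relation holds (it is the hyperdeterminantal identity in disguise) at the cost of relying on outside input.
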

\begin{proof}
The element $\varkappa$ can be written as
\begin{equation}
\varkappa=\varrho_1^2 +4\pi_{(k,i|k,j)}^2\varrho_2 +4(\pi_{(i|i)\pi_{(j|j)}}-4\pi_{(\emptyset|\emptyset)}\pi_{(j,i|j,i)})\varrho_3,
\end{equation}
where
\begin{subequations}
\begin{eqnarray}
\varrho_1&:=&2\pi_{(i|j)}\pi_{(k,i|k,j)} +\pi_{(\emptyset|\emptyset)}\pi_{(k,j,i|k,j,i)} \\ &&-\pi_{(i|i)}\pi_{(k,j|k,j)}-\pi_{(j|j)}\pi_{(k,i|k,i)}+\pi_{(k|k)}\pi_{(j,i|j,i)}, \nonumber\\[0.3em]
\varrho_2&:=&\pi_{(i|j)}^2-\pi_{(i|i)}\pi_{(j|j)}+\pi_{(\emptyset|\emptyset)}\pi_{(j,i|j,i)}, \\[0.3em]
\varrho_3&:=&\pi_{(k,i|k,j)}^2-\pi_{(k,i|k,i)}\pi_{(k,j|k,j)}+\pi_{(k|k)}\pi_{(k,j,i|k,j,i)}.
\end{eqnarray}
\end{subequations}
The quadratic forms $\varrho_2$ and $\varrho_3$ coincide with (short) Pl\"ucker forms up to linear relations
and
\begin{equation}
\pi_{(i|j)}-\pi_{(j|i)},\;\pi_{(k,i|k,j)}-\pi_{(k,j|k,i)}\;\in\;\II^\LL.
\end{equation}
follow from the Lagrangian conditions (\ref{eq:LagrangianConditionKernelOmega}).
To show that $\varrho_1\in\II^\LL$, note that
\begin{equation}
\begin{aligned}
\varrho_1=&P_{i,j}+P_{i,k}+P_{j,k}\\
&+\pi_{(k,i|k,j)}(\pi_{(i|j)}-\pi_{(j|i)}) +\pi_{(j,i|k,j)}(\pi_{(i|k)}-\pi_{(k|i)})\\ &+\pi_{(j,i|k,i)}(\pi_{(k|j)}-\pi_{(j|k)})+\pi_{(j|k)}(\pi_{(j,i|k,i)}-\pi_{(k,i|j,i)})\\ &+\pi_{(i|k)}(\pi_{(k,j|j,i)}-\pi_{(j,i|k,j)}) +\pi_{(i|j)}(\pi_{(k,i|k,j)}-\pi_{(k,j|k,i)}),
\end{aligned}
\end{equation}
where
\begin{equation}
\begin{aligned}
P_{i,j}:=&\pi_{(k|k)}\pi_{(j,i|j,i)}-\pi_{(k|j)}\pi_{(j,i|k,i)}+\pi_{(k|i)}\pi_{(j,i|k,j)} -\pi_{(\emptyset|\emptyset)}\pi_{(k,j,i|k,j,i)},\\[0.3em]
P_{i,k}:=&-\pi_{(j|j)}\pi_{(k,i|k,i)}+\pi_{(j|k)}\pi_{(k,i|j,i)}+\pi_{(j|i)}\pi_{(k,i|k,j)} +\pi_{(\emptyset|\emptyset)}\pi_{(k,j,i|k,j,i)},\\[0.3em]
P_{j,k}:=&-\pi_{(i|i)}\pi_{(k,j|k,j)}-\pi_{(i|k)}\pi_{(k,j|j,i)}+\pi_{(i|j)}\pi_{(k,j|k,i)} +\pi_{(\emptyset|\emptyset)}\pi_{(k,j,i|k,j,i)}
\end{aligned}
\end{equation}
are four-term Pl\"ucker quadratic forms.
\end{proof}

Fix an element $w^0\in\Gr_{\HH_+}^\LL(\HH,\omega)$ of the infinite Lagrangian Grassmannian and 
define the function
\bea
\hb:\Bb &\&\rightarrow\left(S_y^\times\right)^{-1}\Cbb[[\yb,\tb']],\cr
&\& \cr
 \hb: \nb'&\& \mapsto h^{\nb'}_{w^0}, \quad \forall \  \nb'\in\Bb
\label{eq:hb_bold_n_prime_def}
\eea
on the infinite lattice $\Bb$, where $h^{\nb}_{w^0}$ is defined by (\ref{eq:LagrangianEvaluationTauFunction}) for any $\nb\in \Ab$.
 Denote the lattice functions with shifted argument by
\bea
\hb_\vb:\Bb&\&\rightarrow\left(S_y^\times\right)^{-1}\Cbb[[\yb,\tb']],\cr
&\& \cr
\hb_{\vb'}: \nb'&\& \mapsto h^{\nb'+\vb'}_{w^0}, \quad \forall \ \nb', \vb' \in \Bb.
\label{eq:hb_bold_n_prime_v_def}
\eea
\begin{proposition}
\label{hexahedron_Lagrangian_36}
 For any ordered triple $i,j,k\in\Nbb^+$ of nonnegative integers
 $i<j<k$, the hyperdeterminantal relations:\footnote{
 To compare with the notations of Proposition 3.19 and Corollary 3.20 in \cite{AHH}, we give the folllowing lexicon
 \be
 h^{\bf 0} = \sigma^{\bf 0}, \quad h^{\bf 0}_{\betab_i} = - \sigma^{\bf 0}_i, \quad h^{\bf 0}_{\betab_i +\betab_j} = - \sigma^{\bf 0}_{ij}, 
 \quad h^{\bf 0}_{\betab_i +\betab_j +\betab_k} = - \sigma^{\bf 0}_{ijk}.
 \ee
}
\begin{equation}
\begin{aligned}
&\hb^2\hb_{\betab_i+\betab_j+\betab_k}^2 +\hb_{\betab_i}^2\hb_{\betab_j+\betab_k}^2 +\hb_{\betab_j}^2\hb_{\betab_i+\betab_k}^2 +\hb_{\betab_k}^2\hb_{\betab_i+\betab_j}^2\\
&-2\hb\hb_{\betab_i+\betab_j+\betab_k}\left(\hb_{\betab_i}\hb_{\betab_j+\betab_k} +\hb_{\betab_j}\hb_{\betab_i+\betab_k} +\hb_{\betab_k}\hb_{\betab_i+\betab_j}\right)\\
&-2\left(\hb_{\betab_i}\hb_{\betab_j}\hb_{\betab_i+\betab_k}\hb_{\betab_j+\betab_k} +\hb_{\betab_j}\hb_{\betab_k}\hb_{\betab_i+\betab_j}\hb_{\betab_i+\betab_k} +\hb_{\betab_i}\hb_{\betab_k}\hb_{\betab_i+\betab_j}\hb_{\betab_j+\betab_k}\right)\\
&+4\hb\hb_{\betab_i+\betab_j}\hb_{\betab_i+\betab_k}\hb_{\betab_j+\betab_k} +4\hb_{\betab_i}\hb_{\betab_j}\hb_{\betab_k}\hb_{\betab_i+\betab_j+\betab_k} =0.
\end{aligned}
\end{equation}
\label{prop:HyperdeterminantalRecurrence}
are satisfied.
\end{proposition}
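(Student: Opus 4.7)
The plan is to obtain Proposition \ref{prop:HyperdeterminantalRecurrence} simply as the image, under the homomorphism $\Psi_{w^0}^{\nb'_0}$ of Theorem \ref{th:LatticeOfPointsInInfiniteLagrangianGrassmannian}, of the algebraic identity $\varkappa = 0$ already established in Lemma \ref{lemm:HyperdeterminantalRelationFrobenius}. Since $\varkappa \in \II^{\LL}$ lies in the defining ideal of the coordinate ring $\SS(\Gr_{\HH_+}^{\LL}(\HH,\omega))$, and $\Psi_{w^0}^{\nb'_0}$ is a ring homomorphism out of this ring, we immediately obtain $\Psi_{w^0}^{\nb'_0}(\varkappa) = 0$ in $(S_{\yb}^\times)^{-1}\Cbb[[\yb,\tb']]$ for every $\nb'_0 \in \Bb$. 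The task is then to recognize the resulting identity as the claimed hyperdeterminantal relation.

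The key observation is that every Pl\"ucker coordinate appearing in the expression for $\varkappa$ is labelled by a self-conjugate partition, namely one of
\[
 (\emptyset|\emptyset),\ (i|i),\ (j|j),\ (k|k),\ (j,i|j,i),\ (k,i|k,i),\ (k,j|k,j),\ (k,j,i|k,j,i).
\]
For any partition of the form $\lambda = (a_1,\dots,a_r|a_1,\dots,a_r)$, the shift defined in (\ref{eq:H_n0_w_lambda}) becomes
\[
 \sum_{l=1}^{r} \alphab_{a_l} - \sum_{l=1}^{r} \alphab_{-a_l-1} \;=\; \sum_{l=1}^{r} \betab_{a_l+1} \;\in\; \Bb,
\]
so it stays inside the odd sublattice. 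Consequently
\[
 \Psi_{w^0}^{\nb'_0}\bigl(\pi_{(a_1,\dots,a_r|a_1,\dots,a_r)}\bigr) \;=\; h_{w^0}^{\nb'_0 + \betab_{a_1+1} + \cdots + \betab_{a_r+1}} \;=\; \hb_{\betab_{a_1+1}+\cdots+\betab_{a_r+1}}(\nb'_0).
\]

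Substituting these identifications into the formula in Lemma \ref{lemm:HyperdeterminantalRelationFrobenius} turns $\Psi_{w^0}^{\nb'_0}(\varkappa)=0$ into exactly the $2\times 2 \times 2$ hyperdeterminantal identity in the statement, up to the harmless relabeling $(i,j,k) \mapsto (i-1,j-1,k-1)$ between Frobenius indices (which take values in $\Nbb$) and sublattice generator labels (which take values in $\Nbb^+$). Since the argument is valid for every base point $\nb'_0 \in \Bb$, the resulting equation holds as an identity of lattice functions on $\Bb$, which is precisely the statement of the proposition.

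The only real work lies in the preparatory lemma and theorem that are cited: proving $\varkappa \in \II^{\LL}$ by decomposing it as a combination of Pl\"ucker quadratic forms and Lagrange linear relations (done in Lemma \ref{lemm:HyperdeterminantalRelationFrobenius}), and establishing the existence of the family of homomorphisms $\Psi_{w^0}^{\nb'_0}$ (done in Theorem \ref{th:LatticeOfPointsInInfiniteLagrangianGrassmannian}). Once those are in hand, the present proof reduces to the careful bookkeeping described above; the one potential source of confusion, and thus the only step deserving of care, is the index shift between the Frobenius labelling of Pl\"ucker coordinates and the basis $\{\betab_i\}_{i\in\Nbb^+}$ of $\Bb$. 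No additional analytic or combinatorial input is required.
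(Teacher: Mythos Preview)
Your proof is correct and follows precisely the paper's own approach: the paper's proof consists of the single sentence ``This follows from combining Lemma \ref{lemm:HyperdeterminantalRelationFrobenius} with Theorem \ref{th:LatticeOfPointsInInfiniteLagrangianGrassmannian},'' and you have spelled out exactly what that combination entails, including the observation that all partitions in $\varkappa$ are self-conjugate so that their lattice shifts lie in $\Bb$, and the index shift between Frobenius labels and the generators $\betab_i$.
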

\begin{proof}
This follows from combining Lemma \ref{lemm:HyperdeterminantalRelationFrobenius} with Theorem \ref{th:LatticeOfPointsInInfiniteLagrangianGrassmannian}.
\end{proof}

 \bigskip
 \bigskip
\noindent
\small{ {\it Acknowledgements.}. This work was partially supported by the Natural Sciences and Engineering Research Council of Canada (NSERC).

 \bigskip
\noindent
\small{ {\it Data sharing.}
Data sharing is not applicable to this article since no new data were created or analyzed in this study.
\bigskip


 \newcommand{\arxiv}[1]{\href{http://arxiv.org/abs/#1}{arXiv:{#1}}}

\end{document}